\setlist[enumerate]{leftmargin=1cm}
\setlist[itemize]{leftmargin=0.5cm}
\setlist[enumerate]{nosep}
\setlist[enumerate]{nosep}
\setlist{nolistsep,leftmargin=20pt}
\crefname{section}{§}{§§}
\Crefname{section}{§}{§§}
\crefname{figure}{Fig}{Fig}
\Crefname{figure}{Fig}{Fig}
\newcommand{\namebase}{Chipmink}
\newcommand{\name}{\namebase\xspace}
\newcommand{\system}{\textsf{\namebase}\xspace}
\newcommand{\datamodel}{\textsf{ObjectGraph}\xspace}
\newcommand{\podgraph}{\textsf{PodGraph}\xspace}
\newcommand{\podurl}{\url{https://github.com/illinoisdata/pod}}
\newcommand{\repoexpbranchurl}{\url{https://github.com/illinoisdata/pod/tree/supawit/con}}
\newcommand{\dill}{\textsf{Dill}\xspace}
\newcommand{\shev}{\textsf{Shelve}\xspace}
\newcommand{\zosp}{\textsf{ZODB}\xspace}
\newcommand{\zodb}{\textsf{ZODB-Hist}\xspace}
\newcommand{\criu}{\textsf{CRIU}\xspace}
\newcommand{\exhaust}{\textsf{Exhaustive}\xspace}
\newcommand{\snp}{\textsf{Snapshot}\xspace}
\newcommand{\snz}{\textsf{Snapshot-LZ4}\xspace}
\newcommand{\snx}{\textsf{Snapshot-Xdelta}\xspace}
\newcommand{\bundle}{\textsf{BundleAll}\xspace}
\newcommand{\pga}{\system}
\newcommand{\pgaz}{\system-LZ4\xspace}
\newcommand{\pgcachez}{\textsf{OnlyAVF}\xspace}
\newcommand{\pgnoavf}{\textsf{OnlyCD}\xspace}
\newcommand{\pgcacheznoavf}{\textsf{NoCD/AVF}\xspace}
\newcommand{\pnv}{\textsf{SplitAll}\xspace}
\newcommand{\prand}{\textsf{Random}\xspace}
\newcommand{\pfl}{\textsf{TbH}\xspace}
\newcommand{\pfa}{\textsf{TbH}\xspace}
\newcommand{\pgz}{\textsf{LGA-0}\xspace}
\newcommand{\pgi}{\textsf{LGA-1}\xspace}
\newcommand{\pglga}{\textsf{LGA}\xspace}
\newcommand{\pgnostatic}{\textsf{OnlyAVL}\xspace}
\newcommand{\pgnoavl}{\textsf{OnlyASCC}\xspace}
\newcommand{\pglnoavlstatic}{\textsf{Sync}\xspace}
\newcommand{\pnjpickle}{\textsf{Dill:Neo4j}\xspace}
\newcommand{\pnjdelta}{\textsf{SplitAll:Neo4j}\xspace}
\newcommand{\pganj}{\textsf{\namebase:Neo4j}\xspace}
\newcommand{\buildats}{\texttt{buildats}\xspace}
\newcommand{\storesfg}{\texttt{storesfg}\xspace}
\newcommand{\itsttime}{\texttt{itsttime}\xspace}
\newcommand{\skltweet}{\texttt{skltweet}\xspace}
\newcommand{\aicode}{\texttt{ai4code}\xspace}
\newcommand{\agripred}{\texttt{agripred}\xspace}
\newcommand{\msciedaw}{\texttt{msciedaw}\xspace}
\newcommand{\ecomsmph}{\texttt{ecomsmph}\xspace}
\newcommand{\netmnist}{\texttt{netmnist}\xspace}
\newcommand{\rlactcri}{\texttt{rlactcri}\xspace}
\newcommand{\vaenet}{\texttt{vaenet}\xspace}
\newcommand{\tseqpred}{\texttt{tseqpred}\xspace}
\newcommand{\wordlang}{\texttt{wordlang}\xspace}
\newcommand{\fix}[2][]{#2}
\newcommand{\ignore}[1]{{}}
\newcommand{\cut}[1]{{}}
\definecolor{Acolor}{HTML}{944003}
\definecolor{Bcolor}{HTML}{7678ED}
\definecolor{Ccolor}{HTML}{213F80}
\definecolor{Dcolor}{HTML}{8ECAE6}
\definecolor{Ecolor}{HTML}{2A9D8F}
\definecolor{Fcolor}{HTML}{E9C46A}
\definecolor{Gcolor}{HTML}{F4A261}
\definecolor{Hcolor}{HTML}{E76F51}
\definecolor{cAmain}{HTML}{1f77b4}  
\colorlet{cAlight}{cAmain!25}
\definecolor{cBmain}{HTML}{ff7f0e}  
\colorlet{cBlight}{cBmain!25}
\definecolor{cCmain}{HTML}{2ca02c}  
\colorlet{cClight}{cCmain!25}
\definecolor{cDmain}{HTML}{d62728}  
\colorlet{cDlight}{cDmain!25}
\definecolor{cEmain}{HTML}{9467bd}  
\colorlet{cElight}{cEmain!25}
\definecolor{cFmain}{HTML}{8c564b}  
\colorlet{cFlight}{cFmain!25}
\definecolor{cGmain}{HTML}{e377c2}  
\colorlet{cGlight}{cGmain!25}
\definecolor{cZmain}{HTML}{030303}  
\colorlet{cZlight}{cZmain!25}
\colorlet{cZlightlight}{cZmain!5}
\definecolor{cPositivemain}{HTML}{2ca02c}  
\colorlet{cPositivelight}{cPositivemain!25}
\definecolor{cNegativemain}{HTML}{d62728}  
\colorlet{cNegativelight}{cNegativemain!25}
\definecolor{dillcolor}{HTML}{1f77b4}
\colorlet{dillcolorlight}{dillcolor!25}
\definecolor{shevcolor}{HTML}{2ca02c}
\colorlet{shevcolorlight}{shevcolor!25}
\definecolor{zospcolor}{HTML}{ff7f0e}
\colorlet{zospcolorlight}{zospcolor!25}
\definecolor{zodbcolor}{HTML}{e377c2}
\colorlet{zodbcolorlight}{zodbcolor!25}
\definecolor{criucolor}{HTML}{9467bd}
\colorlet{criucolorlight}{criucolor!25}
\definecolor{exhaustcolor}{HTML}{030303}
\colorlet{exhaustcolorlight}{exhaustcolor!25}
\definecolor{pgacolor}{HTML}{d62728}
\colorlet{pgacolorlight}{pgacolor!25}
\def\undercaptionspace{0mm}
\def\overtopfigurespace{-3mm}
\def\beforeeqspace{-1mm}
\def\normallabelsize{\footnotesize}
\def\smalllabelsize{\scriptsize}
\tikzset{
    barzero/.style={%
        draw=cZmain, fill=white,
        pattern color=cZmain, postaction={pattern=grid},
    },
    >=LaTeX
}
\tikzset{
    barone/.style={%
        draw=cAmain, fill=white,
        pattern color=cAmain, postaction={pattern=crosshatch},
    },
    >=LaTeX
}
\tikzset{
    bartwo/.style={%
        draw=cCmain, fill=white,
        pattern color=cCmain, postaction={pattern=crosshatch dots},
    },
    >=LaTeX
}
\tikzset{
    barthree/.style={%
        draw=cBmain, fill=white,
        pattern color=cBmain, postaction={pattern=north west lines},
    },
    >=LaTeX
}
\tikzset{
    barfour/.style={%
        draw=cGmain, fill=white,
        pattern color=cGmain, postaction={pattern=north east lines},
    },
    >=LaTeX
}
\tikzset{
    barfive/.style={%
        draw=cFmain, fill=white,
        pattern color=cFmain, postaction={pattern=grid},
    },
    >=LaTeX
}
\tikzset{
    barsix/.style={%
        draw=cEmain, fill=white,
        pattern color=cEmain, postaction={pattern=horizontal lines},
    },
    >=LaTeX
}
\tikzset{
    barmain/.style={%
        draw=pgacolor, fill=pgacolorlight,
        pattern color=pgacolor, postaction={pattern=crosshatch},
    },
    >=LaTeX
}
\tikzset{
    bardill/.style={%
        barone
    },
    >=LaTeX
}
\tikzset{
    barsnz/.style={%
        bartwo
    },
    >=LaTeX
}
\tikzset{
    barsnx/.style={%
        barthree
    },
    >=LaTeX
}
\tikzset{
    barshev/.style={%
        bartwo
    },
    >=LaTeX
}
\tikzset{
    barzosp/.style={%
        barthree
    },
    >=LaTeX
}
\tikzset{
    barzodb/.style={%
        barfour
    },
    >=LaTeX
}
\tikzset{
    barcriu/.style={%
        barsix
    },
    >=LaTeX
}
\tikzset{
    barpga/.style={%
        barmain
    },
    >=LaTeX
}
\tikzset{
    barpgaz/.style={%
        barmain, fill=pgacolor
    },
    >=LaTeX
}
\newcommand\vldbavailabilityurl{https://github.com/illinoisdata/pod}
\begin{document}

\title{\name: Efficient Delta Identification for Massive Object Graph}


\author[Supawit Chockchowwat, Sumay Thakurdesai, Zhaoheng Li, Matthew S. Krafczyk, Yongjoo Park]{Supawit Chockchowwat, Sumay Thakurdesai, Zhaoheng Li, Matthew S. Krafczyk, Yongjoo Park}
\affiliation{%
  \institution{University of Illinois Urbana-Champaign}
}
\email{{supawit2,sumayst2,zl20,mkrafcz2,yongjoo}@illinois.edu}







\begin{abstract}
Ranging from batch scripts to computational notebooks, modern data science tools rely on massive and evolving object graphs that represent structured data, models, plots, and more. Persisting these objects is critical, not only to enhance system robustness against unexpected failures but also to support continuous, non-linear data exploration via versioning. Existing object persistence mechanisms (e.g., Pickle, Dill) rely on complete snapshotting, often redundantly storing unchanged objects during execution and exploration, resulting in significant inefficiency in both time and storage. Unlike DBMSs, data science systems lack centralized buffer managers that track dirty objects. Worse, object states span various locations such as memory heaps, shared memory, GPUs, and remote machines, making dirty object identification fundamentally more challenging.

In this work, we propose a graph-based object store, named \system, that acts like the centralized buffer manager. Unlike static pages in DBMSs, persistence units in \system are dynamically induced by partitioning objects into appropriate subgroups (called \emph{\textbf{pods}}), minimizing expected persistence costs based on object sizes and reference structure. These pods effectively isolate dirty objects, enabling efficient partial persistence. Our experiments show that \system is general, supporting libraries that rely on shared memory, GPUs, and remote objects. Moreover, \system achieves up to 36.5$\times$ smaller storage sizes and 12.4$\times$ faster persistence than the best baselines in real-world notebooks and scripts.

\end{abstract}

\def\picwidth{0.18\linewidth}
\def\plotwidth{0.28\linewidth}

\pgfplotsset{
    compat=1.15,
    objcount/.style={
        solid,
        thick,
        draw=cAmain,
        fill=cAlight,
    },
}

\pgfplotstableread{figures/_data/numobjects.txt}{\tablenumobjects}

\begin{teaserfigure}
    \centering
    \centering
    \begin{subfigure}[b]{\picwidth}
        \adjustbox{trim=0mm 6mm 0mm 6mm}{
            \includegraphics[width=\linewidth]{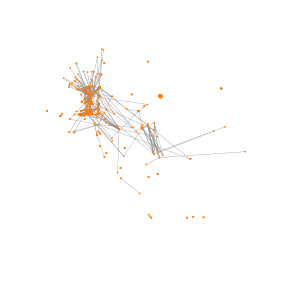}
        }
        \caption{Import libraries}
        \label{fig:intro_massive_cell1}
    \end{subfigure}%
    \begin{subfigure}[b]{\picwidth}
        \adjustbox{trim=0mm 6mm 0mm 6mm}{%
            \includegraphics[width=\linewidth]{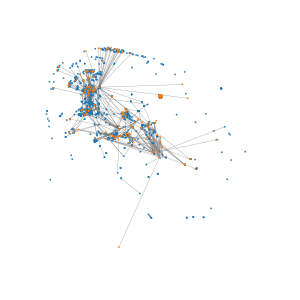}
        }
        \caption{Load \& clean data}
        \label{fig:intro_massive_cell2}
    \end{subfigure}%
    \begin{subfigure}[b]{\picwidth}
        \adjustbox{trim=0mm 6mm 0mm 6mm}{%
            \includegraphics[width=\linewidth]{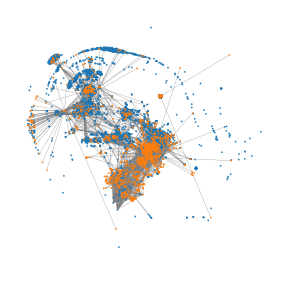}
        }
        \caption{Train \& test models}
        \label{fig:intro_massive_cell3}
    \end{subfigure}%
    \begin{subfigure}[b]{\picwidth}
        \adjustbox{trim=0mm 6mm 0mm 6mm}{%
            \includegraphics[width=\linewidth]{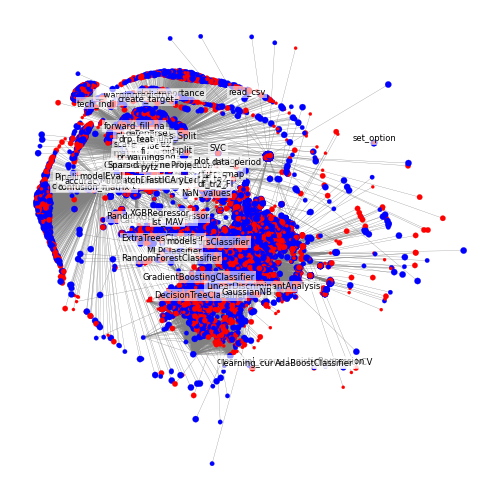}
        }
        \caption{All objects and refs.}
        \label{fig:intro_massive_all}
    \end{subfigure}%
    \begin{subfigure}[b]{\plotwidth}
        \centering
        \hspace{-3mm}
        \begin{tikzpicture}
        \begin{axis}[
            height=32mm,
            width=1.0\linewidth,
            xbar,
            bar width=1.0,
            axis y line=left,
            axis x line=bottom,
            xlabel=Number of Objects,
            xmode=log,
            xmajorgrids,
            xminorgrids,
            xmin=100,
            xmax=200000000,
            xtick={100,1000,10000,100000,1000000,10000000,100000000},
            ymin=-1,
            ymax=8.5,
            ytick=data,
            yticklabels from table={\tablenumobjects}{label},
            xlabel style={font=\scriptsize},
            every x tick label/.append style={font=\scriptsize},
            every y tick label/.append style={font=\tiny},
        ]
            \addplot[objcount] table [x=max,y=index] {\tablenumobjects};
        \end{axis}
        \end{tikzpicture}
        \vspace{-2mm}
        \caption{Object counts on real notebooks}
        \label{fig:intro_massive_numobjects}
    \end{subfigure}
    \vspace{-7mm}
    \caption{Underneath data systems lie a massive and evolving object graph. For example, \Cref{fig:intro_massive_cell1,fig:intro_massive_cell2,fig:intro_massive_cell3} illustrate three snapshots of object graphs in \buildats notebook with changed (\textcolor{cBmain}{orange}) and unchanged (\textcolor{cAmain}{blue}) objects. While \buildats consists of \fix{210k} objects supporting user-defined and library variables (\Cref{fig:intro_massive_all}), real notebooks may comprise tens of millions of objects (\Cref{fig:intro_massive_numobjects}).}
    \label{fig:intro_massive}
\end{teaserfigure}

\received{1 June 2025}
\received[revised]{1 October 2025}
\received[accepted]{15 November 2025}

\maketitle



\ifdefempty{\vldbavailabilityurl}{}{
\vspace{.3cm}
\begingroup\small\noindent\raggedright\textbf{PVLDB Artifact Availability:}\\
The source code, data, and/or other artifacts have been made available at \url{\vldbavailabilityurl}.
\endgroup
}


\pgfplotstableread{figures/_data/delta_count.txt}{\tabledeltacount}
\pgfplotstableread{figures/_data/exp_i.txt}{\tableexpi}
\pgfplotstableread{figures/_data//exp_i_script.txt}{\tableexpiscript}
\pgfplotstableread{figures/_data/exp_i_skltweet.txt}{\tableexpiskltweet}
\pgfplotstableread{figures/_data/exp_ii.txt}{\tableexpii}
\pgfplotstableread{figures/_data/exp_ii_cache.txt}{\tableexpiicache}
\pgfplotstableread{figures/_data/exp_iv.txt}{\tableexpiv}
\pgfplotstableread{figures/_data/exp_v.txt}{\tableexpv}
\pgfplotstableread{figures/_data/exp_vi.txt}{\tableexpvi}
\pgfplotstableread{figures/_data/exp_vii_mr.txt}{\tableexpviimr}
\pgfplotstableread{figures/_data/exp_c.txt}{\tableexpc}
\pgfplotstableread{figures/_data/exp_c_bd.txt}{\tableexpcbd}
\pgfplotstableread{figures/_data/exp_ci.txt}{\tableexpci}
\pgfplotstableread{figures/_data/exp_cii.txt}{\tableexpcii}
\pgfplotstableread[header=has colnames]{
index snp_storage_gb snp_load_s pga_storage_gb pga_load_s exec_s cexec_s snp_save_s snp_total_s pga_save_s pga_total_s nb cell
0 3.35e-07 0.0009870529174804688 2.065e-06 0.0021743178367614746 0.1531352996826172 0.1531352996826172 0.0006544589996337891 0.15378975868225098 0.0069959163665771484 0.16013121604919434 ai4code 0
1 1.974472474 43.23672133684158 1.972183878 51.127599596977234 47.306201219558716 47.45933651924133 80.59558463096619 127.9017858505249 114.68553972244263 161.99174094200134 ai4code 1
2 3.948944613 45.53468316793442 1.972184474 0.0015814900398254395 0.02712702751159668 47.48646354675293 80.64124870300293 80.66837573051453 0.009580850601196289 0.03670787811279297 ai4code 2
3 5.923416752 43.58277177810669 1.972184625 36.38897478580475 0.0006706714630126953 47.48713421821594 80.50858926773071 80.50925993919373 0.0008172988891601562 0.0014879703521728516 ai4code 3
4 7.993485937 55.629074454307556 2.086949873 51.15731918811798 1.0277631282806396 48.51489734649658 101.8916642665863 102.91942739486694 139.64471578598022 140.67247891426086 ai4code 4
5 10.063555122 55.91993290185928 2.086950617 46.054823875427246 0.0009210109710693359 48.51581835746765 102.37985420227051 102.38077521324158 0.0015506744384765625 0.0023708343505859375 ai4code 5
6 12.16726128 55.707202196121216 3.425605526 28.844497501850128 60.28975057601929 108.80556893348694 101.87223958969116 162.16199016571045 16.140039443969727 76.42979001998901 ai4code 6
7 14.297305625 56.4598702788353 3.451948742 0.05166393518447876 0.14554309844970703 108.95111203193665 102.1934506893158 102.3389937877655 61.88349795341492 62.029041051864624 ai4code 7
8 16.404086688 56.02906531095505 4.75861396 28.507317900657654 0.8531973361968994 109.80430936813354 102.05878448486328 102.91198182106018 62.69690656661987 63.55010390281677 ai4code 8
9 18.510867751 56.107735216617584 4.758614828 41.64581245183945 0.0009236335754394531 109.80523300170898 101.79775190353394 101.79867553710938 0.0010690689086914062 0.0018918514251708984 ai4code 9
10 20.634977385 56.19158488512039 5.388149007 21.5903303027153 31.514989376068115 141.3202223777771 101.83859729766846 133.35358667373657 8.09297251701355 39.607961893081665 ai4code 10
11 22.769918711 56.18927747011185 5.402060676 0.04163479804992676 0.08250999450683594 141.40273237228394 101.68121218681335 101.76372218132019 46.64308261871338 46.725592613220215 ai4code 11
12 24.894043067 55.8440899848938 5.475443704 21.36371660232544 1.045964241027832 142.44869661331177 102.121102809906 103.16706705093384 0.3141167163848877 1.3600809574127197 ai4code 12
13 27.03321151 58.10323113203049 7.331982096 38.996995747089386 679.3021986484528 821.7508952617645 112.65009546279907 791.9522941112518 82.1675705909729 761.4697692394257 ai4code 13
14 29.172393831 58.75176697969437 7.334584767 0.3450809717178345 0.21559858322143555 821.966493844986 112.22568106651306 112.4412796497345 101.78269743919373 101.99829602241516 ai4code 14
15 31.311576152 58.94751399755478 7.334585879 0.34907931089401245 0.0014295578002929688 821.9679234027863 111.91239261627197 111.91382217407227 0.0041332244873046875 0.005562782287597656 ai4code 15
16 33.452929728 58.39855456352234 7.335900662 0.3474043011665344 0.13564705848693848 822.1035704612732 112.38870525360107 112.52435231208801 1.001966953277588 1.1374485492706299 ai4code 16
17 35.594283304 58.36668932437897 7.335901774 0.346748411655426 0.0014183521270751953 822.1049888134003 111.83719420433044 111.83861255645752 0.0018742084503173828 0.003292560577392578 ai4code 17
18 37.73568016 58.93553984165192 7.337235604 0.3498193025588989 0.12553644180297852 822.2305252552032 111.95638298988342 112.0819194316864 0.0264892578125 0.15134119987487793 ai4code 18
19 39.877953195 58.69380933046341 7.338115327 0.3333693742752075 0.004534244537353516 822.2350594997406 113.107750415802 113.11228466033936 1.0315678119659424 1.036102056503296 ai4code 19
20 42.021350969 58.55374866724014 7.338341651 0.35207128524780273 0.06152081489562988 822.2965803146362 112.47156310081482 112.53308391571045 1.4568238258361816 1.5183446407318115 ai4code 20
}\tableexpipercellaicode
\pgfplotstableread[header=has colnames]{
index snp_storage_gb snp_load_s pga_storage_gb pga_load_s exec_s cexec_s snp_save_s snp_total_s pga_save_s pga_total_s nb cell
0 3.04e-07 0.0006308555603027344 1.793e-06 0.002447366714477539 0.1980276107788086 0.1980276107788086 0.0006225109100341797 0.19865012168884277 0.0013110637664794922 0.19729971885681152 ecomsmph 0
1 2.902244312 171.37395429611206 2.902300438 214.17941617965698 50.24350452423096 50.441532135009766 275.6112411022186 325.8547456264496 0.0007452964782714844 43.17206883430481 ecomsmph 1
2 10.406232604 449.7997958660126 7.504127144 336.3614387512207 69.93970727920532 120.38123941421509 727.0285053253174 796.9682126045227 452.64527010917664 522.584977388382 ecomsmph 2
3 21.320432331 607.026534318924 10.914482689 600.0008587837219 9.71443772315979 130.09567713737488 1063.9839000701904 1073.6983377933502 725.7520503997803 735.4664881229401 ecomsmph 3
4 32.236935388 608.7465415000916 10.928482685 1.3705220222473145 0.7371468544006348 130.8328239917755 1067.0191040039062 1067.7562508583069 1932.6081683635712 1933.3453152179718 ecomsmph 4
5 43.153438445 607.0466487407684 10.928483448 600.0008945465088 5.24735164642334 136.08017563819885 1066.8729219436646 1072.120273590088 0.08239102363586426 5.329742670059204 ecomsmph 5
6 54.069921283 607.0430178642273 10.944292999 1.3300039768218994 0.08001995086669922 136.16019558906555 1072.5848989486694 1072.6649188995361 0.0756218433380127 0.15564179420471191 ecomsmph 6
7 64.986404894 607.20134806633 10.944299498 1.373711109161377 0.011908531188964844 136.17210412025452 1070.4160823822021 1070.427990913391 3.1576507091522217 3.1695592403411865 ecomsmph 7
8 75.902888505 607.1117985248566 10.948210558 1.4220402240753174 0.06770133972167969 136.2398054599762 1072.6463935375214 1072.714094877243 3.0133004188537598 3.0810017585754395 ecomsmph 8
9 86.82327814 606.9791142940521 10.952354683 2.586744785308838 2.400681495666504 138.6404869556427 1075.2899713516235 1077.69065284729 3.208317518234253 5.608999013900757 ecomsmph 9
10 97.74483956 606.8752422332764 10.953535638 2.7525064945220947 0.212080717086792 138.8525676727295 1071.6078026294708 1071.8198833465576 13.699182033538818 13.91126275062561 ecomsmph 10
11 108.667435225 606.8369722366333 10.957796681 2.9021337032318115 0.04159259796142578 138.89416027069092 1070.8069784641266 1070.848571062088 10.910316705703735 10.951909303665161 ecomsmph 11
12 119.590252527 606.870760679245 10.961273478 2.9614834785461426 0.27152204513549805 139.16568231582642 1074.3373465538025 1074.608868598938 10.399435758590698 10.670957803726196 ecomsmph 12
13 130.513109444 606.9533965587616 10.964642116 2.833970546722412 0.17240571975708008 139.3380880355835 1072.0963790416718 1072.2687847614288 10.440646409988403 10.613052129745483 ecomsmph 13
14 141.43599271 606.8649001121521 10.964727052 0.029769420623779297 0.03543734550476074 139.37352538108826 1073.9039034843445 1073.9393408298492 10.003769636154175 10.039206981658936 ecomsmph 14
15 152.358898004 607.0997347831726 10.964822675 0.040612220764160156 0.02391505241394043 139.3974404335022 1070.285855293274 1070.3097703456879 0.11815285682678223 0.14206790924072266 ecomsmph 15
16 163.282727457 606.8845250606537 10.974006362 2.899566173553467 0.08645415306091309 139.4838945865631 1071.2843704223633 1071.3708245754242 0.3472607135772705 0.4337148666381836 ecomsmph 16
17 174.20655867 607.1266376972198 10.974017421 0.006746768951416016 0.003787517547607422 139.48768210411072 1066.4055318832397 1066.4093194007874 10.08500623703003 10.088793754577637 ecomsmph 17
18 185.130451849 606.9750366210938 10.974153191 0.05546283721923828 0.04643750190734863 139.53411960601807 1067.2053604125977 1067.251797914505 0.04308342933654785 0.08952093124389648 ecomsmph 18
19 196.054345888 606.8786737918854 10.974159086 0.005461454391479492 0.2729668617248535 139.80708646774292 1070.485859155655 1070.7588260173798 0.34137392044067383 0.6143407821655273 ecomsmph 19
20 206.978240787 607.0428247451782 10.974164879 0.003910064697265625 0.25751709938049316 140.0646035671234 1073.9708437919617 1074.2283608913422 0.1096036434173584 0.36712074279785156 ecomsmph 20
21 217.90213655 607.1954221725464 10.974170718 0.0037784576416015625 0.26295971870422363 140.32756328582764 1072.9854023456573 1073.2483620643616 0.05439615249633789 0.3173558712005615 ecomsmph 21
22 228.826033485 607.2926528453827 10.974173731 0.00308990478515625 0.0006043910980224609 140.32816767692566 1070.3738317489624 1070.3744361400604 0.0022363662719726562 0.002829313278198242 ecomsmph 22
23 239.751768129 607.1589138507843 10.97763206 2.903348445892334 0.07310032844543457 140.4012680053711 1070.4371807575226 1070.510281085968 0.004788398742675781 0.07788872718811035 ecomsmph 23
24 250.677503149 606.9367477893829 10.980861701 2.7407586574554443 0.02600383758544922 140.42727184295654 1075.2679543495178 1075.2939581871033 8.85227084159851 8.87827467918396 ecomsmph 24
25 261.603238924 607.2504937648773 10.980867011 0.004918098449707031 0.002523183822631836 140.42979502677917 1069.5775451660156 1069.5800683498383 8.53776741027832 8.540290594100952 ecomsmph 25
26 272.528975224 607.1269204616547 10.98237281 2.6849281787872314 0.03936934471130371 140.46916437149048 1067.6738913059235 1067.7132606506348 0.21318864822387695 0.25255799293518066 ecomsmph 26
27 283.454711524 607.0408802032471 10.982377413 0.004392385482788086 0.002106904983520508 140.471271276474 1069.0159838199615 1069.018090724945 8.077731847763062 8.079838752746582 ecomsmph 27
28 294.380448237 606.9185183048248 10.985607327 2.8149375915527344 0.061434268951416016 140.53270554542542 1073.154113292694 1073.2155475616455 0.2595198154449463 0.3209540843963623 ecomsmph 28
29 305.306799059 607.0242500305176 10.989454284 2.9444143772125244 0.014263153076171875 140.5469686985016 1068.122444152832 1068.1367073059082 8.518543481826782 8.532806634902954 ecomsmph 29
30 316.233150462 606.8439147472382 10.989459609 0.005728006362915039 0.0026237964630126953 140.5495924949646 1070.4735271930695 1070.4761509895325 8.025287628173828 8.02791142463684 ecomsmph 30
31 327.15987754 607.0612661838531 10.995502491 2.9710841178894043 0.01229238510131836 140.56188488006592 1077.0236337184906 1077.035926103592 0.19762587547302246 0.20991826057434082 ecomsmph 31
32 338.086605199 608.7936730384827 10.995508068 0.0045244693756103516 0.0023703575134277344 140.56425523757935 1073.9939379692078 1073.9963083267212 7.290108919143677 7.2924792766571045 ecomsmph 32
33 349.013332986 607.0444741249084 10.995512366 0.003996133804321289 0.0014526844024658203 140.5657079219818 1069.4004075527191 1069.4018602371216 0.003462076187133789 0.004537105560302734 ecomsmph 33
34 359.940095107 607.0280160903931 10.995723759 0.07353758811950684 0.03724479675292969 140.60295271873474 1072.5151998996735 1072.5524446964264 0.14583444595336914 0.18307924270629883 ecomsmph 34
35 370.880441417 606.9494361877441 11.644390793 64.298095703125 0.32659006118774414 140.92954277992249 1073.368286371231 1073.6948764324188 1.4158828258514404 1.7424728870391846 ecomsmph 35
36 381.821494707 607.1683654785156 11.649190507 1.5850005149841309 0.009598016738891602 140.93914079666138 1080.838350534439 1080.847948551178 134.5624840259552 134.5720820426941 ecomsmph 36
37 392.762687823 607.0325074195862 11.652491703 0.2960522174835205 13.880286931991577 154.81942772865295 1080.433336019516 1094.3136229515076 5.361552715301514 19.24183964729309 ecomsmph 37
38 403.703879931 607.3057806491852 11.652671747 0.008895158767700195 0.011314630508422852 154.83074235916138 1077.025134563446 1077.0364491939545 0.3573427200317383 0.36865735054016113 ecomsmph 38
39 414.64519782 606.9511206150055 11.652803757 0.008273601531982422 0.01174163818359375 154.84248399734497 1068.8498675823212 1068.8616092205048 0.006833791732788086 0.018575429916381836 ecomsmph 39
40 425.586625688 606.9210169315338 11.653134153 0.10722517967224121 0.13245081901550293 154.97493481636047 1079.5356059074402 1079.6680567264557 0.33168840408325195 0.4641392230987549 ecomsmph 40
}\tableexpipercellecomsmph

\pgfplotstableread[header=has colnames]{
index tid criu_median_load_s dill_median_load_s pga_median_load_s shev_median_load_s zodb_median_load_s zosp_median_load_s
0 0 1.881821870803833 0.0005543231964111328 0.0015287399291992188 0.03149402141571045 0.03013312816619873 0.0013794898986816406
1 1 1.921797513961792 0.0005755424499511719 0.0009714365005493164 0.03111732006072998 0.030176997184753418 0.0014289617538452148
2 2 8.864209651947021 2.997584342956543 4.543018817901611 3.808641195297241 3.6223431825637817 3.6599985361099243
3 3 8.926539421081543 3.0385777950286865 4.500296592712402 3.8185718059539795 3.639498710632324 3.691025137901306
4 4 9.150938510894775 3.0655293464660645 4.626583576202393 3.903646230697632 3.697868585586548 3.7688299417495728
5 5 0.9752490520477295 3.153893232345581 4.61650824546814 0.11667418479919434 3.710590124130249 3.7578314542770386
6 6 1.071216106414795 3.1231234073638916 0.10316526889801025 0.11716711521148682 3.692350387573242 3.765600085258484
7 7 1.1068174839019775 3.1309597492218018 0.14876842498779297 0.15738928318023682 3.721179246902466 3.783666253089905
8 8 1.0477728843688965 3.093266725540161 0.1511918306350708 0.1587291955947876 3.757538676261902 3.7724415063858032
9 9 1.115217924118042 3.1482908725738525 0.19296395778656006 0.19841885566711426 3.7824007272720337 3.796500563621521
10 10 1.1143195629119873 3.160284996032715 0.19624578952789307 0.19973528385162354 3.7783912420272827 3.8064916133880615
11 11 0.7895395755767822 3.118790626525879 0.0014477968215942383 0.03122580051422119 3.787308692932129 3.795392155647278
12 12 1.261591911315918 3.1138193607330322 0.4044684171676636 0.37738656997680664 3.9721025228500366 4.024314761161804
13 13 1.3915512561798096 3.2190945148468018 0.3073066473007202 0.30013716220855713 4.048603177070618 4.106717228889465
14 14 0.7383518218994141 3.282066583633423 0.0016498565673828125 0.03117954730987549 4.017289042472839 4.110533714294434
15 15 0.8333690166473389 3.253706693649292 0.3051217794418335 0.06766819953918457 4.016634821891785 4.069862365722656
16 16 0.7583694458007812 3.2207796573638916 0.2675137519836426 0.04067659378051758 4.009328365325928 4.085416555404663
17 17 0.7614717483520508 3.266284704208374 0.2763873338699341 0.045827507972717285 4.028452634811401 4.108308434486389
18 18 0.8849947452545166 3.266798257827759 0.25669050216674805 0.07579803466796875 4.035627722740173 4.093069314956665
19 19 0.890925407409668 3.2553484439849854 0.25595784187316895 0.07622694969177246 4.03943943977356 4.10927426815033
20 20 0.6928482055664062 3.259718179702759 0.0010116100311279297 0.030918002128601074 0.053473591804504395 0.04916870594024658
21 21 0.7055935859680176 3.4245100021362305 0.0010235309600830078 0.03093886375427246 0.05370807647705078 0.04903578758239746
22 22 0.8606975078582764 3.2734193801879883 0.2576233148574829 0.07586812973022461 4.045641541481018 4.096073389053345
23 23 0.734926700592041 3.2543888092041016 0.2766324281692505 0.04598712921142578 4.0601314306259155 4.066859602928162
24 24 0.8378310203552246 3.3090507984161377 0.25809526443481445 0.07626020908355713 4.03816556930542 4.087356686592102
25 25 11.381054639816284 3.9821794033050537 5.564544677734375 4.801591873168945 4.92271089553833 5.002801895141602
26 26 1.1289401054382324 4.06072998046875 0.15667295455932617 0.1928313970565796 4.917425155639648 5.021994352340698
27 27 0.7600576877593994 4.047946214675903 0.002584218978881836 0.03207099437713623 4.91260552406311 4.980514407157898
28 28 1.9437205791473389 4.199854135513306 0.6243723630905151 0.727629542350769 5.057949423789978 5.130547046661377
29 29 1.964015007019043 4.203962087631226 0.5777173042297363 0.5419561862945557 5.063809394836426 5.148876309394836
30 30 1.5685575008392334 4.2148332595825195 0.4179297685623169 0.38541948795318604 5.1061238050460815 5.1770607233047485
31 31 1.6041882038116455 4.224746942520142 0.4232161045074463 0.38451123237609863 5.060809016227722 5.139584302902222
32 32 2.7318685054779053 4.334736108779907 0.9288990497589111 1.0482112169265747 5.26790452003479 5.3352168798446655
33 33 1.6086244583129883 4.344067096710205 0.6254812479019165 0.3846778869628906 5.240716576576233 5.293531894683838
34 34 1.632239580154419 4.401421308517456 0.4594205617904663 0.44781315326690674 5.289361238479614 5.310911536216736
35 35 2.3836255073547363 4.566009521484375 0.8534893989562988 1.0705034732818604 5.4593212604522705 5.502958536148071
36 36 0.9395859241485596 4.576421022415161 0.03941452503204346 0.061691880226135254 5.400526762008667 5.493740439414978
37 37 1.1936419010162354 4.521757125854492 0.19738376140594482 0.1967930793762207 5.4277263879776 5.534760117530823
38 38 0.8627986907958984 4.567801237106323 0.028018951416015625 0.05274856090545654 5.422949194908142 5.544627785682678
39 39 0.7819046974182129 4.582802057266235 0.02744579315185547 0.05208420753479004 5.427946329116821 5.521815061569214
40 40 0.9296519756317139 4.614873170852661 0.08679592609405518 0.11512601375579834 5.467456817626953 5.592257618904114
41 41 1.395324945449829 4.676462173461914 0.3422485589981079 0.36627495288848877 5.559225916862488 5.60733163356781
42 42 0.9028027057647705 4.7012317180633545 0.13450932502746582 0.12828481197357178 5.532176375389099 5.529140830039978
43 43 0.9234514236450195 4.58858323097229 0.16284418106079102 0.08117222785949707 5.447508096694946 5.449789524078369
}\tableexpiploadskltweet
\pgfplotstableread[header=has colnames]{
index tid criu_median_load_s dill_median_load_s pga_median_load_s shev_median_load_s zodb_median_load_s zosp_median_load_s
0 0 1.939518690109253 0.0006239414215087891 0.0022264719009399414 0.012523293495178223 0.8406229019165039 0.0020301342010498047
1 1 102.41108703613281 35.03896713256836 51.04378032684326 43.58430087566376 44.97039294242859 44.69644367694855
2 2 2.00087571144104 35.644914388656616 0.0015922784805297852 0.011814236640930176 45.01551306247711 44.62732541561127
3 3 72.30765891075134 35.2493679523468 36.2918883562088 30.961753726005554 45.08497619628906 45.0302791595459
4 4 129.18931651115417 44.578389167785645 50.975181579589844 73.86263751983643 54.80252122879028 54.55973660945892
5 5 57.55409383773804 44.98522067070007 46.032209038734436 24.631693720817566 54.96726059913635 54.94218575954437
6 6 61.83716297149658 44.97174096107483 28.765637278556824 24.01186430454254 54.960904240608215 54.605020403862
7 7 2.2056143283843994 45.84068441390991 0.052304744720458984 0.08893454074859619 55.23283386230469 54.67962312698364
8 8 57.85855460166931 45.23249101638794 28.44775903224945 24.53201675415039 54.967108845710754 55.26812410354614
9 9 42.77252173423767 44.84162735939026 41.600051164627075 17.868030309677124 54.82775616645813 54.74940323829651
10 10 44.0695059299469 44.98510551452637 21.457839369773865 18.22492504119873 55.412927746772766 54.81646108627319
11 11 2.192182779312134 44.94423460960388 0.04121983051300049 0.0739145278930664 54.87626838684082 54.91429591178894
12 12 43.46233344078064 44.9602427482605 21.308376789093018 18.11756432056427 54.81588268280029 54.73226618766785
13 13 83.53056621551514 46.77193546295166 39.063990116119385 32.982483863830566 57.562811613082886 57.05765724182129
14 14 2.6405751705169678 47.50453281402588 0.3446786403656006 0.2802814245223999 57.519875288009644 57.36996579170227
15 15 2.6378138065338135 47.57096195220947 0.34895241260528564 0.26265621185302734 57.76117289066315 57.11663794517517
16 16 2.5848357677459717 47.17713785171509 0.3445608615875244 0.2812999486923218 57.49424850940704 57.016817927360535
17 17 2.6228177547454834 47.29309916496277 0.34598493576049805 0.26512837409973145 57.454046845436096 57.17354226112366
18 18 2.6238038539886475 47.232905626297 0.35022854804992676 0.2831718921661377 57.217323899269104 57.234763979911804
19 19 2.686689853668213 47.72394895553589 0.3336256742477417 0.30863702297210693 57.114792585372925 57.532785415649414
20 20 2.6062304973602295 47.09588694572449 0.3529047966003418 0.2838306427001953 57.30004954338074 57.02253031730652
}\tableexpiploadaicode
\pgfplotstableread[header=has colnames]{
index tid criu_median_load_s dill_median_load_s pga_median_load_s shev_median_load_s zodb_median_load_s zosp_median_load_s
0 0 3.1974802017211914 0.0007865428924560547 0.001712799072265625 0.006349802017211914 1.5633008480072021 0.0025217533111572266
1 1 6.064088582992554 0.788456916809082 1.6014938354492188 1.552990198135376 3.957690954208374 3.4267213344573975
2 2 6.203167200088501 0.7588796615600586 1.5544133186340332 1.5669288635253906 4.041912078857422 3.4072301387786865
3 3 6.492919683456421 0.7599835395812988 1.5546176433563232 1.5167725086212158 4.16685152053833 3.5667083263397217
4 4 11.182701349258423 1.5045068264007568 3.0225303173065186 3.1152408123016357 6.350940942764282 6.70999002456665
5 5 6.924327850341797 1.5510964393615723 3.0630030632019043 1.5663182735443115 6.433945417404175 6.767306327819824
6 6 6.481212854385376 1.5408058166503906 2.933522939682007 1.6256661415100098 6.332332611083984 6.773514270782471
7 7 6.148644685745239 1.5449271202087402 2.8493571281433105 1.2563960552215576 6.3438873291015625 6.723032236099243
8 8 2.490387439727783 1.6692538261413574 0.09247493743896484 0.10107612609863281 6.511790752410889 6.845372676849365
9 9 5.379019498825073 1.694075584411621 0.11136293411254883 0.12659502029418945 6.471327304840088 6.929198741912842
10 10 8.480592966079712 1.6357040405273438 2.8571226596832275 1.4273362159729004 6.455469608306885 6.898061275482178
11 11 5.730705261230469 1.6809327602386475 1.9017913341522217 0.46135663986206055 6.331913948059082 6.80410099029541
12 12 5.458991289138794 1.6845242977142334 1.8203487396240234 0.34636545181274414 6.5402209758758545 6.793839454650879
13 13 4.621214151382446 1.6770813465118408 1.5237703323364258 0.006776094436645508 6.466709852218628 6.817770719528198
14 14 4.698560953140259 1.7012312412261963 1.519118070602417 0.006524085998535156 6.312997579574585 6.810298919677734
}\tableexpiploadagripred
\pgfplotstableread[header=has colnames]{
index tid criu_median_load_s dill_median_load_s pga_median_load_s shev_median_load_s zodb_median_load_s zosp_median_load_s
0 0 1.9312965869903564 0.0009579658508300781 0.0027799606323242188 0.024735450744628906 1.996748447418213 0.0029850006103515625
1 1 2.0285959243774414 0.0007863044738769531 0.0009670257568359375 0.024090290069580078 2.004481315612793 0.002335071563720703
2 2 2.773174524307251 0.7729465961456299 0.9754238128662109 0.7935576438903809 2.7728662490844727 0.7931575775146484
3 3 3.0436179637908936 0.8152117729187012 1.0320243835449219 0.8849267959594727 2.8294167518615723 0.8457906246185303
4 4 7.114777326583862 1.4270992279052734 1.818016767501831 1.6092236042022705 3.478489637374878 1.5233361721038818
5 5 2.966756820678711 1.461587905883789 1.0978114604949951 0.8533406257629395 3.473618268966675 1.4896066188812256
6 6 16.6446430683136 3.4603374004364014 3.7551536560058594 3.9789085388183594 10.313087940216064 11.775361061096191
7 7 13.52112627029419 3.482276439666748 3.8026621341705322 3.943317174911499 10.06029486656189 10.523507118225098
8 8 16.43744921684265 3.451205015182495 3.733254909515381 3.9293789863586426 10.010751724243164 10.382175207138062
9 9 17.231325387954712 4.2214860916137695 4.786235809326172 6.0128867626190186 11.06238317489624 11.018104553222656
10 10 7.529056787490845 2.754483222961426 2.180349826812744 2.298475742340088 6.914752006530762 6.07240104675293
11 11 6.499216318130493 3.4332239627838135 1.6336984634399414 1.6999523639678955 8.796539068222046 8.64698076248169
12 12 9.704592943191528 3.341247320175171 2.622861862182617 2.8481996059417725 8.841068267822266 8.687938690185547
13 13 9.45313286781311 3.318420886993408 2.545938014984131 2.8134548664093018 9.019781827926636 8.692152738571167
14 14 3.6757779121398926 3.4203898906707764 2.2113699913024902 0.7571985721588135 8.794234275817871 8.618474245071411
15 15 2.0562446117401123 3.394347906112671 0.0011866092681884766 0.02356266975402832 3.6677660942077637 3.6167619228363037
16 16 5.640801906585693 4.169373035430908 2.573240041732788 2.156607151031494 9.649665832519531 9.483923435211182
17 17 3.6263539791107178 3.5245988368988037 2.1543102264404297 0.7340342998504639 8.899169921875 8.790399551391602
18 18 2.094895601272583 3.5290112495422363 0.0012378692626953125 0.023690462112426758 3.673957586288452 3.656696081161499
19 19 2.123666763305664 3.5108799934387207 1.8618831634521484 0.05689716339111328 8.969085216522217 8.794148206710815
20 20 21.140208959579468 3.509535551071167 0.7764825820922852 5.760538578033447 9.13151741027832 9.019148349761963
21 21 2.039689540863037 3.5211801528930664 0.003657102584838867 0.0248258113861084 9.100397825241089 8.919359683990479
22 22 17.664263486862183 3.6136438846588135 0.8795962333679199 5.867793321609497 9.196199417114258 9.046559810638428
23 23 17.097323894500732 3.5917892456054688 0.29132080078125 5.541682481765747 9.256198644638062 9.010636329650879
24 24 2.225092887878418 3.5659027099609375 0.05701279640197754 0.10874319076538086 9.132915258407593 9.632666826248169
25 25 4.8483710289001465 3.7035045623779297 1.118654727935791 1.0295131206512451 9.818684816360474 10.242636919021606
26 26 3.486332654953003 3.7023513317108154 0.380629301071167 0.4094994068145752 9.833276748657227 10.090211391448975
27 27 17.566730976104736 3.717702865600586 0.19414734840393066 nan 9.789427042007446 10.097873449325562
}\tableexpiploadmsciedaw
\pgfplotstableread[header=has colnames]{
index tid criu_median_load_s dill_median_load_s pga_median_load_s shev_median_load_s zodb_median_load_s zosp_median_load_s
0 0 1.8024194240570068 0.0005738735198974609 0.002447366714477539 0.025461196899414062 3.6113922595977783 0.021886110305786133
1 1 457.59813046455383 167.7987837791443 214.17941617965698 168.73454904556274 174.16518425941467 176.12939381599426
2 2 733.9792227745056 432.0712733268738 336.3614387512207 270.0965576171875 447.75597286224365 451.8991267681122
3 3 1784.2995917797089 643.0205161571503 600.0008587837219 666.5374846458435 657.3495051860809 679.1474597454071
4 4 12.656126976013184 648.6677832603455 1.3705220222473145 1.1150531768798828 660.4835059642792 678.0662972927094
5 5 316.27541494369507 645.1602110862732 600.0008945465088 114.07495355606079 657.6458518505096 677.1051669120789
6 6 12.4599769115448 646.7546262741089 1.3300039768218994 1.1107535362243652 663.1259565353394 675.0004034042358
7 7 12.675613403320312 646.2441987991333 1.373711109161377 1.1342852115631104 656.0397703647614 679.5897591114044
8 8 12.644606828689575 646.3274257183075 1.4220402240753174 1.1061439514160156 659.7559185028076 678.0159342288971
9 9 17.101531982421875 652.2139956951141 2.586744785308838 2.0584981441497803 661.0627334117889 682.3561434745789
10 10 17.203794240951538 647.5009350776672 2.7525064945220947 2.212735891342163 653.7619652748108 683.0042479038239
11 11 17.136858701705933 651.0555346012115 2.9021337032318115 2.219294309616089 656.8323862552643 679.7217106819153
12 12 18.372743368148804 651.8146712779999 2.9614834785461426 2.433727264404297 660.5494132041931 679.092592716217
13 13 17.503509044647217 647.011869430542 2.833970546722412 2.311325788497925 653.8381941318512 684.2821774482727
14 14 9.822913646697998 648.3660571575165 0.029769420623779297 0.08202791213989258 653.4190986156464 682.7051846981049
15 15 9.928304195404053 653.1773471832275 0.040612220764160156 0.09412598609924316 659.8387584686279 670.6445624828339
16 16 17.806105375289917 650.7347681522369 2.899566173553467 2.212862968444824 661.5713987350464 679.1689369678497
17 17 10.131132125854492 648.6633682250977 0.006746768951416016 0.03215456008911133 666.0267281532288 677.6141130924225
18 18 10.12227749824524 649.0713775157928 0.05546283721923828 0.12702083587646484 655.9127883911133 685.9472763538361
19 19 9.832016468048096 650.3158760070801 0.005461454391479492 0.026881694793701172 664.6085305213928 684.1606540679932
20 20 9.944578409194946 651.7680871486664 0.003910064697265625 0.026867151260375977 655.8393032550812 679.4189693927765
21 21 9.852567195892334 648.5499503612518 0.0037784576416015625 0.026894807815551758 658.4071309566498 682.2157533168793
22 22 1775.5320298671722 646.1856908798218 0.00308990478515625 657.2001173496246 653.0133366584778 680.3789200782776
23 23 16.906981945037842 644.4432644844055 2.903348445892334 2.2111799716949463 660.2404556274414 683.8646957874298
24 24 16.87443447113037 646.8070342540741 2.7407586574554443 2.2155091762542725 662.5565013885498 684.8757953643799
25 25 1785.3916375637054 643.2834656238556 0.004918098449707031 655.998456954956 659.3345165252686 685.0211727619171
26 26 17.00328755378723 644.1333668231964 2.6849281787872314 2.2116076946258545 658.566264629364 682.6850926876068
27 27 1813.2527203559875 644.9829428195953 0.004392385482788086 653.1196413040161 662.7895112037659 677.8578717708588
28 28 16.990028381347656 645.8369963169098 2.8149375915527344 2.2160356044769287 657.6808683872223 685.8413062095642
29 29 17.098278760910034 645.2635982036591 2.9444143772125244 2.324425458908081 664.8260145187378 682.481050491333
30 30 14.554795980453491 646.6442453861237 0.005728006362915039 0.027047157287597656 665.0498163700104 688.2077808380127
31 31 16.999372720718384 644.3291521072388 2.9710841178894043 2.2969789505004883 656.9879813194275 688.8478846549988
32 32 9.34920072555542 646.4774248600006 0.0045244693756103516 nan 665.3262791633606 683.4253554344177
33 33 10.576134204864502 646.1475331783295 0.003996133804321289 nan 654.4379386901855 686.0197982788086
34 34 9.653322696685791 649.8278207778931 0.07353758811950684 nan 655.6085796356201 680.2454881668091
35 35 nan 647.4206962585449 64.298095703125 nan 656.6571898460388 688.0126557350159
36 36 nan 649.969183921814 1.5850005149841309 nan 659.1612203121185 677.3618054389954
37 37 nan 647.8547818660736 0.2960522174835205 nan 661.4168071746826 671.7721478939056
38 38 nan 647.1872477531433 0.008895158767700195 nan 664.9969205856323 nan
39 39 nan 644.9720637798309 0.008273601531982422 nan 666.5449182987213 nan
40 40 nan 645.9301016330719 0.10722517967224121 nan 661.2438507080078 nan
}\tableexpiploadecomsmph

\input{figures/_data/exp_iv_waits}
\input{figures/_data/exp_iv_ecdf}

\section{Introduction}


Modern data science tools, ranging from batch scripts~\cite{pythonscript,rscript,taylor2024data} to computational notebooks~\cite{jupyter,jupyterlab,DBLP:journals/cse/PerezG07ipython,Baumer_2014rmarkdown,baumer2015rmarkdown,xie2018rmarkdowndefinitive,kishu,kishudemo,elasticnotebookdemo}, rely upon a massive and evolving object graph---a new and increasingly critical modality of data.
These underlying objects are used to represent many different forms of data,
    such as (semi-)structured data (e.g., \texttt{pandas}, \texttt{ElementTree}), 
        graphs (e.g., \texttt{networkx}),
        machine learning models (e.g., \texttt{pytorch}, \texttt{tensorflow}), 
        time series data, and so on.
With the ability to save these objects efficiently
    and restore them correctly,
    we can improve systems' resilience against unexpected failures~\cite{DBLP:conf/chi/ChattopadhyayPH20nbpainpoints,de2024bug,colabdisconnect}
    and even allow data scientists to keep track of their work
        and explore alternative hypotheses~\cite{DBLP:conf/chi/ChattopadhyayPH20nbpainpoints,kishuchi,kishuchilbw}.

However, existing persistence solutions are vastly infeasible.
For each saving, existing persistence libraries like Pickle~\cite{pickle}, Dill~\cite{dill1,dill2}, and ZODB~\cite{zodb} traverse these objects through their references and create a complete byte-representation snapshot of the object graph.
This means saving variables at multiple points during a script's execution may capture the same objects, even if they remain unchanged (e.g., blue nodes in \cref{fig:intro_massive_cell1,fig:intro_massive_cell2,fig:intro_massive_cell3}).
As a result, these methods \fix[R2W1,R2D1]{must handle \emph{2.96M} objects to save 42 snapshots of a real notebook \buildats (\Cref{fig:deltacount}), incuring an infeasible storage and time cost prohibiting safer data exploration.}

At the core, unlike DBMS'es~\cite{psql-buffer,mohan1992aries,mysql-buffer,upp,mojoframe,sieve}, the existing object persistence mechanisms lack \emph{delta identification} (i.e., identifying differences between object graphs) to avoid unnecessary saving.
Nonetheless, existing delta identifications only offer inaccurate, incomprehensive, and/or inefficient solutions.
Graph databases like Neo4j~\cite{besta2023demystifying} may inaccurately record nodes and edges after implicit updates where an execution appears static or access only a few variables but mutates nested shared structures, e.g., importing libraries (\Cref{fig:intro_massive_cell1}), data cleaning (\Cref{fig:intro_massive_cell2}), and model training (\Cref{fig:intro_massive_cell3}).
Manual approaches such as annotating objects to track~\cite{objecttracker,shelve,redisshelve,chest}, translating into database formats~\cite{actiannosql,db4o,DBLP:journals/cacm/LambLOW91objectstore,DBLP:conf/sigmod/OrensteinHMS92objectstoreqp,qstore,airindex}, or selectively saving some object types~\cite{npio,dfio,tfsave,torchsave,pysos,pickledb} do not comprehensively capture all objects, especially given the diversity of user-defined types and external library internals (e.g., \Cref{fig:intro_massive_all} shows 4 built-in, 63 external, and 37 user-defined variables).
Finally, classical algorithms like graph alignment~\cite{zeng2021comprehensive,yan2016short}, edit-distance algorithms~\cite{gao2010survey,neuhaus2006fast,piao2023computing}, and graph partitioning~\cite{DBLP:journals/pvldb/AbbasKCV18graphpartsurvey}
are inefficient for the millions of objects and references present in real sessions (\Cref{fig:intro_massive_numobjects}).

\def\subfigwidth{0.33\linewidth}
\def\subfigheight{32mm}

\pgfplotsset{
    ploadplot/.style={
        height=\subfigheight,
        width=\linewidth,
        axis lines=left,
        xlabel=State number,
        xlabel shift={-1mm},
        ymajorgrids,
        minor y tick num=1,
        minor x tick num=1,
        label style={font=\normallabelsize},
        every tick label/.append style={font=\normallabelsize},
    },
}
\tikzset{
    ploadsnp/.style={%
        draw=cNegativemain,
        mark=x,
        mark options={
            draw=cNegativemain,
            fill=cNegativelight,
            fill opacity=0.0,
            scale=0.75
        }
    },
    >=LaTeX
}
\tikzset{
    ploadpos/.style={%
        draw=cPositivemain,
        mark=*,
        mark options={
            draw=cPositivemain,
            fill=cPositivelight,
            fill opacity=0.0,
            scale=0.75
        }
    },
    >=LaTeX
}

\begin{figure}[t]
    \centering
    \begin{tikzpicture}
    \begin{axis}[
        ploadplot,
        ylabel=\# saved objects,
        ymode=log,
        ymajorgrids,
        yminorgrids,
        xmin=0,
        xmax=43,
        ymin=100000,
        ymax=25000000,
        ytick={100000, 1000000, 10000000, 100000000},
        legend columns=-1,
        legend style={at={(0.5,1.0)},anchor=north,align=center,/tikz/every even column/.append style={column sep=3mm},nodes={scale=0.65, transform shape}},
    ]
        \addplot[ploadsnp] table [x=nth,y=all_objects] {\tabledeltacount};
        \addlegendentry{Snapshot-based storage}
        \addplot[ploadpos] table [x=nth,y=delta_objects] {\tabledeltacount};
        \addlegendentry{With delta identification}
%
        \node[cPositivemain, anchor=south east, align=right, font=\scriptsize] at (axis cs:42, 200000) {Only 6.1\% of objects change (16.5$\times$ opportunity)};
    \end{axis}
    \end{tikzpicture}
    \vspace{-4mm}
    \caption{\fix[R2W1,R2D1]{Cumulative numbers of \buildats's objects to be saved by snapshot-based storage vs. delta storage.}}
    \label{fig:deltacount}
    \vspace{\undercaptionspace}
\end{figure}
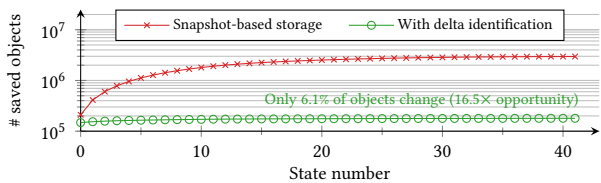

\paragraph{Goldilocks granularity of deltas: a proposal and challenges}
By contrasting these solutions, we propose a novel concept for efficiently identifying deltas between massive and evolving object graphs:
instead of object-level deltas and graph-level snapshots, one should find the Goldilocks (i.e., ``just right'') granularity of deltas by partitioning the object graph into subgraphs.
To identify dirty objects between two points in time, we compare the subgraphs; only non-matching subgraphs need to be persisted.

However, finding the right granularity comes with two new challenges: within-subgraph and cross-subgraph.
\textbf{\emph{(Within-subgraph)}} We must optimize object-subgraph assignments in terms of both node composition and subgraph size. The composition should be \emph{stable}: an object graph should consist of nearly the same set of subgraphs compared to the previous version, to minimize new subgraphs that need to be saved. The sizes should strike a \emph{balance}: small enough to isolate only the dirty objects, yet large enough to reduce per-subgraph management overheads.
\textbf{\emph{(Cross-subgraph)}} All references across subgraphs must be properly preserved when saving dirty subgraphs and restoring prior states.

Subgraph deltas enable an accurate, comprehensive, and efficient delta identification for massive object graphs. Compared to graph databases, it can track cross-subgraph dependencies to capture both object and reference changes, including implicit updates. Unlike manual annotation systems, it requires no user intervention and supports all serializable objects via its compatibility with generic protocols like Pickle~\cite{pickle}. In contrast to graph alignment and partitioning algorithms, we can trade off false 
positives (i.e., unchanged objects in changed subgraphs) for scalability.


\paragraph{Our approach}

With novel techniques to tackle the within-subgraph and cross-subgraph challenges, we present an algorithm called \emph{podding}: a structure-aware partitioning of object graphs into subgraphs called \emph{pods}.
To determine the optimal composition and size of pods, we formulate an optimization problem that aims to minimize expected persistence costs
    considering the properties of object graphs including both node-specific properties (e.g., object size) and structural attributes (e.g., fanout).
This principled approach 
    naturally avoids extreme cases
        such as splitting
         or bundling all the objects.
    
Second, pods are designed to correctly preserve cross-pod edges
    by distinguishing local references and cross-pod references.
Cross-pod references can be considered virtual addresses
    that are later resolved to appropriate local references
        when related objects need to be put together during restoration.
\cut{
    Pod-level dependencies, uniquely arising in our setting,
        are also automatically tracked by pods
            to ensure those cross-pod references are valid during restoration.
}


Combining these techniques,
    we introduce \system, 
        a system designed to serve as an off-the-shelf persistence library. 
Aligning with the latest trends in data science and machine learning,
    \system focuses on Python-based ecosystem,
        supporting 100\% libraries empirically.
This includes the libraries for 
        data analytics (e.g., \texttt{pandas}, \texttt{arrow}),
        machine learning (e.g., \texttt{pytorch}, \texttt{tensorflow}, \texttt{scikit-learn}),
        distributed computing (e.g., \texttt{ray}),
        and data visualization (e.g., \texttt{matplotlib}, \texttt{seaborn}).
As a result, \system can fully replace existing generic persistence tools
        such as Pickle~\cite{pickle},
            Dill~\cite{dill1,dill2}, and Cloudpickle~\cite{cloudpickle},
delivering significant improvements in both \fix[R2W1,R2D1]{time (up to 12.4$\times$) and storage (up to 36.5$\times$)} efficiency.

\paragraph{Contribution} 
This work makes the following contributions:
\begin{enumerate}
    \item \emph{Partial saving and loading} through podding, change detector, synonym resolver, and active variable filter (\cref{sec:podding}),
    \item \emph{Podding optimization} via a learned greedy algorithm (\cref{sec:podding-optimization}), 
    \item And, \emph{Asynchronous saving} safeguarded by active variable locking and static allowlist-based static code checker (\cref{sec:async}).
\end{enumerate}

\noindent
We empirically compare \system to existing systems/libraries in \cref{sec:exp}.
Lastly, we discuss related work and future opportunities (\cref{sec:related}). 

\section{Motivation}
\label{sec:motivation}



\cut{
    This section motivates graph-based deltas (\cref{sec:motivation-partial}) in storing objects.
    Next, it discusses related technical challenges (\cref{sec:motivation-persist}).
    
    \subsection{A Case for Structure-aware Deltas}
    \label{sec:motivation-partial}
}

Practical object persistence needs structure-aware deltas to reduce storage and time costs,
otherwise infeasible through snapshotting and/or classical byte-string delta compression.

\paragraph{Limitation of snapshotting} Snapshotting (e.g., \dill and \zosp) stores and recovers a given state as a whole on disk.
Suppose \texttt{dataset} is 10.1~GB large while \texttt{imputer} and \texttt{model} are 0.1~GB each, \emph{snapshotting all three states} would take up $10.1 + (10.1 + 0.1 + 0.1) + (10.1 + 0.1 + 0.1)$ = \emph{30.7~GB} space and proportional time to process them. Furthermore, if the user would like to ``study the distribution of rows with missing values in \texttt{dataset} 9'' (Query1) would require loading the entire 10.1~GB snapshot.
\cut{
    while ``comparing the performance between \texttt{model} trained with an imputed partition at 10~AM and \texttt{model} trained with latest data at 11~AM'' (Query2) would require loading two snapshots at 10~AM and 11~AM totaling 20.6~GB of data to load
}
That is, snapshotting can be inefficient for both saving and loading objects.


\paragraph{Limitation of byte-level deltas} Delta compression algorithms like LZ77~\cite{ziv1977universallz77}, xdelta~\cite{macdonald2000filexdelta}, and others~\cite{hunt1998delta,suel2002zdelta,xia2014ddelta,xia2015edelta} encode the differences of \emph{target byte string} over \emph{reference byte string(s)}~\cite{suel2019delta}.
For example, block move-based approaches~\cite{tichy1984stringblockmove} replace fixed-size blocks of byte-string deltas between the target and reference.
While delta compression may reduce storage usage, it is slower than snapshotting \fix{because doing so} requires both serializing the entire namespace and compressing the byte string each time. Meanwhile, loading involves reading both reference and target-delta byte strings, applying the delta, and deserializing the entire namespace. Determining the \emph{reference namespace byte string} also remains an open question.

\cut{
    \def\subfigwidth{0.49\linewidth}
\def\subfigheight{30mm}

\pgfplotsset{
    barplot/.style={
        height=\subfigheight,
        width=1.0\linewidth,
        ybar,
        bar width=0.24,
        axis lines=left,
        ymajorgrids,
        log origin=infty,
        enlarge x limits=0.15,
        ytick={0.01, 0.1, 1, 10, 100, 1000, 10000},
        xtick={data},
        xticklabels={},
        nodes near coords,
        point meta=explicit symbolic,
        nodes near coords style={font=\scriptsize, align=left, rotate=90, anchor=west},
        legend columns=2,
        legend style={at={(0.03,1.0)},anchor=north west},
        label style={font=\normallabelsize},
        every tick label/.append style={font=\normallabelsize},
    },
}

\tikzset{
    hbrace/.style={%
        draw=pgacolor,
        anchor=south,
        decorate,
        decoration={brace,amplitude=3pt,raise=1.5ex},
    },
    >=LaTeX
}

\tikzset{
    hbracetext/.style={%
        midway,
        yshift=3mm,
        anchor=south,
        inner xsep=0,
        align=center,
        text=pgacolor,
        font=\scriptsize,
    },
    >=LaTeX
}

\begin{figure}[t]
    \centering
    \begin{subfigure}[b]{\linewidth} \centering
        \begin{tikzpicture}
        \begin{axis}[
            ybar,
            ticks=none,
            height=20mm,
            width=\linewidth,
            hide axis,
            xmin=10,  
            xmax=50,
            ymin=0,
            ymax=0.4,
            area legend,
            legend columns=3,
            legend style={at={(0.0,0.0)},anchor=south,align=center,/tikz/every even column/.append style={column sep=3mm},nodes={scale=0.75, transform shape}},
        ]
            \node[align=center, opacity=1] {
                \addlegendimage{bardill}
                \addlegendentry{\snp}
                \addlegendimage{barsnx}
                \addlegendentry{\snx}
                \addlegendimage{barpga}
                \addlegendentry{\pga (Ours)}
            };
        \end{axis}
        \end{tikzpicture}
    \end{subfigure}
    \begin{subfigure}[b]{\subfigwidth}
        \begin{tikzpicture}
        \begin{axis}[
            barplot,
            ylabel=Storage (GB),
            ymode=log,
            ymin=0.01,
            ymax=150,
        ]
            \addplot[bardill] table [x=index,y=snp_storage_gb] {\tableexpiskltweet};
            \addplot[barsnx] table [x=index,y=snx_storage_gb] {\tableexpiskltweet};
            \addplot[barpga] table [x=index,y=pga_storage_gb] {\tableexpiskltweet};
            
        \end{axis}
        \end{tikzpicture}
        \vspace{-5mm}
        \caption{Total storage size}
        \label{fig:intro_delta_storage}
    \end{subfigure}
    \begin{subfigure}[b]{\subfigwidth}
        \begin{tikzpicture}
        \begin{axis}[
            barplot,
            ylabel=Save Time (s),
            ymode=log,
            ymin=0.1,
            ymax=1500,
        ]
            \addplot[bardill] table [x=index,y=snp_avg_save_s] {\tableexpiskltweet};
            \addplot[barsnx] table [x=index,y=snx_avg_save_s] {\tableexpiskltweet};
            \addplot[barpga] table [x=index,y=pga_avg_save_s] {\tableexpiskltweet};
            
        \end{axis}
        \end{tikzpicture}
        \vspace{-5mm}
        \caption{Average saving time}
        \label{fig:intro_delta_save}
    \end{subfigure}
    \vspace{-3mm}
    \caption{Object-aware deltas reduce storage and time costs.}
    \label{fig:intro_delta}
    \vspace{\undercaptionspace}
\end{figure}
}

\paragraph{Our approach: structure-aware deltas} Instead, structure awareness can \fix{both} reduce the storage and time overheads significantly. Suppose an object store detects that \texttt{imputer} only imputes 5\% of the dataset (0.5~GB), the storage would only need to store the original dataset (10.1~GB), the imputed dataset (0.5~GB), \texttt{imputer} (0.1~GB), and two versions of \texttt{model} (0.1~GB each), totaling $10.1 + (0.5 + 0.1 + 0.1) + 0.1$ = \emph{10.9~GB}---almost a $3\times$ reduction over snapshotting. Being aware of unrelated objects, the object store could filter out and bypass serializing those objects. Moreover, structure-aware deltas would allow precisely loading the relevant parts, e.g., when \texttt{model} points to 25\% of the dataset, answering Query1 and Query2 would load only 10.1~GB and $(2.5 + 0.1) + (2.5 + 0.1)$ = 5.2~GB respectively.



\cut{
    \subsection{Overcoming Persistence Challenges}
    \label{sec:motivation-persist}
    
    \system's novelty contributes to addressing the following persistence challenges for partial object stores.
    While other challenges such as simplifying programming, satisfying orthogonal persistence, and supporting schema evolution are important, they are orthogonal and addressed by existing works like Pickle~\cite{pickle}, Dill~\cite{dill1,dill2}, and Cloudpickle~\cite{cloudpickle} which \system is built upon.
    
    \paragraph{Identifying dirty objects} Oftentimes, only a few objects mutate between code executions. Simply saving the entire namespace or saving each object individually would result in huge storage and processing overheads. Per-object tracking~\cite{eliot1995expressing} would incur execution penalties~\cite{atkinson1995orthogonallyprinciple}. Whereas many mitigate this issue at a lower level in other settings such as memory paging~\cite{soukup1994taming,mmap,biliris1993makingpersistpointer,aritsugi1995severalpersistpointer,soukup2014fundamentals} or relational databases' buffer pools, such a mechanism is missing in the data science tools, since objects there might reside in different locations beyond the program's memory (i.e., GPUs and remote machines).
    To address this, we introduce a dynamic paging mechanism we refer to as \emph{podding} that gathers and groups objects optimally into page-like persistence units, called \emph{pods} (\cref{sec:podding-optimization}).
    
    \paragraph{Maintaining object references} Storing serialized objects across pods naively through existing serialization libraries could break the relationships between them. To overcome this limitation, we need to define a swizzling and unswizzling method that translates in-memory references to/from references across current and past pods. Inspired by operating systems' virtual address space that decouples local and physical (global) addresses~\cite{tanenbaum2009modern}, \system seamlessly translates object references through \emph{virtual memo space} to build upon Pickle's memo-ID-based swizzling protocol for within-pod and across-pod references. To manage references to past pods, \system implements \emph{change detector} and \emph{synonym resolver} that detect, link, and reuse past pods (\cref{sec:podding}).
    
    \paragraph{Ensuring atomicity} Our solution enables \emph{asynchronous saving}, a novel opportunity to unblock users' workflow by ensuring \emph{atomicity} where saved states must be results of complete (and not partial) code executions (similar to thread-atomicity~\cite{baldassin2021persistentsurvey}). Without a safeguard, ongoing code executions may mutate objects and corrupt the saved state. To guarantee atomicity, \system employs an active variable locking (\cref{sec:async-lock}) to safely minimize blocking on code executions (\cref{sec:async}).
}

\section{System Overview}
\label{sec:overview}

\def\sldist{10mm }

\tikzset{
    cellnode/.style={%
        draw=cZmain,
        fill=cZlightlight,
        anchor=west,
        align=left,
        font=\tiny,
        text width=14mm,
        inner xsep=1mm,
    },
    >=LaTeX
}
\tikzset{
    explorenode/.style={%
        draw=cZmain,
        fill=cZlightlight,
        align=left,
        anchor=west,
        text width=27mm,
        font=\tiny,
    },
    >=LaTeX
}
\tikzset{
    statenode/.style={%
        draw=cAmain,
        fill=cAlight,
        font=\scriptsize,
    },
    >=LaTeX
}
\tikzset{
    savenode/.style={%
        above,
        sloped,
        text=cAmain,
        font=\small,
        anchor=south west,
        align=left,
        pos=0.1,
        inner xsep=0mm,
    },
    >=LaTeX
}
\tikzset{
    loadnode/.style={%
        above,
        sloped,
        text=cAmain,
        font=\small,
        anchor=south west,
        align=left,
        pos=0.25,
        inner xsep=0mm,
    },
    >=LaTeX
}
\tikzset{
    execnum/.style={%
        text width=2.5mm,
        align=left,
        font=\scriptsize,
    },
    >=LaTeX
}

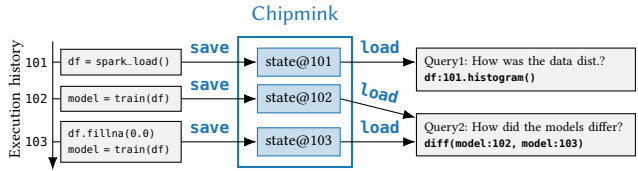
\begin{figure}[t]
    \centering
    \begin{tikzpicture}
        \node[
            matrix,
            column sep={\sldist},
            row sep={1mm},
            nodes={},
        ] (mainnode) {
            \node[cellnode] (cell1) {\texttt{df = spark_load()}}; &
            \node[statenode] (state1) {state@101};
            \\
    
            \node[cellnode] (cell2) {\texttt{model = train(df)}}; &
            \node[statenode] (state2) {state@102}; \\
    
            \node[cellnode] (cell3) {\texttt{df.fillna(0.0)}\\\texttt{model = train(df)}}; &
            \node[statenode] (state3) {state@103};
            \\
        };

        \node[explorenode, right=\sldist of state1.north east, anchor=north west] (explore1) {Query1: How was the data dist.?\\\textbf{\texttt{df:101.histogram()}}};
        \coordinate(mideaststate23) at ($(state2.east)!0.5!(state3.east)$);  
        \node[explorenode, right=\sldist of state3.south east, anchor=south west] (explore2) {Query2: How did the models differ?\\\textbf{\texttt{diff(model:102, model:103)}}};

        \draw[->] (cell1.east) -- node[savenode] {\textbf{\texttt{save}}} (state1.west);
        \draw[->] (cell2.east) -- node[savenode] {\textbf{\texttt{save}}} (state2.west);
        \draw[->] (cell3.east) -- node[savenode] {\textbf{\texttt{save}}} (state3.west);

        \draw[->] (state1.east) -- node[loadnode] {\textbf{\texttt{load}}} ($(explore1.north west)!(state1.east)!(explore1.south west)$);
        \coordinate (explore2-outanchor) at ($(explore2.west) - (1.5mm, 0)$);
        \draw[->] (state2.east)
            -- node[loadnode, pos=0.2] {\textbf{\texttt{load}}} ($(explore2.north west)!0.1!(explore2.south west)$);
        \draw[->] (state3.east)
            -- node[loadnode] {\textbf{\texttt{load}}} ($(explore2.north west)!(state3.east)!(explore2.south west)$);

        \node[execnum, left=1mm of cell1.west] (t1) {\texttt{101}};
        \node[execnum, left=1mm of cell2.west] (t2) {\texttt{102}};
        \node[execnum, left=1mm of cell3.west] (t3) {\texttt{103}};
        \node[left=-1.5mm of t2.west, anchor=south, align=center, rotate=90, text depth=0mm, font=\scriptsize, text width=20mm] (hist) {Execution history};
        \coordinate (t1-outanchor) at ($(t1.north east) + (0, 1mm)$);
        \coordinate (t3-outanchor) at ($(t3.south east) - (0, 2mm)$);
        \draw[->] (t1-outanchor) -- (t3-outanchor);
        \draw[-] (cell1.west) -- (t1.east);
        \draw[-] (cell2.west) -- (t2.east);
        \draw[-] (cell3.west) -- (t3.east);

        \begin{scope}[on background layer]
            \node[
                fit=(state1)(state2)(state3),
                thick,
                draw=cAmain,
                fit margins={left=1.25mm, right=0.75mm},
            ] (statewrap) {};
            \node[
                above=2mm of statewrap,
                anchor=south,
                align=center,
                inner sep=0,
                text depth=0,
                text=cAmain,
                font=\small,
            ] (systext) {\system};
        \end{scope}
    \end{tikzpicture}
    \vspace{-7mm}  
    \caption{\system saves and loads state for exploration.}
    \label{fig:user_apis}
\end{figure}

\input{figures/2_method/architecture}

This section presents interface (\cref{sec:interface}), system architecture (\cref{sec:architecture}), data model (\cref{sec:background-datamodel}), and the equivalence guarantee (\cref{sec:background-equal}) of \system to set the context before delving into technical details.

\subsection{User APIs}
\label{sec:interface}

With \system, users can store and restore states through \texttt{save} and \texttt{load} APIs. When the user invokes \textbf{\texttt{save(namespace: NS) -> TimeID}} on a target namespace, they receive a \texttt{TimeID}.
Later, the user can refer to this \texttt{TimeID} through \textbf{\texttt{load(names: set[str], time_id: TimeID) -> NS}} to restore the namespace.
\cut{
    Optionally, they can also select specific variables by requesting a subset of variable names.
    After the user receives the restored namespace(s), they can perform any query on the namespace as needed.
}
For example, knowing \system, \fix{a user could save} the states of the original \texttt{df:101} (in state@101) preprocessed through a Spark pipeline, an initial model \texttt{model:102} (in state@102), and an alternative model \texttt{model:103} (in state@103). \fix{Later on, the user} is able to quickly load \texttt{df:101} to answer Query1 without re-executing the Spark pipeline. 
\fix{For Query2, the user can} swiftly restore just \texttt{model:102} and \texttt{model:103} without loading the entire namespace.

\cut{
    Beyond correctly saving and loading variables, \system's users can expect reduced storage requirements (Objective 1), faster saving (Objective 2), and faster loading proportional to the requested variables (Objective 3).
    Our experiments (\cref{sec:exp}) measure respective metrics: storage size on file systems in bytes (\cref{sec:exp-storage}), saving time in seconds (\cref{sec:exp-save}), and loading time in seconds (\cref{sec:exp-load}).
}

\subsection{System Architecture}
\label{sec:architecture}

\Cref{fig:architecture} illustrates \system's architecture and interactions. \system reads and writes to the runtime \emph{kernels} (e.g., IPython's kernels for Jupyter). It also instruments namespaces for additional information and access control for safety. \system stores its data in an \emph{underlying storage} (e.g., key-value store).
\system comprises many components, enabling partial saving and loading, optimizing podding decisions, and safely unblocking exploration.

\paragraph{Partial saving and loading} Effective partial saving in \system involves three steps: (1) identify the variables to save (Tracker and Active Variable Filter, \cref{sec:podding-filter}), (2) decompose the variables' dependent objects into multiple pods (Podding, \cref{sec:podding-protocol}), and (3) detect the pods that have changed to be written to storage and link those unchanged with their synonymous pod IDs (Change Detector, \cref{sec:podding-change}). 
Inversely, \system reverts the process to partially load a set of variable names: (1) resolve synonymous pods (if any) and read-only relevant pods from underlying storage (Synonym Resolver, \cref{sec:podding-change}), (2) assemble deserialized data from pods into original objects (Unpodinng, \cref{sec:podding-protocol}), and (3) send back the requested variables.

\paragraph{Intelligent podding optimization} Podding optimization instructs the podding process on how to decompose objects being saved. Given an object and the current podding state, the podding optimizer assesses different courses of action by calculating their anticipated costs and takes the best course of podding action. \cref{sec:podding-optimization} presents the core of the optimizer: \fix{Learned Greedy Algorithm} (\pglga).

\paragraph{Concurrency for safe unblocking} \system unblocks code executions as soon as possible and allows access to variables unrelated to the current saving. After identifying relevant variables being saved, it offloads the rest of the podding process into a \emph{podding thread} (\cref{sec:async-thread}) and protects the relevant variables via \emph{namespace locks} (\cref{sec:async-lock}) while allowing code executions to continue accessing other variables. It also automatically \emph{checks for read-only code executions} (\cref{sec:async-static}) to grant users free access to all variables.
\cut{
    without being blocked.
}




\subsection{Data Model}
\label{sec:background-datamodel}

\cut{
    This section covers \system's conceptual model for capturing all the objects. The model is minimal to capture interdependent objects and useful for discussions on partial saving and loading, object splitting, as well as asynchronous saving.
}
This section covers \system's conceptual model for capturing interdependent objects and useful for discussions on partial saving and loading, object splitting, as well as asynchronous saving.

\paragraph{Namespace, objects, and variables} A \emph{namespace} is a collection of \emph{objects} and their \emph{names}. These named objects are interchangeably called \emph{variables}.
\Cref{fig:datamodel-example} shows an instance of a namespace, 5 variables, and 9 objects. An object $u$ may \emph{(directly) depend} on another object $v$ if and only if object $u$ has a reference to object $v$.
\cut{
    For example, \texttt{fig} depends on \texttt{ax} in \Cref{fig:datamodel-example}.
}

\paragraph{Object graph} \datamodel is a labeled and rooted directed graph $\calG = (\calU, \calE, \calV, \ell)$ where each node $u \in \calU$ represents an object and a directed edge $e = (u, v) \in \calE$ implies that object $u$ directly depends on $v$.
$\calV \subseteq \calU$ is the set of variables (named objects) whereas the variable naming function $\ell$ is a mapping from a string variable name to its variable $u \in \calV$.
\datamodel considers the namespace dictionary object~\footnote{The object returned from IPython kernel's \texttt{globals()}.} the \emph{root} of the graph.


\paragraph{Code execution} A code execution $\text{Exec}$ runs a code block $C$ (e.g., Python statement(s)) on the current state $\calG$ to create a new state: $\text{Exec}(\calG, C) = \calG^{+} = (\calU^{+}, \calE^{+}, \calV^{+}, \ell^{+})$. Depending on the context, code execution is also interchangeably called ``cell execution,'' ``program execution,'' or simply ``execution.''

We presume the \emph{code execution locality}: a code execution can only change the objects and dependencies connected to the variables accessed in the code block.\footnote{Dismissing unsafe pointer arithmetics and dereferencing which are rare in data science programs and libraries.} For example, code execution can change the ``title'' object of a figure by accessing and using any connected variable like \texttt{ax} or \texttt{plt} and \emph{not} by accessing unconnected variables like \texttt{__name__}.
Let $\calV_C$ be the set of accessed variables in $C$, an object $u \in \calU$ does not change (i.e., $u \in \calU^{+}$) if the object $u$ is not connected to any accessed variable $v \in \calV_C$. Similarly, a dependency $e = (u, u') \in \calE$ does not change $e = (u, u') \in \calE^{+}$, if object $u$ or $u'$ is not connected to accessed variables $v \in \calV_C$.

\tikzset{
    vname/.style={%
        draw=cAmain,
        fill=cAlight,
        minimum width=#1,
        minimum height=4mm,
        text height=1.7mm,
        text depth=0mm,
        node distance=0 and 0,
    },
    vname/.default=0.8cm,
    >=LaTeX
}
\tikzset{
    obj/.style={%
        draw,
        shape=circle,
        minimum size=1.3mm,
        node distance=1mm and 2.7mm,
    },
    >=LaTeX
}
\tikzset{
    vobj/.style={%
        obj,
        draw=cAmain,
        fill=cAlight,
    },
    >=LaTeX
}
\tikzset{
    vnameobj/.style={%
        dashed,
        draw=cAmain,
    },
    >=LaTeX
}

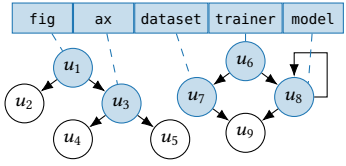
\begin{figure}[t]
    \vspace{\overtopfigurespace}
    \centering
    \footnotesize
    \begin{tikzpicture}    
        \begin{scope}
            \node[vname] (v1) at (0,0) {\texttt{fig}};
            \node[vname, right = of v1] (v2) {\texttt{ax}};
            \node[vname, right = of v2] (v3) {\texttt{dataset}};
            \node[vname, right = of v3] (v4) {\texttt{trainer}};
            \node[vname, right = of v4] (v5) {\texttt{model}};
        \end{scope}
        \node[fit=(v1)(v2)](v12){};

        \begin{scope}
            \node[vobj, below = of v12] (o1) {$u_1$};
            \node[obj, below left = of o1] (o1l) {$u_2$};
            \node[vobj, below right = of o1] (o2) {$u_3$};
            \node[obj, below left = of o2] (o2l) {$u_4$};
            \node[obj, below right = of o2] (o2r) {$u_5$};

            \node[vobj, below = of v4] (o4) {$u_6$};
            \node[vobj, below left = of o4] (o3) {$u_7$};
            \node[vobj, below right = of o4] (o5) {$u_8$};
            \node[obj, below right = of o3] (o35share) {$u_9$};

            \draw[->] (o1) -- (o1l);
            \draw[->] (o1) -- (o2);
            \draw[->] (o2) -- (o2l);
            \draw[->] (o2) -- (o2r);
            
            \draw[->] (o4) -- (o3);
            \draw[->] (o4) -- (o5);
            \draw[->] (o3) -- (o35share);
            \draw[->] (o5) -- (o35share);
            \draw[->] (o5.east) -| ($(o5.north east) + (2.5mm,4mm)$) -| (o5.north);
        \end{scope}

        \draw[vnameobj] (v1) -- (o1);
        \draw[vnameobj] (v2) -- (o2);
        \draw[vnameobj] (v3) -- (o3);
        \draw[vnameobj] (v4) -- (o4);
        \draw[vnameobj] (v5) -- (o5.north east);
    \end{tikzpicture}
    \vspace{-3mm}
    \caption{An instance of \datamodel. Namespace consists of objects $\calU$ (circles) and dependencies $\calE$ (arrows) between them. Each variable naming $\ell$ (blue dash line) connects a variable name (rectangle) to its object $v \in \calV$ (blue circle).}
    \label{fig:datamodel-example}
    \vspace{\undercaptionspace}
\end{figure}

\fix[R2W1,R2D1]{
    \paragraph{Pod dependency graph} As a result of podding, \system produces a \emph{pod dependency graph} or \podgraph $\calG_p = (\calU_p, \calE_p)$ based on \datamodel. Here $\calU_p$ is the set of pods each containing a disjoint partitioning set of objects from \datamodel $\calG = (\calU, \calE, \calV, \ell)$, i.e., $\forall u_p, v_p \in \calU_p, u_p \neq v_p, u_p \cap v_p = \varnothing$ and $\bigcup_{u_p}^{\calU_p} u_p = \calU$. An edge $e_p = (u_p, v_p)$ exists in $\calE_p$ if and only if an object in the source pod $u_p$ has an edge to an object in the destination pod $v_p$, i.e., $\exists u \in u_p, \exists v \in v_p, (u, v) \in \calE$.
    \podgraph is useful for prefetching dependent pod bytes during unpodding, identifying active variables (\cref{sec:podding-filter}), and safely locking variables being saved (\cref{sec:async}).
}

\subsection{Equivalence Guarantee}
\label{sec:background-equal}

\system guarantees to store and load the same objects when their \emph{serialized} representations are \emph{equal}. Serialization $\texttt{Ser}$ transforms an \datamodel $\calG$ into a byte stream $\texttt{Ser}(\calG) = \calB$ that can written to storage. Conversely, \emph{deserialization} $\texttt{Deser}$ reverts the byte stream back to the \datamodel $\texttt{Deser}(\texttt{Ser}(\calG)) = \calG$. \system's guarantee based on serialized representations is consistent if deserialization is \emph{consistent}: deserializing the same byte stream produces the same object $\texttt{Deser}(\calB) = \texttt{Deser}(\calB')$ if $\calB = \calB'$. In Python, Pickle~\cite{pickle} as well as its extensions like Dill~\cite{dill1,dill2} and Cloudpickle~\cite{cloudpickle} are examples of serialization protocols that correctly operate with \datamodel. Our notion of equality is similar to Li et al.~\cite{elasticnotebook2023li} where both value and structural information are considered.

\paragraph{Serialization equality} Two \datamodel $\calG_1$ and $\calG_2$ are equivalent under the \emph{serialization equality} if and only if $\texttt{Ser}(\calG_1) = \texttt{Ser}(\calG_2)$. On the contrary, we define \emph{dirty/changed/modified/mutated} \datamodel $\calG$ as a \datamodel with a different serialized byte stream with respect to its previous state $\calG'$: $\texttt{Ser}(\calG) \neq \texttt{Ser}(\calG')$.





\section{Enabling Partial Saving and Loading}
\label{sec:podding}

\system achieves a partial saving of modified objects and loading of variables through a novel concept called \emph{podding} that groups objects into \emph{pod(s)} (\cref{sec:podding-protocol}). \system detects changes in the serialized bytes of pods over time and only saves changed pods (\cref{sec:podding-change}). To reduce overheads, \system utilizes a novel filtering technique to
identify variables accessed directly and indirectly (\cref{sec:podding-filter}).

\subsection{Podding and Unpodding}
\label{sec:podding-protocol}

\def\pb{1.5mm}
\def\subfigwidth{0.3\linewidth}
    
\tikzset{
    vname/.style={%
        draw,
        minimum width=#1,
        minimum height=5mm,
        text height=1.7mm,
        text depth=0mm,
        node distance=0 and 0,
    },
    vname/.default=1.2cm,
    >=LaTeX
}
\tikzset{
    obj/.style={%
        draw,
        shape=circle,
        minimum size=3mm,
        node distance=2.5mm and 2mm,
    },
    >=LaTeX
}
\tikzset{
    curobj/.style={%
        obj,
        draw=cBmain,
        fill=cBlight,
    },
    >=LaTeX
}
\tikzset{
    invobj/.style={%
        obj,
        draw=none,
        minimum size=2mm,
    },
    >=LaTeX
}
\tikzset{
    pod/.style={%
        thick,
        rounded corners,
        draw=cZmain,
        fill=cZmain,
        fill opacity=0.2,
    },
    >=LaTeX
}
\tikzset{
    podactive/.style={%
        thick,
        rounded corners,
        draw=cAmain,
        fill=cAmain,
        fill opacity=0.2,
    },
    >=LaTeX
}

\begin{figure}[t]
    \vspace{\overtopfigurespace}
    \vspace{-1mm}
    \centering
    \scriptsize
    \begin{subfigure}[b]{\subfigwidth}
    \begin{tikzpicture}
        \begin{scope}
            \node[obj, below = of v12] (o1) {};
            \node[obj, below left = of o1] (o1l) {};
            \node[curobj, below right = of o1] (o2) {};
            \node[invobj, below left = of o2] (o2l) {...};
            \node[invobj, below right = of o2] (o2r) {...};
            \node[invobj, below left = of o1l] (o1ll) {};  

            \draw[->] (o1) -- (o1l);
            \draw[->] (o1) -- (o2);
            \draw[->] (o2) -- (o2l);
            \draw[->] (o2) -- (o2r);
        \end{scope}

        \draw[pod] ($(o1.north)+(0,\pb)$) -- ($(o1l.west)-(\pb,0)$) -- ($(o1l.south)-(0,\pb)$) -- ($(o1.south)-(0,\pb)$) -- ($(o2.south)-(0,\pb)$) -- ($(o2.east)+(\pb,0)$) -- ($(o1.east)+(\pb,0)$) -- cycle;
        \node[curobj] (o2f) at (o2) {};
    \end{tikzpicture}
    \caption{Bundle}
    \end{subfigure}%
    \begin{subfigure}[b]{\subfigwidth}
    \begin{tikzpicture}
        \begin{scope}
            \node[obj, below = of v12] (o1) {};
            \node[obj, below left = of o1] (o1l) {};
            \node[curobj, below right = of o1] (o2) {};
            \node[invobj, below left = of o2] (o2l) {...};
            \node[invobj, below right = of o2] (o2r) {...};
            \node[invobj, below left = of o1l] (o1ll) {};  

            \draw[->] (o1) -- (o1l);
            \draw[->] (o1) -- (o2);
            \draw[->] (o2) -- (o2l);
            \draw[->] (o2) -- (o2r);
        \end{scope}

        \draw[pod] ($(o1.north)+(0,\pb)$) -- ($(o1l.west)-(\pb,0)$) -- ($(o1l.south)-(0,\pb)$) -- ($(o1.east)+(\pb,0)$) -- cycle;
        \draw[podactive] ($(o2.north)+(0,\pb)$) -- ($(o2.west)-(\pb,0)$) -- ($(o2.south)-(0,\pb)$) -- ($(o2.east)+(\pb,0)$) -- cycle;
        \node[curobj] (o2f) at (o2) {};
    \end{tikzpicture}
    \caption{Split-continue}
    \end{subfigure}%
    \begin{subfigure}[b]{\subfigwidth}
    \begin{tikzpicture}
        \begin{scope}
            \node[obj, below = of v12] (o1) {};
            \node[obj, below left = of o1] (o1l) {};
            \node[curobj, below right = of o1] (o2) {};
            \node[invobj, below left = of o2] (o2l) {...};
            \node[invobj, below right = of o2] (o2r) {...};
            \node[invobj, below left = of o1l] (o1ll) {};  

            \draw[->] (o1) -- (o1l);
            \draw[->] (o1) -- (o2);
            \draw[->] (o2) -- (o2l);
            \draw[->] (o2) -- (o2r);
        \end{scope}

        \draw[pod] ($(o1.north)+(0,\pb)$) -- ($(o1l.west)-(\pb,0)$) -- ($(o1l.south)-(0,\pb)$) -- ($(o1.east)+(\pb,0)$) -- cycle;
        \draw[podactive] ($(o2.north)+(0,\pb)$) -- ($(o2l.west)-(\pb,0)$) -- ($(o2l.south)-(0,\pb)$) -- ($(o2.south)-(0,\pb)$) -- ($(o2r.south)-(0,\pb)$) -- ($(o2r.east)+(\pb,0)$) -- node[sloped,near end,anchor=south,text=cAmain,text opacity=1.0] {\scriptsize Automatically bundle} ($(o2.east)+(\pb,0)$) -- cycle;
        \node[curobj] (o2f) at (o2) {};
    \end{tikzpicture}
    \caption{Split-final}
    \end{subfigure}%
    \vspace{-3mm}
    \caption{Different podding decision actions on the same current object (orange circle), parent pod (grey round box), and new pod (blue round box) if any.}
    \label{fig:podding-decision}
    \vspace{\undercaptionspace}
\end{figure}
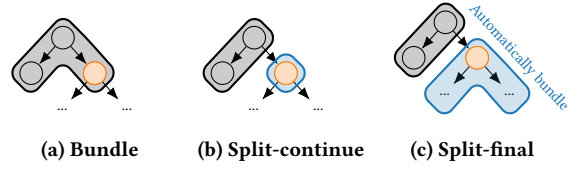

\paragraph{Podding} \system leverages a novel concept named \emph{podding} which partitions \datamodel's objects into multiple pods, each producing its byte stream (a.k.a. \emph{pod bytes}) via a serialization protocol, tagged with a \emph{pod ID}. Each pod ID and pod bytes form a unit of data to be written to a storage. With \datamodel divided into pods, \system can incrementally save the namespace by isolating and writing only changed pods. Furthermore, \system can partially load by reading and unpickling only the necessary pods.

As the serialization protocol traverses through objects in \datamodel in depth-first order,\footnote{By following reduction path depending on the object type, e.g., Pickle's builtin save functions, \texttt{__getinitargs__}, \texttt{__getstate__}, \texttt{__reduce__}, etc.} podding also observes the objects and consults \emph{podding decision} on how to construct pods.
On each object in \datamodel, the podding decision selects either one of three \emph{podding actions} to (1) \textbf{bundle} the object with the current pod, or (2) \textbf{split} the object into a separate pod.
In case of a split, the podding decision also decides whether to (a) \textbf{split-continue} which splits the object and recursively continues applying the podding decision on descendent objects, or (b) \textbf{split-final} which splits the object and skips further podding on descendent objects (\Cref{fig:podding-decision}).
\cut{
    Split-final is equivalent to split-continue the object and bundling the rest of the descendent objects, but with reduced overheads from eliminating further recursions. 
}

\system's efficiency in incremental saving and partial loading hinges on a good podding decision. Na\"ively splitting each object into separate pods introduces tremendous pod management overheads while using one pod for the whole graph reverts podding and disregards partial saving and loading opportunities (see \cref{sec:exp-podoptimize}). We will shortly derive an intelligent optimization to discover the optimal podding decision (\cref{sec:podding-optimization}).


\paragraph{Unpodding} As opposed to podding that splits objects, \emph{unpodding} assembles objects back together during loading.
\system watches for relevant pod IDs in the byte stream. When one is found, \system reads the associated pod bytes from storage and recursively deserializes the pod bytes.
\cut{
    Note that \system memoizes these deserialized pods; \system only deserializes each pod at most once per loading.
}


\paragraph{Base serialization} \system applies to any serialization protocol that supports (explicitly or implicitly) a mechanism to serialize dependent objects to multiple byte streams. Currently, we implement \system to be fully compatible with Pickle~\cite{pickle} as well as its extensions like Dill~\cite{dill1,dill2} and Cloudpickle~\cite{cloudpickle}.

\paragraph{Encoding references within and across pods} \system needs a special protocol to encode shared references across pods because, for Pickle-like serializations, \emph{memo IDs}---numbers denoting object references---must be natural numbers for references within pod and the total number of objects is unknown during serialization.

\cut{
    Analogously to an operating system's virtual memory space,
}
\system implements a \emph{virtual memo space} that denote references for serialized data using \emph{virtual memo IDs} while uniquely associating object references to \emph{global memo IDs}.
\cut{
    (akin to physical memory addresses)
}
In this protocol, virtual memo IDs for within-pod references are the original natural-number memo IDs, while those for across-pod references are their global memo IDs plus $2^{31}$.
To uniquely associate references to global memo IDs, each pod allocates page(s) of $B$ global memo IDs in range $[\delta_i, \delta_i + B)$ dynamically as needed.
As the pod encounters a new object reference, it associates the reference to a global memo ID in the allocated page.
\system persists these page offsets $\{\delta_i\}$ as metadata.
Therefore, given a virtual memo ID, \system can retrieve the object reference stored at the global memo ID calculated by \Cref{eq:global-virtual-memo-id}.
\vspace{\beforeeqspace}
\begin{equation} \label{eq:global-virtual-memo-id}
    \begin{split}
        m_{\text{global}}(m_{\text{virtual}}) = \begin{cases}
            \delta_{i} + r, & \text{if } m_{\text{virtual}} < 2^{31} \\
            m_{\text{virtual}} - 2^{31}, & \text{if } m_{\text{virtual}} \geq 2^{31} \\
        \end{cases} \\
        \text{where } i = \left\lfloor m_{\text{virtual}} \; / \; B \right\rfloor 
        \text{and } r = m_{\text{virtual}} \bmod B \\
    \end{split}
\end{equation}

\subsection{Change Detector and Synonym Resolver}
\label{sec:podding-change}

\system aims to detect changes in pod bytes based on the serialization equality (\cref{sec:background-equal}).
\cut{
    ; that is, we say that the pod has changed if its sequence of bytes differs from before.
}
Changed pods signify new data to write while unchanged pods present an opportunity to save storage.
\cut{
    When \system detects changed pod bytes, it \emph{must} write the pod bytes to underlying storage. On the other hand, when \system detects unchanged (i.e., redundant) pod bytes, it \emph{can} skip writing the pod bytes to save I/O costs. As a result, \system needs a mechanism to quickly detect new pod bytes and manage unchanged pods based on existing ones.
}

\paragraph{Detecting new pods}
To detect new pods, \system maintains a cache-like mapping structure in memory, named \emph{pod thesaurus},
that associates each written pod bytes to its pod ID. Pod IDs that share exactly the same pod bytes are called synonymous pod IDs. Before passing pod bytes to the underlying storage, \system checks whether the pod thesaurus is missing the pod bytes. If so, \system writes the pod bytes to storage, then inserts new pod bytes and pod ID to the pod thesaurus.
On insertion, the pod thesaurus may evict unused pod bytes and pod IDs until it fits within its capacity. We select the last in first out (LIFO) eviction policy for its simplicity.

\paragraph{Skipping and reading redundant pods} Otherwise, if the pod thesaurus already contains the pod bytes (i.e., it was unchanged from a prior write), \system skips writing pod bytes and instead notes down the synonymous pod IDs in the storage. On reading a pod ID, if the pod ID is associated with a synonymous pod ID, \system reads pod bytes of the synonymous pod ID instead.

\paragraph{Thesaurus of hashes for lower memory usage} In place of exact pod bytes, pod thesaurus can store the hashes of pod bytes with their pod IDs. To avoid collision with high probability, the pod thesaurus should hold only a limited number of pods while relying on a large enough number of hash bits. For example, in our experiment (\cref{sec:exp}), we use 128-bit \texttt{xxhash} and set the thesaurus capacity to 1~GB, meaning it can hold 62.5~M (1~GB / 16~B) pods.
Even after 1 billion pods, such a capacity has a collision probability of only $1.8\times10^{-22}$.
\fix[R3W1,R3E2]{In the rare case of a collision, \system would raise an error on load due to missing references and unpickling errors. Given the execution history, one can recover by loading the previous state and re-executing cells. Further safeguards (e.g., combining hashes with checksums or length checks) can reduce collision risk further.}

\subsection{Active Variable Filter}
\label{sec:podding-filter}

Variables not referred to during code execution \fix{presents} an opportunity to reduce change detection overhead further. We call these unreferenced variables \emph{inactive variables} and referenced variables \emph{active variables}. A variable is active when it is connected in \datamodel to an accessed variable in the recent execution $\text{Exec}(\calG, C)$. For example, in \Cref{fig:activevars}, executing \texttt{model.predict(...)} accesses the variable \texttt{model} so \texttt{dataset}, \texttt{trainer}, and \texttt{model} are active variables, while \texttt{fig} and \texttt{ax} are inactive variables based on the \datamodel.

Based on \cref{sec:background-datamodel}, \system can identify inactive and active variables by tracking namespace accesses and expanding from accessed variables to active variables using \podgraph.
\cref{theorem:activeaccessed} shows that this mechanism is sufficient for determining active variables.

\tikzset{
    vname/.style={%
        draw,
        minimum width=#1,
        minimum height=3mm,
        text height=1.7mm,
        text depth=0mm,
        node distance=0 and 0,
    },
    vname/.default=1.2cm,
    >=LaTeX
}
\tikzset{
    obj/.style={%
        draw,
        shape=circle,
        minimum size=2mm,
        node distance=2mm and 4mm,
    },
    >=LaTeX
}
\tikzset{
    vobj/.style={%
        obj,
    },
    >=LaTeX
}
\tikzset{
    vnameobj/.style={%
        dashed,
    },
    >=LaTeX
}
\tikzset{
    pod/.style={%
        rounded corners,
        draw=cZmain,
    },
    >=LaTeX
}

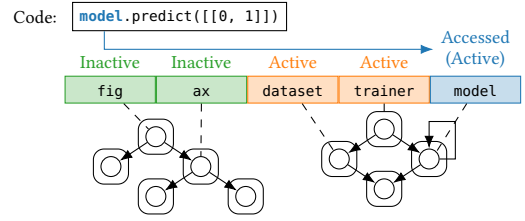
\begin{figure}[t]
    \centering
    \small
    \begin{tikzpicture}[font=\footnotesize]
        \begin{scope}[local bounding box=cellcode]
            \node (cell) {Code:};
            \node[draw, right = 1mm of cell] (code) {\texttt{{\color{cAmain}\textbf{model}}.predict([[0, 1]])}};
        \end{scope}
    
        \begin{scope}[shift={($(cell.south east)+(0,-6mm)$)}, anchor=north west]
            \node[vname, draw=cCmain, fill=cClight] (v1) at (0,0) {\texttt{fig}};
            \node[vname, draw=cCmain, fill=cClight, right = of v1] (v2) {\texttt{ax}};
            \node[vname, draw=cBmain, fill=cBlight, right = of v2] (v3) {\texttt{dataset}};
            \node[vname, draw=cBmain, fill=cBlight, right = of v3] (v4) {\texttt{trainer}};
            \node[vname, draw=cAmain, fill=cAlight, right = of v4] (v5) {\texttt{model}};
            \node[fit=(v1)(v2)](v12){};
    
            \node[cCmain, above=0mm of v1] (v1lock) {Inactive};
            \node[cCmain, above=0mm of v2] (v2lock) {Inactive};
            \node[cBmain, above=0mm of v3] (v3lock) {Active};
            \node[cBmain, above=0mm of v4] (v4lock) {Active};
            \node[cAmain, align=center, above=0mm of v5] (v5lock) {Accessed\\(Active)};
    
            \draw[->, cAmain] ($(code.south west)!0.15!(code.south east)$) |- (v5lock.west) {};
    
            \node[vobj, below = of v12] (o1) {};
            \node[obj, below left = of o1] (o1l) {};
            \node[vobj, below right = of o1] (o2) {};
            \node[obj, below left = of o2] (o2l) {};
            \node[obj, below right = of o2] (o2r) {};
    
            \node[vobj, below = of v4] (o4) {};
            \node[vobj, below left = of o4] (o3) {};
            \node[vobj, below right = of o4] (o5) {};
            \node[obj, below right = of o3] (o35share) {};
    
            \draw[->] (o1) -- (o1l);
            \draw[->] (o1) -- (o2);
            \draw[->] (o2) -- (o2l);
            \draw[->] (o2) -- (o2r);
            
            \draw[->] (o4) -- (o3);
            \draw[->] (o4) -- (o5);
            \draw[->] (o3) -- (o35share);
            \draw[->] (o5) -- (o35share);
            \draw[->] (o5.east) -| ($(o5.north east) + (2.5mm,4mm)$) -| (o5.north);
    
            \draw[vnameobj] (v1) -- (o1);
            \draw[vnameobj] (v2) -- (o2);
            \draw[vnameobj] (v3) -- (o3);
            \draw[vnameobj] (v4) -- (o4);
            \draw[vnameobj] (v5) -- (o5.north east);
            
            \node[pod,fit=(o1)] (o1p) {};
            \node[pod,fit=(o1l)] (o1lp) {};
            \node[pod,fit=(o2)] (o2p) {};
            \node[pod,fit=(o2l)] (o2lp) {};
            \node[pod,fit=(o2r)] (o2rp) {};
            \node[pod,fit=(o3)] (o3p) {};
            \node[pod,fit=(o4)] (o4p) {};
            \node[pod,fit=(o5)] (o5p) {};
            \node[pod,fit=(o35share)] (o35sharep) {};
        \end{scope}
    \end{tikzpicture}
    \vspace{-3mm}
    \caption{Inactive, active, and accessed variables.}
    \label{fig:activevars}
    \vspace{\undercaptionspace}
\end{figure}

\begin{theorem} \label{theorem:activeaccessed}
    All active variables belong to pod(s) that are connected to accessed variables' pod(s) on the prior \podgraph.
\end{theorem}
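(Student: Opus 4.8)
The plan is to treat this as a purely structural statement: connectivity in the object graph lifts to connectivity in the pod graph under the disjoint pod partition. First I would unfold the definitions. Let $w$ be any active variable; by \cref{sec:podding-filter} this means $w$ is connected in \datamodel to some accessed variable $v \in \calV_C$, i.e.\ there is an undirected path $w = x_0, x_1, \dots, x_k = v$ in $\calG$ in which each consecutive pair $x_i, x_{i+1}$ is joined by an edge of $\calE$ in one orientation or the other. Because the pods form a disjoint partition of $\calU$ (pairwise-empty intersections with $\bigcup_{u_p} u_p = \calU$), there is a well-defined assignment $P : \calU \to \calU_p$ sending each object to the unique pod containing it. The goal is then to show that $P(w)$ and $P(v)$ lie in the same (weakly) connected component of \podgraph, where weak connectivity is the right notion since the underlying object-graph connectivity is itself undirected.

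The key step is an \emph{edge-lifting} lemma that I would prove directly from the definition of $\calE_p$. For every edge $(a,b) \in \calE$, either $a$ and $b$ sit in the same pod, so $P(a) = P(b)$, or they sit in distinct pods $u_p \ni a$ and $v_p \ni b$, in which case the witness $(a,b)$ satisfies exactly the defining condition $\exists u \in u_p, \exists v \in v_p, (u,v) \in \calE$ and hence forces $(P(a), P(b)) \in \calE_p$. Applying this to each consecutive pair of the path yields a sequence of pods $P(x_0), \dots, P(x_k)$ in which adjacent pods are either identical or joined by an edge of \podgraph; collapsing repeated pods produces a walk in the underlying undirected graph of \podgraph from $P(w)$ to the accessed variable's pod $P(v)$. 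This is precisely the assertion that every active variable's pod is connected to an accessed variable's pod.

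The part I expect to demand the most care is deciding \emph{which} object graph the word ``connected'' refers to, since the podding---and therefore the prior \podgraph---is built on $\calG$, whereas the execution $\text{Exec}(\calG, C)$ may have introduced new objects and edges into $\calG^{+}$. I would resolve this by invoking the code-execution locality assumption: any object $u \in \calU$ not connected to an accessed variable survives unchanged into $\calU^{+}$ and retains its prior pod, so the witnessing path can be taken within $\calG$ for every object that already existed. For genuinely new structure, locality guarantees that each newly created object or edge is itself connected to some accessed variable, so I would split the path at its first vertex belonging to $\calG$: the prior-graph suffix lifts cleanly to the prior \podgraph by the lemma, while the remaining endpoint is already an accessed variable's pod, preserving weak connectivity to an accessed pod. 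Making this splitting argument precise---and confirming that the anchoring at accessed pods covers the read-write case and not merely the read-only example of \Cref{fig:activevars}---is the main obstacle; the structural lifting itself is routine once the lemma is in hand.
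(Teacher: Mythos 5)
Your proposal is correct and takes essentially the same approach as the paper: your edge-lifting lemma is precisely the paper's pod-connectivity lemma (every object-graph edge is either intra-pod or induces a \podgraph edge, so object-level paths lift to pod-level walks), and your locality-based splitting of the path at the first piece of post-execution structure is the paper's prior-connection lemma, which anchors the in-prior-graph prefix at the first added edge and uses code-execution locality to connect its endpoints to some accessed variable in the prior graph. Composing the two lemmas, as you do, is exactly the paper's proof; only minor wording needs tightening (e.g., ``the remaining endpoint is already an accessed variable's pod'' should read ``is connected in the prior graph, hence in the prior \podgraph, to some accessed variable's pod'').
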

\begin{proof}

    Let $u^{\text{act}}$ be any active variable, it is connected to an accessed variable $u^{\text{acc}}$ in $\calG$ at a point in time by definition. We use the following lemmas to prove the statement.

    \begin{lemma} \label{lemma:priorconnection}
        By code execution locality (\cref{sec:background-datamodel}), if $u, u' \in \calU$ are connected in $\calG$, $u$ is connected to $\exists \; v \in \calV_C$ in the prior $\calG^{0}$.
    \end{lemma}

    \begin{lemma} \label{lemma:connectednessinpod}
        By pod dependency graph properties (\cref{sec:podding-protocol}), if $u, v \in \calU$ are connected in $\calG$, pods $u_p \ni u$ and $v_p \ni v$ are also connected in $\calG_p$
    \end{lemma}

    While the proofs of lemmas are deferred, key assumptions and properties give hints to derive them. By \Cref{lemma:priorconnection}, $u^{\text{act}}$ is connected to a $v^{\text{acc}} \in \calV_C$ in the prior $\calG^{0}$. By \Cref{lemma:connectednessinpod}, it implies that $u_p \ni u^{\text{act}}$ and $v_p \ni v^{\text{acc}}$are also connected in the prior \podgraph $\calG^{0}_p$.
\end{proof}
\section{Podding Optimization}
\label{sec:podding-optimization}

For each object encountered during serialization, a podding optimizer chooses the best podding action (whether to bundle, split-continue, or split-final the object) to minimize storage cost.
The podding optimizer must be designed as an \emph{online streaming algorithm} to decide podding actions in one serialization pass per saving to \emph{efficiently} and \emph{generalizably} handle large numbers of objects of all types.
This section proposes \system's podding optimizer---Learned Greedy Algorithm (\pglga)---by introducing an abstract objective function (\cref{sec:podding-optimization-problem-abstract}), modeling related uncertainties (\cref{sec:podding-optimization-model}), formulating a podding problem (\cref{sec:podding-optimization-problem}), and presenting the algorithm (\cref{sec:podding-optimization-greedy}).



\subsection{High-level Objective}
\label{sec:podding-optimization-problem-abstract}

\pglga aims to minimize the storage cost by partitioning objects $\calU$ into pods $\calU_p$.
Based on performance profiles, the total podding cost is dominated by (1) overheads per pod $c_{\text{pod}}$, including storage overheads and additional podding time, and (2) rewritten pods due to changed objects:
\vspace{\beforeeqspace}
\begin{equation} \label{eq:lga-cost-prob}
\begin{split}
    \bfL(\calU_p; \calG) &= \sum_{u_p}^{\calU_p} \left[ c_{\text{pod}} + s(u_p) \; \Phi(u_p) \right]
\end{split}
\end{equation}

\noindent
where $c_{\text{pod}}$ is a constant overhead per pod, $s(u_p)$ is the size of the pod $u_p$, and $\Phi(u_p)$ is the random-variable number of times the pod $u_p$ changes (modeled in \cref{sec:podding-optimization-model}). Here we make simplifying assumptions that $s(u_p) = \sum_u^{u_p} s(u)$ and the size of an object $s(u)$
can be measured through a function (e.g., \texttt{sys.getsizeof}~\footnote{This function does not exactly measure but provides a signal sufficiently correlated with the size of the object after serialization.
}). 

\subsection{Composable Volatility Model}
\label{sec:podding-optimization-model}

To support \pglga in weighing between podding actions, \emph{volatility model} predicts object mutations and estimates the mutations of different poddings of objects.
For these purposes, \pglga represents object mutations as a Poisson distribution widely used to model event rates~\cite{Clarke_1946poisson} $\text{Pois}(\lambda(u))$ where the mutations occur $\lambda(u) \leq 1$ times per code execution (i.e., rate of changes). 
This volatility model guides \pglga in separating volatile and non-volatile objects apart, reducing unnecessary pod changes.
Moreover, \pglga can hypothesize the volatility of a pod via Poisson's \emph{composability}
$\lambda(u_p) = \sum_{u}^{u_p} \lambda(u)$.

\fix[R3W1,R3E2]{
    While object changes may exhibit correlations with themselves or across time (e.g., training vs. testing phases), our Poisson assumption is a simplifying choice to minimize model and optimization overheads. To address this limitation, future work may extend the model and the subsequent algorithm to account for correlated or higher-order statistics of object changes and study the resulting effect of improved volatility model on the optimality gap.
}


\fix[R3W1,R3E2]{
    \paragraph{Parameter estimation} To process millions of objects, \pglga estimates the per-object volatility $\lambda(u)$ based on lightweight and broadly applicable feature extraction to arbitrary object types. We initially profiled 14 such features (e.g., ID, scope, type-based functions, functional tests, different measures of size) and measured their importance against tracked object changes on a few notebook samples.\footnote{These notebooks are held out and not benchmarked against in the experiments.}.
    The important features with high average gain~\cite{lightgbmfeatureimportance} used to train the final model are \emph{the object's immediate size, object's length (if any), and the length of the object's \texttt{\_\_dict\_\_}}.
    Note that more sophisticated domain-specific features (e.g., object role or code complexity) are possible extensions, but the current choice is pragmatic for generality.
    After training and validating many models, we find that gradient boosting methods such as XGBoost~\cite{10.1145/2939672.2939785xgboost} and LightGBM~\cite{10.5555/3294996.3295074lightgbm} perform well with compact models and settle with LightGBM for its inference speed in our implementation.
}


\subsection{Objective Function}
\label{sec:podding-optimization-problem}

With the composable volatility model, \pglga substitutes $\Phi(u_p) = \text{Pois}(\lambda(u))$ and optimizes the expectation of the probabilistic cost function, $\calL(\calU_p; \calG) = \E[\bfL(\calU_p; \calG)]$:
\begin{equation} \label{eq:lga-cost}
\begin{split}
    \calL(\calU_p; \calG)
    &= \sum_{u_p}^{\calU_p} \left[ c_{\text{pod}} + \sum_u^{u_p} s(u) \sum_v^{u_p} \lambda(v) \right]
\end{split}
\end{equation}

\paragraph{Hardness} \pglga's podding problem is NP-hard, via equivalences to a \emph{tree partitioning} and subsequently a \emph{supermodular minimization}, a dual to the NP-hard \textit{submodular maximization} problem~\cite{doi:10.1137/090779346maxsubmodnphard}. We defer the proof to our technical report.



\subsection{Learned Greedy Algorithm}
\label{sec:podding-optimization-greedy}

While it is possible to derive \emph{two-pass approximate solution} from an approximation algorithm (e.g., RandomizedUSM~\cite{buchbinder2015tight}) for submodular maximization, the additional serialization pass would double the algorithm computation overhead and the algorithm would increase the implementation complexity.
\system settles on a one-pass greedy solution that discovers good podding solutions more quickly.

\begin{algorithm}[t]
    \caption{\small LGA decision, $\texttt{lga\_action}(u, u_p, \calG_p)$}
    \label{algo:lga}
    \footnotesize
    
    \DontPrintSemicolon
    
    \KwInput{Object $u$, current pod $u_p$, current pod dependency graph $\calG_p$}
    \KwOutput{Podding action $a \in \{ \text{bundle}, \text{split-continue}, \text{split-final} \}$}
    \BlankLine
    $\Delta \calL_{\text{bundle}}
        \leftarrow s(u_p) \times \lambda(u) + s(u) \times (\lambda(u_p) + \lambda(u))$ \\
    $\Delta \calL_{\text{split}}
        \leftarrow c_{\text{pod}} + s(u) \times \lambda(u)$ \\
    $\texttt{pod_depth} \leftarrow \texttt{PodDepth}(u_p, \calG_p)$ \\
    \lIf{$\Delta \calL_{\text{bundle}} < \Delta \calL_{\text{split}}$} {
        \Return $\text{bundle}$
    }
    \lElseIf {$\texttt{pod_depth} < \texttt{MAX\_POD\_DEPTH}$} {
        \Return $\text{split-continue}$
    }
    \lElse {
        \Return $\text{split-final}$
    }
\end{algorithm}

\paragraph{One-pass greedy solution} Our one-pass greedy solution, \pglga, selects the podding action with the locally optimal cost. Given a target object and the current podding state, it measures and compares the additional cost to bundle $\Delta \calL_{\text{bundle}}$ and the additional cost to split $\Delta \calL_{\text{split}}$. If the former is lower, it decides to bundle the object. Otherwise, it decides to split-continue or split-final.

Bundling the object to the current pod makes the pod larger and more volatile. The differences $\Delta \calL_{\text{bundle}}$
can be calculated as:
\begin{equation} \label{eq:delta-bundle}
\begin{split}
    \Delta \calL_{\text{bundle}}
        &= \calL_{\text{bundle}} - \calL(\calU_p; \calG) \\
    \Delta \calL_{\text{bundle}}
        &= [c_{\text{pod}} + (s(u_p) + s(u)) (\lambda(u_p) + \lambda(u))] \\
        &\quad- [c_{\text{pod}} + s(u_p) \lambda(u_p)] \\
    \Delta \calL_{\text{bundle}}
        &= s(u_p) \lambda(u) + s(u) (\lambda(u_p) + \lambda(u))
\end{split}
\end{equation}

On the other hand, split actions maintain the current pod's cost at the expense of an additional pod overhead instead (\cref{eq:delta-split}).
\begin{equation} \label{eq:delta-split}
\begin{split}
    \Delta \calL_{\text{split}}
        &= c_{\text{pod}} + s(u) \lambda(u)
\end{split}
\end{equation}

\pglga selects split-continue if the pod's depth (i.e., distance between the pod and the root pod) is lower than a constant $\texttt{MAX\_POD\_DEPTH}$; otherwise, it selects split-final.
\fix[R1O1,R1D4]{In addition, \pglga memoizes previous podding decisions for each object to regulate the composition across pods and to reduce redundant computation.}


\section{Minimal Blocking for Exploration}
\label{sec:async}

While \system must ensure the consistency of active variables and all their dependent objects during saving, it facilitates the user to continue exploring data in parallel by capitalizing on two key opportunities: when the exploration (1) accesses inactive variables, or (2) only statically reads active variables.
\system concurrently saves in a separate thread (\cref{sec:async-thread}) without locking inactive variables (\cref{sec:async-lock}) and safely permits executions of static code (\cref{sec:async-static}).

\subsection{Podding Thread for Asynchronous Saving}
\label{sec:async-thread}

\system employs thread-based parallelism to concurrently save active variables.\footnote{While standard Python currently executes one thread at a time due to the Global Interpreter Lock (GIL), PEP 703~\cite{pep703}---a formally accepted enhancement---is making GIL optional, thereby enabling thread-based parallelism in the near future.} After identifying active variables, \system initiates a new \emph{podding thread} to proceed with the remaining saving steps, which include optimized podding, detecting changed pods, and writing them to the underlying storage (see \Cref{fig:architecture}).
Currently, \system only allows a single concurrent save at a time; if the next saving is requested before the podding thread finishes, it will have to wait by joining the existing podding thread first.


\cut{
    \paragraph{Considerations for multiple asynchronous saves} \system could potentially allow multiple asynchronous saves simultaneously; however, based on our current empirical observations, doing so would not yield significant benefits but amplify the complexity.\footnote{The implementation is outlined in our experimental branch: \repoexpbranchurl.} For one, \system would need to multiply the locks by the number of asynchronous saves. Additionally, synchronization of ID generations and storage operations would be necessary. Future work is welcome in this direction.
}

\subsection{Active Variable Locking}
\label{sec:async-lock}

Synchronization is necessary to protect active variables from being altered while \system is saving them
to avoid data corruption.
\system could opt to naively lock down the entire namespace and concurrently save active variables. Nonetheless, this approach mostly equivalently blocks subsequent execution
except in rare cases when executions do not access the namespace.

\paragraph{Locking active variables} After identifying accessed variables (\cref{sec:podding-filter}), \system minimally locks \emph{active variables} and leaves inactive variables free for access. 
Instead of applying as many locks as the number of active variables, two simple non-reentrant locks suffice: one locking namespace $l_{\text{ns}}$ to make the namespace thread-safe and another locking active variables $l_{\text{active}}$ held during saving.
\cut{
    When an execution attempts to access a variable, it first checks whether the variable is active through $l_{\text{ns}}$. If so, the execution acquires the active variable lock $l_{\text{active}}$.
    Future works may reduce locking overhead by holding the lock during the execution.
}

\subsection{Safe Static Execution}
\label{sec:async-static}




Apart from accessing inactive variables, users should be able to explore active variables if the exploration only reads those variables statically without modification. Such exploration is said to be a \emph{static execution} over a \emph{static code}.

Rather than laying the burden on users to declare static code accurately, we should automatically and safely check for static codes.
In other languages such as Rust~\cite{rustscoperules,rustownership} and C++~\cite{cppconst}, one can determine the static code through reference annotations; however, this is neither supported nor common practice in dynamically typed languages like Python.
Alternatively, we can instrument every object to watch for object mutations and acquire locks appropriately; nevertheless, such fine-grained instrumentation is excessively expensive to implement. 

\paragraph{Allowlist-based static code checker} \system thus checks for static code using allowlist-based static code checker (ASCC), which utilizes an allowlist of patterns in the abstract syntax tree (AST) that represents static code. Beyond syntactic information, ASCC also utilizes types presented in the namespace at runtime, allowing \system to recognize more static codes.
With an allowlist of rules, ASCC parses out the AST from a given code and traverses through AST nodes. At each node, ASCC searches for a rule in the allowlist that matches the node's subtree, e.g., subtrees corresponding to \texttt{print(x)}, \texttt{df.head()} (where \texttt{df} is a \texttt{DataFrame}), or \texttt{np.mean(arr)}.
If a node does not match with any rule, the code is declared non-static.

While the allowlist is extensible by domain experts like the users (e.g., to include static user-defined functions), \system pre-populates the allowlist with definitely static rules such as printing a string, calculating a summation \texttt{sum(...)}, and \texttt{DataFrame.head()}.
\section{Discussion}
\label{sec:other}

\fix[R1O1,R1D4]{
    This section analyzes the \system's
    correctness (\cref{sec:other-correctness}), optimality (\cref{sec:other-optimality}), and stability (\cref{sec:other-stability}), as well as discussing generalizability to other environments (\cref{sec:other-generalize}) and implementation details (\cref{sec:other-implement}).
}

\cut{
    \subsection{Algorithm Costs}
    \label{sec:other-costs}
    
    As a storage system, \system consumes computation time and storage space. During each saving, \system's podding incurs the most dominant worst-case time complexity $O(|\calU| + |\calE|)$ in compatibility with Pickle protocols. Podding invokes \pglga and volatility model $\lambda$ at most $O(|\calU|)$ times. The active variable filter incurs $O(|\calU| + |\calV| \alpha(|\calU|))$ where $\alpha$ is the slowly growing inverse Ackermann function, following the complexity of union-find data structures~\cite{DBLP:journals/jacm/Tarjan75DisjointSet}.
    Remaining components incur minor computation time costs.
    \system's storage space and I/O usage can be as high as Pickle's in the worst case (when every execution mutates every object) or as low as a serialized size of the initial namespace (when no executions mutate any objects).
    In terms of memory complexity, on top of Pickle's memory usage, \system holds $O(k_{\text{cap}} + |\calV|)$ of data in memory for active variable filter, change detector, and active variable locking where $k_{\text{cap}}$ is the thesaurus capacity.
}

\fix[R1O1,R1D4]{
    \subsection{Correctness Guarantee}
    \label{sec:other-correctness}

    \begin{theorem}
        $\text{Ser}(\text{Unpod}(\text{Pod}(\calG))) = \text{Ser}(\calG)$, establishing that \system’s podding–unpodding preserves the original object graph.
    \end{theorem}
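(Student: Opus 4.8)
The plan is to reduce the claim to the round-trip correctness of the base protocol, namely $\text{Deser}(\text{Ser}(\calG)) = \calG$, which holds for Pickle and its extensions (\cref{sec:background-equal}). Concretely, I would show that $\text{Unpod}(\text{Pod}(\calG))$ reconstructs an \datamodel $\calG'$ that is serialization-equal to $\calG$, i.e., $\text{Ser}(\calG') = \text{Ser}(\calG)$; the theorem then follows by substituting $\calG' = \text{Unpod}(\text{Pod}(\calG))$. The whole argument rests on a single bridging construction: a \emph{flattening} map $\text{Flat}$ that inlines a podded representation (the per-pod byte streams together with the page-offset metadata $\{\delta_i\}$) back into one monolithic byte stream by replacing every cross-pod virtual memo ID with the local reference it denotes.

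First I would establish the \textbf{flattening lemma}: $\text{Flat}(\text{Pod}(\calG)) = \text{Ser}(\calG)$. Podding runs the same depth-first reduction traversal as the base protocol and differs only in how references are encoded: within-pod references keep their natural-number memo IDs, while cross-pod references are rewritten to virtual memo IDs (global memo ID plus $2^{31}$) and later resolved through \Cref{eq:global-virtual-memo-id}. The key sub-fact is that the global-memo-ID assignment is \emph{injective} on object references: since each pod allocates disjoint pages $[\delta_i, \delta_i + B)$ and the decode map $m_{\text{global}}$ is a bijection between virtual IDs and allocated global IDs, no two distinct objects collide and every reference resolves to exactly one object. Because both serializations encounter objects in the identical traversal order, flattening recovers precisely the local memo-ID stream the monolithic protocol would have emitted, giving the lemma.

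Next I would prove that \textbf{unpodding realizes $\text{Deser} \circ \text{Flat}$} by structural induction over the prior \podgraph $\calG_p$ in reverse-topological order. In the base case, a pod with no outgoing cross-pod edges is an ordinary Pickle byte stream over local memo IDs, so $\text{Unpod}$ deserializes it exactly as $\text{Deser}$ would, by consistency of deserialization (\cref{sec:background-equal}). In the inductive step, whenever unpodding meets a virtual memo ID $\geq 2^{31}$ it resolves the referenced pod, whose object has already been reconstructed by the induction hypothesis and memoized; substituting that resolved reference is exactly what $\text{Deser}$ does on the flattened stream at the corresponding local memo ID. Composing with the flattening lemma yields $\text{Unpod}(\text{Pod}(\calG)) = \text{Deser}(\text{Flat}(\text{Pod}(\calG))) = \text{Deser}(\text{Ser}(\calG)) = \calG$, and applying $\text{Ser}$ to both sides closes the proof.

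The hard part will be the inductive step in the presence of \emph{shared and cyclic} cross-pod references (e.g., the shared object $u_9$ and the self-reference on $u_5$ in \Cref{fig:datamodel-example}). There I must argue that a shared object split across pod boundaries is deserialized exactly once and that all virtual references to it resolve to the \emph{same} reconstructed object, so that the aliasing and cycle structure of $\calG$ are preserved rather than duplicated or broken. This reduces to a \textbf{memo-consistency invariant}: the memoization used by $\text{Unpod}$ must be keyed by global memo ID uniformly across pods, mirroring the single memo table the base protocol maintains, so that reverse-topological reconstruction never re-instantiates an already-built object and never dangles on a forward cycle edge. Once this invariant is pinned down, the remaining cases are routine.
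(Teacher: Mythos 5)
Your proposal reaches the same conclusion by the same two-part skeleton as the paper's proof---(1) within-pod content is preserved because each pod is deserialized from its stored bytes, and (2) across-pod information is preserved because inter-pod references are recorded at podding time and replayed at unpodding---but it is a genuinely more detailed route. The paper's proof is a terse decomposition that asserts both parts directly and compares serialized representations; yours mediates the argument through an explicit bridging operator with the factorizations $\text{Flat}\circ\text{Pod}=\text{Ser}$ and $\text{Unpod}=\text{Deser}\circ\text{Flat}$, pins down exactly where the virtual memo space of \cref{eq:global-virtual-memo-id} enters (injectivity of the disjoint page allocation), and confronts shared and cyclic references head-on, which the paper's phrase ``records all inter-pod references exactly \dots and replays them'' glosses over. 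What the paper's version buys in exchange is economy and a weaker assumption: by arguing at the level of serialized representations it only needs consistency of $\text{Deser}$ (\cref{sec:background-equal}), never a round-trip identity.

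Three repairs are needed before your outline is sound. First, the flattening lemma as literal byte equality $\text{Flat}(\text{Pod}(\calG))=\text{Ser}(\calG)$ is too strong: a monolithic Pickle stream uses one memo counter, and its opcodes encode memo indices whose values (and even widths, e.g., one-byte versus four-byte get/put forms) differ from the per-pod local IDs and the $2^{31}$-offset virtual IDs, so the flattened stream cannot equal $\text{Ser}(\calG)$ byte-for-byte; state the lemma up to memo renumbering, or semantically as $\text{Deser}(\text{Flat}(\text{Pod}(\calG)))=\text{Deser}(\text{Ser}(\calG))$. Second, reverse-topological induction over \podgraph is undefined precisely in the case you flag as hard: cross-pod cycles (e.g., mutual references spanning the pods containing $u_5$ and $u_9$ in \cref{fig:datamodel-example}) make the order nonexistent, so your memo-consistency invariant---which is the right fix, mirroring Pickle's single memo table with registration before children are constructed---must carry the whole induction, taken over deserialization steps rather than over a pod ordering. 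Third, your closing step invokes $\text{Deser}(\text{Ser}(\calG))=\calG$ and then applies $\text{Ser}$ to both sides; this round-trip identity is stronger than the paper's stated consistency guarantee, and unnecessary: conclude $\text{Ser}(\calG')=\text{Ser}(\calG)$ directly within serialization equality, as the paper does.
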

    \begin{proof}
        The serialized representation $\text{Ser}(\calG)$ is consisted of (1) information \emph{within} each pod and (2) information \emph{across} pods. Let $\calG' = \text{Unpod}(\text{Pod}(\calG))$, for the within part, podding produces subgraphs of $\calG$, and during unpodding, each subgraph in $\calG'$ is obtained by deserializing the corresponding stored subgraph from $\calG$; thus each subgraph in $\calG'$ is serialized-equivalent to its counterpart in $\calG$. For the across part, \system records all inter-pod references exactly during podding and replays them during unpodding, ensuring that these references in $\calG'$ match those in $\calG$. Since both the internal content of each subgraph and the inter-subgraph references are identical in $\calG'$ and $\calG$, we have $\text{Ser}(\calG') = \text{Ser}(\calG)$, proving that $\text{Ser}(\text{Unpod}(\text{Pod}(\calG))) = \text{Ser}(\calG).$
    \end{proof}
}

\fix[R1O1,R1D4]{
    \subsection{Approximate Optimality}
    \label{sec:other-optimality}
    
    \begin{theorem}
        Let $\calL^{\pglga}$ be \pglga's cost and $L^{*}$ be the optimal cost, $\calL^{\pglga} \leq \left( 1 + \alpha \right) \calL^{*}$ where $\alpha = \min \left\{ c_{\text{pod}} / (2 \gamma), \sqrt{c_{\text{pod}} / (16 \mu_s \mu_\lambda)} \right\}$ is defined by average sized volatility $\gamma = \frac{1}{|\calU|} \sum_u s(u) \lambda(u)$, average size $\mu_s = \frac{1}{n} \sum_u s(u)$, and average volatility $\mu_\lambda = \frac{1}{|\calU|} \sum_u \lambda(u)$.
    \end{theorem}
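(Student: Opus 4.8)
The plan is to bound the ratio $\calL^{\pglga}/L^{*}$ by controlling \pglga's cost from above and the optimum from below, and then to match the right upper and lower bounds so that each of the two quantities inside the $\min$ emerges from a distinct regime.

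First I would extract the \emph{greedy invariant} implicit in \Cref{algo:lga}: comparing \Cref{eq:delta-bundle} and \Cref{eq:delta-split}, \pglga bundles an object $u$ into the current pod $u_p$ exactly when $s(u_p)\lambda(u) + s(u)\lambda(u_p) < c_{\text{pod}}$, and splits otherwise. This gives tight control on the marginal cross-cost charged by every bundle. Writing the objective (\Cref{eq:lga-cost}) as $\calL(\calU_p;\calG) = k\,c_{\text{pod}} + n\gamma + \Omega(\calU_p)$, where $k$ is the number of pods, $n\gamma = \sum_u s(u)\lambda(u)$ is the \emph{diagonal} cross term that is \emph{identical for every partition}, and $\Omega$ collects the off-diagonal interactions $\sum_{u_p}\sum_{u\neq v\in u_p} s(u)\lambda(v)$, I would telescope $\Omega$ over the incremental construction of each pod. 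Each bundle contributes its marginal, which is $< c_{\text{pod}}$ by the invariant, so $\Omega^{\pglga} < (n-k^{G})\,c_{\text{pod}}$, giving the clean upper bound $\calL^{\pglga} < n\,c_{\text{pod}} + n\gamma$: \pglga never exceeds the split-all cost, and the additive gap over the optimum satisfies $\calL^{\pglga} - L^{*} < (n-1)\,c_{\text{pod}}$.

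Next I would establish two complementary lower bounds on $L^{*}$. The \emph{diagonal bound} $L^{*} \geq c_{\text{pod}} + n\gamma$ holds because every feasible partition has at least one pod and its cross term is at least the invariant $n\gamma$; pairing it with the gap above yields the first term $\alpha_1 \propto c_{\text{pod}}/\gamma$. The \emph{balance bound} comes from the AM--GM trade-off between the pod-overhead term and the cross term: for any partition $L = k\,c_{\text{pod}} + \mathrm{cross} \geq 2\sqrt{c_{\text{pod}}\cdot k\cdot \mathrm{cross}}$, which, combined with a convexity/rearrangement estimate relating $\mathrm{cross}$ to the total size $n\mu_s$ and total volatility $n\mu_\lambda$, produces a bound of the form $L^{*} \gtrsim \sqrt{c_{\text{pod}}\,\mu_s\mu_\lambda}$ and hence the second term $\alpha_2 \propto \sqrt{c_{\text{pod}}/(\mu_s\mu_\lambda)}$. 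Because both inequalities $\calL^{\pglga}\le(1+\alpha_1)L^{*}$ and $\calL^{\pglga}\le(1+\alpha_2)L^{*}$ would hold unconditionally, the claimed bound with $\alpha=\min\{\alpha_1,\alpha_2\}$ follows at once. Throughout I would invoke the tree-partition structure from the hardness reduction (\cref{sec:podding-optimization-problem}) to argue that it suffices to reason about pods as contiguous subtrees, so that \pglga's one-pass DFS decisions and $L^{*}$ live in the same combinatorial space.

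The main obstacle I anticipate is pinning down the exact constants ($\tfrac12$ in $\alpha_1$ and $\tfrac{1}{16}$ in $\alpha_2$) rather than their orders of magnitude, and two issues drive this. First, the additive gap must be sharpened by an \emph{amortized charging} argument: since \pglga commits to bundle/split decisions irreversibly in a single serialization pass, I cannot directly exchange its pods with those of $L^{*}$, and must instead charge each greedy split against the $\ge c_{\text{pod}}$ cross-cost that bundling would have incurred, which is where a factor of two is recovered. Second, the balance bound requires the cross term of $L^{*}$ to be governed by $\mu_s\mu_\lambda$ even though an adversarial optimum may \emph{anti-correlate} pod sizes and volatilities to shrink $\mathrm{cross}$ toward the diagonal $n\gamma$; reconciling this with the $\sqrt{\mu_s\mu_\lambda}$ form needs a careful convexity argument and is the most delicate step. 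I would defer the full constant-tracking to the technical report and keep the proof here at the level of the two matched bounds.
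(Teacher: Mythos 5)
Your overall route matches the paper's architecture: bound the additive gap $\calL^{\pglga} - \calL^{*}$ by a multiple of $c_{\text{pod}}$, divide once by the diagonal lower bound $\calL^{*} \geq c_{\text{pod}} + \sum_u s(u)\lambda(u)$ and once by the balanced lower bound $\calL^{*} \geq 2\sqrt{c_{\text{pod}} \left(\sum_u s(u)\right)\left(\sum_u \lambda(u)\right)}$, then take the minimum in the large-$n$ limit; your first paragraph in fact reproduces the paper's crude bound $\calL^{\pglga} \leq n c_{\text{pod}} + \sum_u s(u)\lambda(u)$ exactly. The genuine gap is the factor of two, which you correctly flag but whose repair you misidentify. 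The stated constants ($2\gamma$ and $16$) require the halved gap $\calL^{\pglga} - \calL^{*} \leq \frac{n-1}{2} c_{\text{pod}}$, which the paper obtains from \cref{lemma:lgatwoapprox}, $\calL^{\pglga} \leq \frac{1}{2}\left(\calL^{*} + \calL^{\text{split}}\right)$, proved \emph{not} by charging but by a hybrid-solution argument in the style of the deterministic double greedy of Buchbinder et al.~\cite{buchbinder2015tight}: define interpolants $OPT_i$ agreeing with the greedy on the first $i$ decisions and with the optimum afterwards, show per step that $f(OPT_{i-1}) - f(OPT_i) \geq f(X_i) - f(X_{i-1})$ (the delicate case, greedy bundles while the optimum splits, uses that the hybrid's parent pod contains the greedy's parent pod, so its size and volatility are at least as large), and telescope to $2\calL^{\pglga} \leq \calL^{*} + \calL^{\text{split}}$. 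Your proposed ``amortized charging''---charging each greedy split against the $\geq c_{\text{pod}}$ cross-cost bundling would have incurred---only compares the greedy against its own counterfactual in its own pod context; that is precisely the local-optimality invariant you already used in paragraph one, and it cannot relate greedy splits to the optimum's differently-shaped pods without the hybrid interpolation you explicitly declined to perform. As written, your argument yields $\alpha_1 = c_{\text{pod}}/\gamma$ and $\alpha_2 = \sqrt{c_{\text{pod}}/(4 \mu_s \mu_\lambda)}$, each a factor of two short of the theorem.

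On the balance bound, your anti-correlation worry is well placed, but be aware the paper does not resolve it either: \cref{eq:fracpodbound} comes from a fractional relaxation that \emph{assumes} each of $m$ pods receives a $1/m$ share of both total size and total volatility, and an adversarial partition placing all size in one pod and all volatility in another drives the cross term down to the diagonal, so that inequality is not valid without a uniformity assumption---the ``careful convexity argument'' you hoped to find is simply absent. Two smaller points: the theorem's constants hold only after taking $n \to \infty$ (finite-$n$ versions carry factors such as $\frac{n-1}{2 + 2\gamma n}$ and $\frac{n-1}{4n}$), a step your sketch omits; and your AM--GM form $k\, c_{\text{pod}} + \mathrm{cross} \geq 2\sqrt{c_{\text{pod}}\, k \cdot \mathrm{cross}}$ would still need the cross term bounded below per pod count, which runs into the same anti-correlation obstruction.
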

    \begin{proof}
        The relative error is derived from the ratio of a lower bound of the optimal cost $\calL^{*}$ and a upper of bound \pglga cost $\calL^{\pglga}$. For a non-empty graph, any podding solution must incur at least $c_{\text{pod}} + \sum_u s(u) \lambda(u) \leq \calL^{*} \leq \calL^{\pglga}$. A upper bound is derived from \cref{lemma:lgatwoapprox}, similarly to DeterministicUSM's approximation~\cite{buchbinder2015tight}. 

        \begin{lemma} \label{lemma:lgatwoapprox}
            \pglga has a worst case cost: $\calL^{\pglga} \leq \frac{1}{2} (\calL^{*} + \calL^{\text{split}})$.
        \end{lemma}
        
        The deferred proof considers the minimization of the submodular dual $f = \calL^{\text{split}} - \calL \geq 0$ where $\calL^{\text{split}} = |\calU| (c_{\text{pod}} + \gamma)$.
    \end{proof}
}

\fix[R1O1,R1D4]{
    \subsection{Podding Stability}
    \label{sec:other-stability}

    We define \emph{podding stability} as the similarity in pod assignments between two consecutive executions of the podding algorithm on overlapping object sets. Let $\calU_1$ and $\calU_2$ be two sets of objects. Given two mappings of podding decisions $A_1: \calU_1 \rightarrow \left\{ \text{split}, \text{bundle}, \text{split-final} \right\}$ and similarly $A_2$, define podding similarity between $A_1$ and $A_2$ as
    \begin{equation}
        \text{Sim}(A_1,A_2) \;=\; \frac{\left|\left\{u \in \calU_1 \cap \calU_2 \;|\; A_1(u) = A_2(u) \right\}\right|}{\left|\calU_1 \cap \calU_2\right|}
    \end{equation}
    
    Under \pglga, the podding is \emph{stable}, $\text{Sim}(A_{i}, A_{i+1}) = 1$.
    Because \pglga memoizes past decisions, it acts as if all objects in $\calU_{i}$ arrive first, followed by all objects in $\calU_{i+1} \setminus \calU_{i}$.
    In this order, \pglga---a one-pass deterministic algorithm---reproduces exactly the same decisions for $\calU_{i} \cap \calU_{i+1} \subseteq \calU_{i}$ in both runs, yielding $\text{Sim}(A^{\pglga}_{i}, A^{\pglga}_{i+1}) = 1$.
}

\fix[R1O3,R1D5]{
    \subsection{Generalizability and Portability}
    \label{sec:other-generalize}

    \system’s abstractions and architecture are designed to generalize beyond the Python runtime and port to other language ecosystems. As discussed in \cref{sec:background-datamodel} and \cref{sec:background-equal}, \system requires two key properties from the host environment: (1) a serialization protocol that can express object structure and shared references, and (2) code execution locality. These requirements are satisfied by many modern runtimes. For example, Java and Rust are natural targets due to their built-in serialization protocols~\cite{javaioSerializable,serdeRust}, well-defined object identity, and controlled memory models. C++ is also compatible in principle, but requires programmer cooperation---specifically, disciplined use of pointers and the adoption of sufficiently generic serialization. 
    Python is an ideal starting point for \system due to its popularity, persistence needs, extensibility, and generic built-in serialization (e.g., Pickle), which made it a natural environment to prototype and evaluate our ideas.
}

\subsection{Implementation}
\label{sec:other-implement}

We implement \system in Python. Our podding inherits from Dill~\cite{dill1,dill2} as the serialization base for its extensive coverage of all objects in our dataset; however, we have also tested \system on Pickle~\cite{pickle} and Cloudpickle~\cite{cloudpickle} with similar results. For this evaluation, we set the thesaurus size to 1~GB (much larger than needed in experiments), $c_{\text{pod}}$ to 1200, and $\texttt{MAX_POD_DEPTH}$ to 3.

We train a LightGBM~\cite{lightgbmlib} as the composable volatility model $\lambda$ on \buildats, \storesfg, and \itsttime by bootstrapping the podding process to collect the mutation object samples. We obtain over 470k object samples where 7.3\% of those are mutating objects.

\begin{table}[t]
  \caption{Real notebooks: five benchmark notebooks for evaluation and three training notebooks for volatility model (\cref{sec:podding-optimization-model}).}
  \label{tab:notebooks}
  \vspace{-2mm}
  \footnotesize
  \begin{tabular}{llrrr}
      \toprule
      \textbf{Notebook} & \textbf{Topic} & \textbf{Dataset} & \textbf{\# Ckpts} & \textbf{\# Objects} \\
      
      \midrule

      \skltweet~\cite{skltweet} & Sentiment analysis & 185~MB & 44 & 155 K \\
      \aicode~\cite{aicode} & EDA codes and comments & 4.2~GB & 21 & 20 M \\
      \agripred~\cite{agripred} & Drought Image classification & 6.8~GB & 15 & 214 \\
      \msciedaw~\cite{msciedaw} & EDA single-cell integration & 27~GB & 28 & 424 K \\
      \ecomsmph~\cite{ecomsmph} & E-commerce data mining & 14~GB & 41 & 35 M \\

      \midrule
      
      \buildats~\cite{buildats} & Asset trading strategy & 303~MB & 42 & \fix{210 K} \\
      \storesfg~\cite{storesfg} & Time series forecasting & 119~MB & 38 & 90 K \\
      \itsttime~\cite{itsttime} & Sport forecasting & 163~MB & 54 & 161 K \\
      \bottomrule

      


      
  \end{tabular}
  \vspace{\undercaptionspace}
\end{table}

\section{Experiments}
\label{sec:exp}

This section empirically evaluates \system and its components on real notebooks \fix{with our reproducibility package published~\footnote{\fix{\podurl}}}. Our experiment pipeline shows that:

\vspace{1mm}
\noindent
\fix{\textbf{(i) End-to-end Performance.} \system offers drastically faster and more compact saving (\cref{sec:exp-storage,sec:exp-save,sec:exp-compression}) and loading target variables (\cref{sec:exp-load}). We further analyzes \system performance on parameterized benchmarks \cref{sec:exp-mutation,sec:exp-scalability} to reveal its versatility.}

\vspace{1mm}
\noindent
\fix{\textbf{(ii) Generalizability and Internals.} \system can handle real-world workloads well due to the combination of its internal components. To study their contributions, we perform ablation studies including those for podding (\cref{exp2:podding}), podding optimization (\cref{sec:exp-podoptimize}), asynchronous saving (\cref{exp4:async} and \cref{exp4:ascc-accuracy}), and storage layer (\cref{sec:exp-graph-storage}).}

\paragraph{Setup} Our testbed is a NUMA machine with 2$\times$ AMD EPYC 7552, 1~TB RAM, and about 800~GB of free disk space (SSD) \fix{to accommodate \texttt{ecomsmph} experiments on inefficient baselines. Nonetheless, all experiments on \system are possible with at least 12~GB of free disk space (not including dataset) and 32~GB of RAM. We} run all experiments under a Ubuntu 22.04 Docker container running Python 3.12 without GIL\footnote{\url{https://github.com/colesbury/nogil-3.12-final}} to enable thread-based parallelism.

\begin{table}[t]
  \caption{Real Python scripts in our dataset with their topics, dataset sizes, and numbers of checkpoints and objects.}
  \label{tab:scripts}
  \vspace{-2mm}
  \footnotesize
  \begin{tabular}{llrrr}
      \toprule
      \textbf{Script} & \textbf{Topic} & \textbf{Dataset} & \textbf{\# Ckpts.} & \textbf{\# Objects} \\
      
      \midrule

      \netmnist~\cite{netmnist} & Digit classification & 116~MB & 43 & 2.7 K \\
      \rlactcri~\cite{rlactcri} & Reinforcement learning & N/A & 256 & 94 K \\
      \vaenet~\cite{vaenet} & Image auto-encoding & 116~MB & 15 & 3.1 K \\
      \tseqpred~\cite{tseqpred} & Time sequence prediction & 24~MB & 46 & 5.2 K \\
      \wordlang~\cite{wordlang} & Language modeling & 12.3~MB & 83 & 35 K \\
      \bottomrule
  \end{tabular}
  \vspace{\undercaptionspace}
\end{table}

\paragraph{Dataset} Our dataset consists of eight real computational notebooks (\Cref{tab:notebooks}) and five Python scripts (\Cref{tab:scripts}). The testing notebook dataset includes \fix{\skltweet, \aicode, \agripred, \msciedaw, and \ecomsmph, ranging from hundreds to tens of millions of objects, from exploratory data analysis (EDA) to image classification.}.
\cut{
    (1) \emph{\skltweet}: a notebook classifying sentiments given natural language tweets using scikit-learn.
    (2) \emph{\aicode}: an exploratory data analysis notebook to analyze cell codes, comments, and notebooks.
    (3) \emph{\agripred}: a drought prediction based on satellite images using a convolution neural network.
    (4) \emph{\msciedaw}: an exploratory notebook over a multimodal single-cell integration dataset.
    And, (5) \emph{\ecomsmph}: a data mining over e-commerce smartphone products.
}
\fix{Three notebooks with relatively fewer objects and smaller state sizes are held out as training notebooks for the volatility model: \buildats, \storesfg, and \itsttime.}
On the other hand, we retrieve real Python scripts \fix{\netmnist, \rlactcri, \vaenet, \tseqpred, and \wordlang} from the PyTorch showcasing repository\footnote{\url{https://github.com/pytorch/examples}} 
for their completeness, conciseness, and diversity in data science topics and insert checkpoints to save all variables in global and local namespaces.
\cut{
    They include
    (1) \emph{\netmnist}: a script training convolutional neural network to classify digit images.
    (2) \emph{\rlactcri}: an actor-critic reinforcement learning agent solving the Cart Pole problem.
    (3) \emph{\vaenet}: a variational auto-encoding~\cite{kingma2013auto} learning to discover an embedding space that describes an image dataset.
    (4) \emph{\tseqpred}: a time sequence prediction using long short-term memory on a synthetic dataset.
    (5) \emph{\wordlang}: a transformer language modeling on Wikitext-2~\cite{merity2016pointer}.
}

\def\allnbs{\skltweet, \aicode, \agripred, \msciedaw, and \ecomsmph}


\paragraph{Baselines} We include several object store baselines to verify the effectiveness of \system and its techniques. \emph{\dill} saves snapshots of the namespace after cell executions serialized by Dill~\cite{dill1,dill2}. On partial loading, it loads the entire snapshot and selects only target variables. As a part of the Python standard library, \emph{\shev}~\cite{shelve} naturally offers persistent namespace through its dictionary interface. To handle loading past states, we save versioned variables as entries \texttt{<tid>:<variable\_name>} in a single \shev database.
\cut{
    Note \shev incorrectly handles shared references across variables, as it duplicates the shared objects upon loading. For example, given \texttt{x = []; y = x}, \shev stores the empty list twice; after restoring through \shev, \texttt{y.append(1)} would incorrectly not append to \texttt{x} as it should. On the other hand, ZODB~\cite{zodb} correctly links shared references across entries~\cite{zodbpersistref}.
}
\fix{For ZODB~\cite{zodb}, two} approaches are valid to enable time traveling: \emph{\zosp} and \emph{\zodb}. In \zosp, we save variables as entries where different versions of variables are stored under separate database paths. Instead of the separation, in \zodb, we store different versions all under one database path and load past variables through the historical connection feature~\cite{zodbhist}.
Finally, \emph{\criu}~\cite{criu} is a middleware checkpoint/restore system. To checkpoint Python namespace, we use \criu to save a forked Python process having the target namespace and then load the persisted process to extract target variables.
\cut{
    We refer to each baseline including our \system as a system under test (SUT).
}
\fix{We limit the maximum storage usage to 768~GB and terminate early when the system exceeds the limit.}

\def\allsuts{\dill, \shev, \zosp, \zodb, \criu, and \system}



\def\subfigwidth{0.49\linewidth}
\def\subfigheight{32mm}

\begin{figure*}[t]
    \centering
    \begin{subfigure}[b]{\linewidth} \centering
        \begin{tikzpicture}
        \begin{axis}[
            ybar,
            ticks=none,
            height=20mm,
            width=\linewidth,
            hide axis,
            xmin=10,  
            xmax=50,
            ymin=0,
            ymax=0.4,
            area legend,
            legend columns=-1,
            legend style={at={(0.0,0.0)},anchor=south,align=center,/tikz/every even column/.append style={column sep=3mm},nodes={scale=0.65, transform shape}},
        ]
            \node[align=center, opacity=1] {
                \addlegendimage{bardill}
                \addlegendentry{\dill}
                \addlegendimage{barshev}
                \addlegendentry{\shev}
                \addlegendimage{barzosp}
                \addlegendentry{\zosp}
                \addlegendimage{barzodb}
                \addlegendentry{\zodb}
                \addlegendimage{barcriu}
                \addlegendentry{\criu}
                \addlegendimage{barpga}
                \addlegendentry{\pga (Ours)}
            };
        \end{axis}
        \end{tikzpicture}
    \end{subfigure}
    \begin{subfigure}[b]{\subfigwidth}
        \begin{tikzpicture}
        \begin{axis}[
            height=\subfigheight,
            width=1.0\linewidth,
            ybar,
            bar width=0.08,
            axis lines=left,
            ylabel=Storage (GB),
            ymajorgrids,
            ymode=log,
            ymin=0.1,
            ymax=1500,
            log origin=infty,
            enlarge x limits=0.15,
            ytick={0.01, 0.1, 1, 10, 100, 1000, 10000},
            xtick={data},
            xticklabels from table={\tableexpi}{nb},
            nodes near coords,
            point meta=explicit symbolic,
            nodes near coords style={font=\scriptsize, align=left, rotate=90, anchor=west},
            legend columns=2,
            legend style={at={(0.03,1.0)},anchor=north west},
            label style={font=\normallabelsize},
            every tick label/.append style={font=\normallabelsize},
        ]
            \addplot[bardill] table [x=index,y=dill_storage_gb] {\tableexpi};
            \addplot[barshev] table [x=index,y=shev_storage_gb] {\tableexpi};
            \addplot[barzosp] table [x=index,y=zosp_storage_gb] {\tableexpi};
            \addplot[barzodb] table [x=index,y=zodb_storage_gb] {\tableexpi};
            \addplot[barcriu] table [x=index,y=criu_storage_gb] {\tableexpi};
            \addplot[barpga] table [x=index,y=pga_storage_gb,meta=pga_rel_storage_gb] {\tableexpi};
        \end{axis}
        \end{tikzpicture}
        \vspace{-4mm}
        \caption{Total storage usage (notebooks)}
        \label{fig:exp_i_storage}
    \end{subfigure}
    \begin{subfigure}[b]{\subfigwidth}
        \begin{tikzpicture}
        \begin{axis}[
            height=\subfigheight,
            width=1.0\linewidth,
            ybar,
            bar width=0.08,
            axis lines=left,
            ylabel=Storage (GB),
            ymajorgrids,
            ymode=log,
            ymin=0.001,
            ymax=150,
            log origin=infty,
            enlarge x limits=0.15,
            ytick={0.001, 0.01, 0.1, 1, 10, 100, 1000, 10000, 100000},
            xtick={data},
            xticklabels from table={\tableexpiscript}{nb},
            nodes near coords,
            point meta=explicit symbolic,
            nodes near coords style={font=\scriptsize, align=left, rotate=90, anchor=west},
            legend columns=2,
            legend style={at={(0.03,1.0)},anchor=north west},
            label style={font=\normallabelsize},
            every tick label/.append style={font=\normallabelsize},
        ]
            \addplot[bardill] table [x=index,y=dill_storage_gb] {\tableexpiscript};
            \addplot[barshev] table [x=index,y=shev_storage_gb] {\tableexpiscript};
            \addplot[barzosp] table [x=index,y=zosp_storage_gb] {\tableexpiscript};
            \addplot[barzodb] table [x=index,y=zodb_storage_gb] {\tableexpiscript};
            \addplot[barcriu] table [x=index,y=criu_storage_gb] {\tableexpiscript};
            \addplot[barpga] table [x=index,y=pga_storage_gb,meta=pga_rel_storage_gb] {\tableexpiscript};
        \end{axis}
        \end{tikzpicture}
        \vspace{-4mm}
        \caption{Total storage usage (scripts)}
        \label{fig:exp_v_storage}
    \end{subfigure}
    \vspace{-3mm}
    \caption{\pga stores all variables with 5.7--36.5$\times$ smaller storage on notebooks and 1.4--29.3$\times$ smaller on scripts than the best baselines. The plots show the total storage required when saving all variables in the namespace at different points in time.}
    \label{fig:exp_storage}
    \vspace{-3mm}
    \vspace{\undercaptionspace}
\end{figure*}
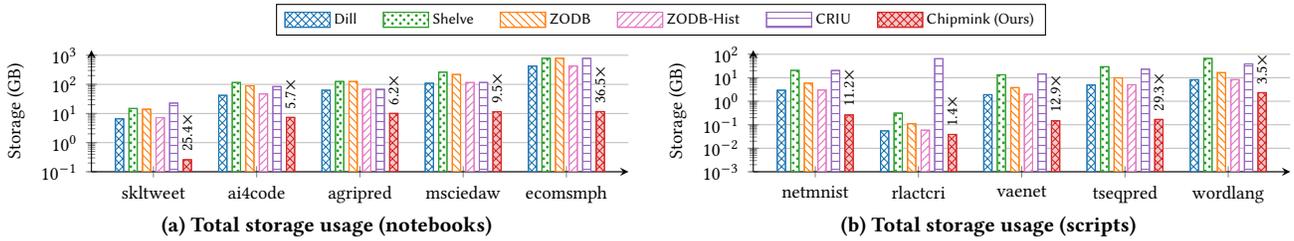

\input{figures/3_experiment/iiii_ecdf}

\def\subfigwidth{1.0\linewidth}
\def\subfigheight{28mm}

\begin{figure}
    \centering
    \begin{tikzpicture}
    \begin{axis}[
        height=\subfigheight,
        width=0.7\linewidth,
        xbar stacked,
        bar width=0.35,
        axis lines=left,
        xlabel=Fraction of Saving Time (\%),
        xlabel shift={-1mm},
        xmajorgrids,
        xminorgrids,
        minor x tick num=1,
        xmin=0.0,
        xmax=105,
        enlarge y limits=0.15,
        xtick={0, 25, 50, 75, 100},
        xticklabels={0\%, 25\%, 50\%, 75\%, 100\%},
        ytick={data},
        yticklabels from table={\tableexpcbd}{nb},
        nodes near coords,
        point meta=explicit symbolic,
        nodes near coords style={align=left, rotate=90, anchor=west},
        area legend,
        legend columns=1,
        legend style={
            at={(1.05, 0.5)},anchor=west,
            /tikz/every even column/.append style={column sep=2mm},
            legend style={nodes={scale=0.65, transform shape}},
        },
        legend cell align={left},
        label style={font=\normallabelsize},
        every tick label/.append style={font=\normallabelsize},
    ]
        \addplot[barone] table [y=index,x=pickle] {\tableexpcbd};
        \addlegendentry{Pickle}

        \addplot[bartwo] table [y=index,x=podding] {\tableexpcbd};
        \addlegendentry{Podding}
        
        \addplot[barthree] table [y=index,x=podding_decision] {\tableexpcbd};
        \addlegendentry{\pglga}

        \addplot[barfour] table [y=index,x=io] {\tableexpcbd};
        \addlegendentry{I/O}
    \end{axis}
    \end{tikzpicture}
    \vspace{-4mm}
    \caption{\fix[R3W2,R3E3]{Stepwise breakdown of \system's saving time.}}
    \label{fig:exp_cbd}
    \vspace{\undercaptionspace}
\end{figure}

\subsection{\system Lowers Storage Requirements}
\label{sec:exp-storage}

This experiment demonstrates whether \system lowers the storage space required to store namespace states.
\cut{
    For notebooks, we alternatively execute a cell and save the namespace on the SUT. For scripts, we execute the script's code with predefined checkpoint statements. Afterward, we measure the \emph{total storage used by the SUT (in bytes)} as Objective 1's metrics.
    We limit the maximum storage usage to 768~GB and terminate early when an SUT exceeds the limit.
}

\paragraph{More compact storage} \system reduces the storage usages to 5.7--36.5$\times$ smaller than the best baseline on notebooks and 1.4--29.3$\times$ smaller on scripts (\Cref{fig:exp_i_storage}).
Most impressively, users can checkpoint \ecomsmph through \system using only 12~GB of disk space while they need to set aside 426~GB of free disk space when using the best baseline (\dill).
A higher storage saving compared to baselines (e.g., in \ecomsmph and \tseqpred) indicate that the notebook/script mutates a small part of the object graph, and vice versa (e.g., in \aicode and \rlactcri).
On the other hand, \shev uses the most space across datasets, because of its poor handling of shared references resulting in both incorrect and duplicate data. For example, in \msciedaw, \shev breaks the reference from \texttt{analyze\_multiome\_x} to \texttt{cell\_summary}, resulting in an inconsistent state after loading where \texttt{analyze\_multiome\_x} does not update \texttt{cell\_summary} as it should.


\fix[R3W2,R3E3]{
    \paragraph{Effect of mutation rate.} 
    We further analyze which types of tasks benefit most from \system by grouping our benchmarks by their estimated mutation rates, i.e., the fraction of objects modified per checkpoint. 
    At very low mutation rates ($<2\%$), such as in \ecomsmph (0.3\%), \tseqpred (1.2\%), and \skltweet (1.7\%), only small batches or short-lived tensors change per step. In these cases, \system achieves the largest gains---up to $36.5\times$ over the best baseline---because pods precisely capture fine-grained dispersed updates that existing storages cannot. 
    At low but broader mutation rates (2--10\%), including \vaenet (4.6\%), \netmnist (6.7\%, and \msciedaw (7.3\%), model weights and partial dataset scans are updated between checkpoints; here, \system still consistently outperforms baselines by one order of magnitude. 
    For medium mutation rates (10--15\%), as in \agripred (10\%), and \aicode (13\%), more of the graph is touched (e.g., convolutional feature maps or recomputed EDA statistics), reducing relative savings but still showing clear efficiency improvements. 
    Finally, at higher mutation rates ($>15\%$), such as in \wordlang (27\%) and \rlactcri (70\%), most objects are updated between checkpoints (language model activations, reinforcement learning buffers), where \system’s advantage shrinks but remains non-trivial (1.4--3.5$\times$). 
    These results demonstrate that \system is effective for ML training and data analysis tasks with dispersed, fine-grained updates, while retaining meaningful efficiency even when large portions of the object graph are modified.
}

\def\subfigheight{30mm}

\begin{figure}[t]
    \centering
    \begin{subfigure}[b]{0.35\linewidth}
        \begin{tikzpicture}
        \begin{axis}[
            height=\subfigheight,
            width=1.0\linewidth,
            ybar,
            bar width=0.1,
            axis lines=left,
            ylabel=Storage (GB),
            ymajorgrids,
            xmin=0.5,
            xmax=1.5,
            ymode=log,
            ymin=1.0,
            ymax=300,
            log origin=infty,
            ytick={0.01, 0.1, 1, 10, 100, 1000, 10000},
            xtick={1},
            xticklabels={},
            nodes near coords,
            point meta=explicit symbolic,
            nodes near coords style={font=\scriptsize, align=left, rotate=90, anchor=west},
            legend columns=4,
            legend style={
                at={(0.03,1.0)},anchor=north west,
                legend style={nodes={scale=0.65, transform shape}},
            },
            label style={font=\normallabelsize},
            every tick label/.append style={font=\normallabelsize},
        ]
            \addplot[barone] table [x=index,y=snp_storage_gb] {\tableexpi};
            \addplot[bartwo] table [x=index,y=snz_storage_gb] {\tableexpi};
            \addplot[barthree] table [x=index,y=snx_storage_gb] {\tableexpi};
            \addplot[barpga] table [x=index,y=pga_storage_gb] {\tableexpi};
            \addplot[barpgaz] table [x=index,y=pgaz_storage_gb] {\tableexpi};
        \end{axis}
        \end{tikzpicture}
        \vspace{-5mm}
        \caption{Total storage size}
        \label{fig:exp_i_compress_storage}
    \end{subfigure}
    \begin{subfigure}[b]{0.63\linewidth}
        \begin{tikzpicture}
        \begin{axis}[
            height=\subfigheight,
            width=0.55\linewidth,
            ybar,
            bar width=0.1,
            axis lines=left,
            ylabel=Save Time (s),
            ymajorgrids,
            xmin=0.5,
            xmax=1.5,
            ymode=log,
            ymin=10.0,
            ymax=3000,
            log origin=infty,
            ytick={0.1, 1, 10, 100, 1000, 10000, 100000},
            xtick={1},
            xticklabels={},
            nodes near coords,
            point meta=explicit symbolic,
            nodes near coords style={font=\scriptsize, align=left, rotate=90, anchor=west},
            area legend,
            legend columns=1,
            legend style={
                at={(1.05,0.5)},anchor=west,
                legend style={nodes={scale=0.65, transform shape}},
            },
            legend cell align={left},
            label style={font=\normallabelsize},
            every tick label/.append style={font=\normallabelsize},
        ]
            \addplot[barone] table [x=index,y=snp_avg_save_s] {\tableexpi};
            \addlegendentry{\snp}
            \addplot[bartwo] table [x=index,y=snz_avg_save_s] {\tableexpi};
            \addlegendentry{\snz}
            \addplot[barthree] table [x=index,y=snx_avg_save_s] {\tableexpi};
            \addlegendentry{\snx}
            \addplot[barpga] table [x=index,y=pga_avg_save_s] {\tableexpi};
            \addlegendentry{\pga}
            \addplot[barpgaz] table [x=index,y=pgaz_avg_save_s] {\tableexpi};
            \addlegendentry{\pgaz}
        \end{axis}
        \end{tikzpicture}
        \vspace{-5mm}
        \caption{Average saving time}
        \label{fig:exp_i_compress_save}
    \end{subfigure}
    \vspace{-3mm}
    \caption{\fix[R2W1,R2D1]{Compression helps but \system remains necessary to efficiently store massive object graphs (e.g., \skltweet).}}
    \label{fig:exp_i_compress}
    \vspace{\undercaptionspace}
\end{figure}

\subsection{\system Saves Objects Quickly}
\label{sec:exp-save}

Next, we test whether \system reduces the perceived saving latency. Here we measure the perceived saving latency by timing all execution delays including the times to synchronously save and acquire locks (if any).
\cut{
    To evaluate Objective 2, we show the \emph{empirical cumulative distribution function (eCDF) of perceived saving times (in seconds)} which signifies the distribution of delays the user would experience on top of normal executions.
}



\input{figures/3_experiment/i_partial_load}

\def\subfigwidth{0.49\linewidth}
\def\subfigheight{30mm}

\pgfplotsset{
    ploadplot/.style={
        height=\subfigheight,
        width=\linewidth,
        axis lines=left,
        xlabel=Mutation Fraction (\%),
        ymajorgrids,
        minor y tick num=1,
        ymin=0.0,
        minor x tick num=1,
        label style={font=\smalllabelsize},
        every tick label/.append style={font=\smalllabelsize},
    },
}
\tikzset{
    ploaddill/.style={%
        draw=dillcolor,
        mark=x,
        mark options={
            draw=dillcolor,
            fill=dillcolorlight,
            fill opacity=0.0,
            scale=0.75
        }
    },
    >=LaTeX
}
\tikzset{
    ploadshev/.style={%
        draw=shevcolor,
        mark=triangle*,
        mark options={
            draw=shevcolor,
            fill=shevcolorlight,
            fill opacity=0.0,
            scale=0.75
        }
    },
    >=LaTeX
}
\tikzset{
    ploadzosp/.style={%
        draw=zospcolor,
        mark=diamond*,
        mark options={
            draw=zospcolor,
            fill=zospcolorlight,
            fill opacity=0.0,
            scale=0.75
        }
    },
    >=LaTeX
}
\tikzset{
    ploadzodb/.style={%
        draw=zodbcolor,
        mark=square*,
        mark options={
            draw=zodbcolor,
            fill=zodbcolorlight,
            fill opacity=0.0,
            scale=0.75
        }
    },
    >=LaTeX
}
\tikzset{
    ploadcriu/.style={%
        draw=criucolor,
        mark=pentagon*,
        mark options={
            draw=criucolor,
            fill=criucolorlight,
            fill opacity=0.0,
            scale=0.75
        }
    },
    >=LaTeX
}
\tikzset{
    ploadpga/.style={%
        thick,
        draw=pgacolor,
        mark=*,
        mark options={
            draw=pgacolor,
            fill=pgacolor,
            scale=0.75
        }
    },
    >=LaTeX
}

\begin{figure}[t]
    \centering
    \begin{subfigure}[b]{\linewidth} \centering
        \begin{tikzpicture}
        \begin{axis}[
            ticks=none,
            height=20mm,
            width=\linewidth,
            hide axis,
            xmin=10,  
            xmax=50,
            ymin=0,
            ymax=0.4,
            legend columns=3,
            legend style={at={(0.0,0.0)},anchor=south,align=center,/tikz/every even column/.append style={column sep=3mm},nodes={scale=0.65, transform shape}},
        ]
            \node[align=center, opacity=1] {
                \addlegendimage{ploaddill}
                \addlegendentry{\dill}
                \addlegendimage{ploadshev}
                \addlegendentry{\shev}
                \addlegendimage{ploadzosp}
                \addlegendentry{\zosp}
                \addlegendimage{ploadzodb}
                \addlegendentry{\zodb}
                \addlegendimage{ploadcriu}
                \addlegendentry{\criu}
                \addlegendimage{ploadpga}
                \addlegendentry{\pga (Ours)}
            };
        \end{axis}
        \end{tikzpicture}
    \end{subfigure}
    \begin{subfigure}[b]{\subfigwidth}
        \begin{tikzpicture}
        \begin{axis}[
            ploadplot,
            ylabel=Storage (GB),
            ymax=32,
        ]
            \addplot[ploaddill] table [x=tml,y=dill_storage_gb] {\tableexpviimr};
            \addplot[ploadshev] table [x=tml,y=shev_storage_gb] {\tableexpviimr};
            \addplot[ploadzosp] table [x=tml,y=zosp_storage_gb] {\tableexpviimr};
            \addplot[ploadzodb] table [x=tml,y=zodb_storage_gb] {\tableexpviimr};
            \addplot[ploadcriu] table [x=tml,y=criu_storage_gb] {\tableexpviimr};
            \addplot[ploadpga] table [x=tml,y=pga_storage_gb] {\tableexpviimr};
            
            \draw[thick, color=black, dotted] 
            (axis cs:6.5199, 0.0) -- (axis cs:6.5199, 25);
            \node[align=left, anchor=south west] (nblabel) at (axis cs:0, 23) {\scriptsize Average mutation rate};
        \end{axis}
        \end{tikzpicture}
        \vspace{-2mm}
        \caption{Total storage size}
        \label{fig:exp_vii_mr_storage}
    \end{subfigure}
    \begin{subfigure}[b]{\subfigwidth}
        \begin{tikzpicture}
        \begin{axis}[
            ploadplot,
            ylabel=Saving Time (s),
            ymax=70,
        ]
            \addplot[ploaddill] table [x=tml,y=dill_avg_save_s] {\tableexpviimr};
            \addplot[ploadshev] table [x=tml,y=shev_avg_save_s] {\tableexpviimr};
            \addplot[ploadzosp] table [x=tml,y=zosp_avg_save_s] {\tableexpviimr};
            \addplot[ploadzodb] table [x=tml,y=zodb_avg_save_s] {\tableexpviimr};
            \addplot[ploadcriu] table [x=tml,y=criu_avg_save_s] {\tableexpviimr};
            \addplot[ploadpga] table [x=tml,y=pga_avg_save_s] {\tableexpviimr};
            
            \draw[thick, color=black, dotted] 
            (axis cs:6.5199, 0.0) -- (axis cs:6.5199, 53);
            \node[align=left, anchor=south west] (nblabel) at (axis cs:0, 50) {\scriptsize Average mutation rate};
        \end{axis}
        \end{tikzpicture}
        \vspace{-2mm}
        \caption{Average saving time}
        \label{fig:exp_vii_mr_save}
    \end{subfigure}
    \vspace{-3mm}
    \caption{Storage and save time when the notebook mutates 1~GB of data over 10 cells at varied rates. The dotted lines display the mutation fraction averaged over 5 real notebooks.}
    \label{fig:exp_vii_mr}
    \vspace{\undercaptionspace}
\end{figure}
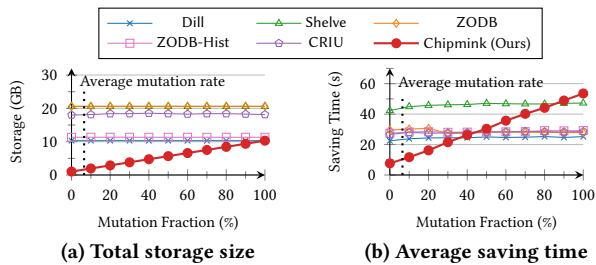

\paragraph{Lower perceived saving latency} eCDF plots (\Cref{fig:exp_iv_ecdf}) describe (1) proportions of saving latencies above/below a length of time, (2) latency percentiles, and (3) total latency speedups. For example, \system users would not perceive any saving latency beyond 23 seconds (75th percentile think time~\cite{DBLP:journals/debu/XinPTWGHJP21thinktime}) on \skltweet, \agripred, and \msciedaw, but would on 7/21 and 4/41 cell executions on \aicode and \ecomsmph respectively.
In all notebooks, half of the cell executions are not blocked by \system more than 1 second, as opposed to the median saving latencies between 3 seconds (\agripred) and 16 minutes (\ecomsmph) of the best baseline.
Measurements on Python scripts reveal similar patterns; however, \system's podding overhead can be observed when the namespace is small (e.g., \netmnist and \vaenet) or when most variables are connected (e.g., \wordlang through its token dictionary).

\fix[R3W2,R3E3]{
    \paragraph{Latency breakdown}
    \Cref{fig:exp_cbd} shows a stepwise breakdown of \system's time cost. 
    The dominant costs arise from Python Pickle’s serialization (e.g., 83.7\% in \ecomsmph, 99.9\% in \agripred) and from \pglga (e.g., 65.3\% in \aicode), where the latter is dominated by evaluating the volatility model $\lambda$ over millions of objects. 
    Other steps such as I/O and index maintenance are negligible ($\leq 5\%$).
    These results suggest clear optimization opportunities: a more lightweight volatility model, parallelized object serialization, or more efficient serializers could further reduce latency.
}

\fix[R2W1,R2D1]{
    \subsection{\system and Byte-Level Compression}
    \label{sec:exp-compression}

    While byte-level delta techniques such as LZ4~\cite{ziv1977universallz77} and Xdelta~\cite{macdonald2000filexdelta} can help reduce storage size, they do not address the inefficiency of storing massive evolving object graphs. As shown in \Cref{fig:exp_i_compress}, \system remains necessary to quickly and compactly capture fine-grained object deltas. Nonetheless, \system is fully compatible with compression: applying LZ4 to pod data further reduces storage footprint. In any case, compression's effect on saving time varies depending on its overhead and I/O reduction.
}

\subsection{\system Loads Target Variables Quickly}
\label{sec:exp-load}

This experiment tests whether \system can load target variables quickly. After saving all states, we randomly select a target cell ID one by one and request loading the variables accessed in the cell.
\cut{
    We measure the \emph{loading time per target cell ID (in seconds)} as Objective 3's metric which signifies the time the user waits to load the state.
}

\paragraph{Faster loading} By avoiding loading irrelevant variables, \system loads target variables with less delay than \dill, \zodb, \zosp, and \criu (\Cref{fig:exp_i_partial_load}) whose loading times depend on namespace sizes rather than requested variable sizes.
\shev performs just as fast; however, users should beware of \shev's broken shared references.


\def\subfigwidth{0.49\linewidth}
\def\subfigheight{29mm}

\pgfplotsset{
    ploadplot/.style={
        height=\subfigheight,
        width=\linewidth,
        axis lines=left,
        ymajorgrids,
        minor y tick num=1,
        minor x tick num=1,
        xlabel shift=-1mm,
        label style={font=\smalllabelsize},
        every tick label/.append style={font=\smalllabelsize},
    },
}
\tikzset{
    ploaddill/.style={%
        draw=dillcolor,
        mark=x,
        mark options={
            draw=dillcolor,
            fill=dillcolorlight,
            fill opacity=0.0,
            scale=0.75
        }
    },
    >=LaTeX
}
\tikzset{
    ploadshev/.style={%
        draw=shevcolor,
        mark=triangle*,
        mark options={
            draw=shevcolor,
            fill=shevcolorlight,
            fill opacity=0.0,
            scale=0.75
        }
    },
    >=LaTeX
}
\tikzset{
    ploadzosp/.style={%
        draw=zospcolor,
        mark=diamond*,
        mark options={
            draw=zospcolor,
            fill=zospcolorlight,
            fill opacity=0.0,
            scale=0.75
        }
    },
    >=LaTeX
}
\tikzset{
    ploadzodb/.style={%
        draw=zodbcolor,
        mark=square*,
        mark options={
            draw=zodbcolor,
            fill=zodbcolorlight,
            fill opacity=0.0,
            scale=0.75
        }
    },
    >=LaTeX
}
\tikzset{
    ploadcriu/.style={%
        draw=criucolor,
        mark=pentagon*,
        mark options={
            draw=criucolor,
            fill=criucolorlight,
            fill opacity=0.0,
            scale=0.75
        }
    },
    >=LaTeX
}
\tikzset{
    ploadexhaust/.style={%
        draw=exhaustcolor,
        mark=+,
        mark options={
            draw=exhaustcolor,
            fill=exhaustcolorlight,
            fill opacity=0.0,
            scale=0.75
        }
    },
    >=LaTeX
}
\tikzset{
    ploadpga/.style={%
        thick,
        draw=pgacolor,
        mark=*,
        mark options={
            draw=pgacolor,
            fill=pgacolor,
            scale=0.75
        }
    },
    >=LaTeX
}

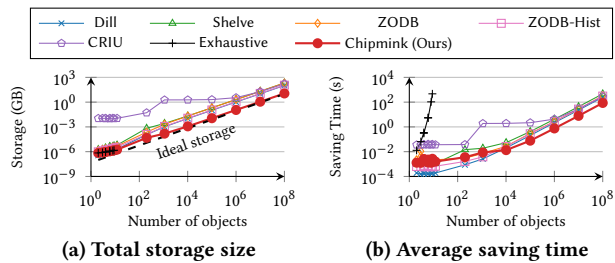
\begin{figure}[t]
    \centering
    \begin{subfigure}[b]{\linewidth} \centering
        \begin{tikzpicture}
        \begin{axis}[
            ticks=none,
            height=20mm,
            width=\linewidth,
            hide axis,
            xmin=10,  
            xmax=50,
            ymin=0,
            ymax=0.4,
            legend columns=4,
            legend style={at={(0.0,0.0)},anchor=south,align=center,/tikz/every even column/.append style={column sep=3mm},nodes={scale=0.65, transform shape}},
        ]
            \node[align=center, opacity=1] {
                \addlegendimage{ploaddill}
                \addlegendentry{\dill}
                \addlegendimage{ploadshev}
                \addlegendentry{\shev}
                \addlegendimage{ploadzosp}
                \addlegendentry{\zosp}
                \addlegendimage{ploadzodb}
                \addlegendentry{\zodb}
                \addlegendimage{ploadcriu}
                \addlegendentry{\criu}
                \addlegendimage{ploadexhaust}
                \addlegendentry{\exhaust}
                \addlegendimage{ploadpga}
                \addlegendentry{\pga (Ours)}
            };
        \end{axis}
        \end{tikzpicture}
    \end{subfigure}
    \begin{subfigure}[b]{\subfigwidth}
        \begin{tikzpicture}
        \begin{axis}[
            ploadplot,
            xlabel=Number of objects,
            ylabel=Storage (GB),
            xmode=log,
            ymode=log,
            xmax=100000000,
            xmin=1,
            ymax=1000,
            ymin=0.000000001,
            xtick={1, 100, 10000, 1000000, 100000000},
            ytick={0.000000001, 0.000001, 0.001, 1.0, 1000.0},
        ]

            
            \addplot[ploaddill] table [x=nsobj,y=dill_storage_gb] {\tableexpvi};
            \addplot[ploadshev] table [x=nsobj,y=shev_storage_gb] {\tableexpvi};
            \addplot[ploadzosp] table [x=nsobj,y=zosp_storage_gb] {\tableexpvi};
            \addplot[ploadzodb] table [x=nsobj,y=zodb_storage_gb] {\tableexpvi};
            \addplot[ploadcriu] table [x=nsobj,y=criu_storage_gb] {\tableexpvi};
            \addplot[ploadpga] table [x=nsobj,y=pga_storage_gb] {\tableexpvi};
            \addplot[ploadexhaust] table [x=nsobj,y=exhaust_storage_gb] {\tableexpvi};
            \draw[thick, dashed, color=cZmain] 
            (axis cs:2, 0.0000001) -- node[below, sloped, text=cZmain, font=\scriptsize] {Ideal storage} (axis cs:100000100, 10);
        \end{axis}
        \end{tikzpicture}
        \vspace{-2mm}
        \caption{Total storage size}
        \label{fig:exp_vi_storage}
    \end{subfigure}
    \begin{subfigure}[b]{\subfigwidth}
        \begin{tikzpicture}
        \begin{axis}[
            ploadplot,
            xlabel=Number of objects,
            ylabel=Saving Time (s),
            xmode=log,
            ymode=log,
            xmax=100000000,
            xmin=1,
            ymax=10000,
            ymin=0.0001,
            xtick={1, 100, 10000, 1000000, 100000000},
            ytick={0.0001, 0.01, 1.0, 100.0, 10000.0},
        ]
            \addplot[ploaddill] table [x=nsobj,y=dill_avg_save_s] {\tableexpvi};
            \addplot[ploadshev] table [x=nsobj,y=shev_avg_save_s] {\tableexpvi};
            \addplot[ploadzosp] table [x=nsobj,y=zosp_avg_save_s] {\tableexpvi};
            \addplot[ploadzodb] table [x=nsobj,y=zodb_avg_save_s] {\tableexpvi};
            \addplot[ploadcriu] table [x=nsobj,y=criu_avg_save_s] {\tableexpvi};
            \addplot[ploadpga] table [x=nsobj,y=pga_avg_save_s] {\tableexpvi};
            \addplot[ploadexhaust] table [x=nsobj,y=exhaust_avg_save_s] {\tableexpvi};
        \end{axis}
        \end{tikzpicture}
        \vspace{-2mm}
        \caption{Average saving time}
        \label{fig:exp_vi_save}
    \end{subfigure}
    \vspace{-3mm}
    \caption{Storage and save time as namespace size scales.}
    \label{fig:exp_vi}
    \vspace{\undercaptionspace}
\end{figure}

\subsection{\system Captures Partial Changes}
\label{sec:exp-mutation}

This experiment examines whether \system's storage usage and saving times grow proportionally to object mutations. It synthetically creates a notebook with 100 lists containing 100k byte strings, each of size 100 bytes (total namespace size is roughly 1~GB). Over the next 9 cells, the notebook mutates a fraction of the lists varied from 0\% to 100\%.
\cut{
    Similarly to \cref{sec:exp-storage} and \cref{sec:exp-save}, we measure the total storage usage (in bytes) and the average saving time (in seconds) as the mutation fraction changes.
}

\paragraph{Closely captured object mutations} As expected, \system's storage usage and saving time match with mutation fractions (\cref{fig:exp_vii_mr}).
As objects mutate more from 0\% to 100\%, \system's storage usage scales from 1~GB (the original namespace size) to 10~GB (all objects always change), converging to snapshotting methods like \dill and \zodb (\Cref{fig:exp_vii_mr_storage})
For this namespace structure, \system's saving time is faster than baselines when about 35\% of objects mutate and is slower than \dill when more objects mutate (\Cref{fig:exp_vii_mr_save}).

\cut{
    \paragraph{Real notebook characteristics}
    Mutation fractions in real notebooks are located in the left ends of \Cref{fig:exp_vii_mr}:
    \skltweet, \aicode, \agripred, \msciedaw, and \ecomsmph mutates 1.7\%, 13\%, 10\%, 7.3\%, 0.31\% of the objects respectively (6.5\% on average).
}

\def\subfigwidth{0.49\linewidth}
\def\subfigheight{32mm}
\def\xprandtwittnet{2.89}
\def\xpnvaicode{0.785}
\def\xpnvecomsmph{3.785}
\def\xprandecomsmph{3.89}
\def\yfailtop{40}
\def\yfailtopsave{300}

\def\idealskltweet{0.21216163699999946}
\def\idealaicode{2.143397773999993}
\def\idealagripred{5.028596495000002}
\def\idealmsciedaw{6.283188761999998}
\def\idealecomsmph{10.941427868000005}

\tikzset{
    idealline/.style={%
        thick,
        color=cZmain,
    },
    >=LaTeX
}

\begin{figure*}
    \centering
    \begin{subfigure}[b]{\linewidth} \centering
        \begin{tikzpicture}
        \begin{axis}[
            ybar,
            ticks=none,
            height=20mm,
            width=\linewidth,
            hide axis,
            xmin=10,  
            xmax=50,
            ymin=0,
            ymax=0.4,
            area legend,
            legend columns=-1,
            legend style={at={(0.0,0.0)},anchor=south,align=center,/tikz/every even column/.append style={column sep=3mm},nodes={scale=0.65, transform shape}},
        ]
            \node[align=center, opacity=1] {
                \addlegendimage{barone}
                \addlegendentry{\bundle}
                \addlegendimage{bartwo}
                \addlegendentry{\pnv}
                \addlegendimage{barthree}
                \addlegendentry{\prand}
                \addlegendimage{barfour}
                \addlegendentry{\pfl}
                \addlegendimage{barfive}
                \addlegendentry{\pgz}
                \addlegendimage{barsix}
                \addlegendentry{\pgi}
                \addlegendimage{barmain}
                \addlegendentry{\pglga (Ours)}
            };
        \end{axis}
        \end{tikzpicture}
    \end{subfigure}
    \begin{subfigure}[b]{\subfigwidth}
        \begin{tikzpicture}
        \begin{axis}[
            height=\subfigheight,
            width=1.0\linewidth,
            ybar,
            bar width=0.057,
            axis lines=left,
            ylabel=Storage (GB),
            ymajorgrids,
            ymode=log,
            ymin=0.1,
            ymax=400,
            log origin=infty,
            enlarge x limits=0.15,
            ytick={0.01, 0.1, 1, 10, 100, 1000},
            xtick={data},
            xticklabels from table={\tableexpi}{nb},
            nodes near coords,
            point meta=explicit symbolic,
            nodes near coords style={align=left, rotate=90, anchor=west},
            legend columns=7,
            legend style={
                at={(0.02,1.0)},anchor=north west,
                legend style={nodes={scale=0.65, transform shape}},
            },
            legend cell align={left},
            label style={font=\normallabelsize},
            every tick label/.append style={font=\normallabelsize},
        ]
            \addplot[barone] table [x=index,y=pnb_storage_gb] {\tableexpi};
    
            \addplot[bartwo] table [x=index,y=pnv_storage_gb] {\tableexpi};
            \addplot[barthree] table [x=index,y=prand_storage_gb] {\tableexpi};
    
            \addplot[barfour] table [x=index,y=pfl_storage_gb] {\tableexpi};
    
            \addplot[barfive] table [x=index,y=pg0_storage_gb] {\tableexpi};
            \addplot[barsix] table [x=index,y=pg1_storage_gb] {\tableexpi};
    
            \addplot[barmain] table [x=index,y=pga_storage_gb] {\tableexpi};


            \draw[thick, dashed, color=cCmain] 
            (axis cs:\xpnvaicode, 0.01) -- (axis cs:\xpnvaicode, \yfailtop);
            \node[cCmain, align=center, rotate=0, anchor=south, font=\footnotesize] at (axis cs:\xpnvaicode, \yfailtop) {Failed};

            \draw[thick, dashed, color=cCmain] 
            (axis cs:\xpnvecomsmph, 0.01) -- (axis cs:\xpnvecomsmph, \yfailtop);
            \node[cCmain, align=right, rotate=0, anchor=south east, inner xsep=0pt, font=\footnotesize] at (axis cs:\xpnvecomsmph, \yfailtop) {Failed};
            \draw[thick, dashed, color=cBmain] 
            (axis cs:\xprandecomsmph, 0.01) -- (axis cs:\xprandecomsmph, \yfailtop);
            \node[cBmain, align=left, rotate=0, anchor=south west, inner xsep=0pt, font=\footnotesize] at (axis cs:\xprandecomsmph, \yfailtop) {Failed};

            \draw[idealline] (axis cs:-0.45, \idealskltweet) -- (axis cs:0.45, \idealskltweet);
            \draw[idealline] (axis cs:0.55, \idealaicode) -- (axis cs:1.45, \idealaicode);
            \draw[idealline] (axis cs:1.55, \idealagripred) -- (axis cs:2.45, \idealagripred);
            \draw[idealline] (axis cs:2.55, \idealmsciedaw) -- (axis cs:3.45, \idealmsciedaw);
            \draw[idealline] (axis cs:3.55, \idealecomsmph) -- (axis cs:4.45, \idealecomsmph);

        \end{axis}
        \end{tikzpicture}
        \vspace{-4mm}
        \caption{Total storage usage}
        \label{fig:exp_iii_storage}
    \end{subfigure}
    \begin{subfigure}[b]{\subfigwidth}
        \centering
        \begin{tikzpicture}
        \begin{axis}[
            height=\subfigheight,
            width=1.0\linewidth,
            ybar,
            bar width=0.057,
            axis lines=left,
            ylabel=Save Time (s),
            ymajorgrids,
            ymode=log,
            ymin=0.1,
            ymax=4000,
            log origin=infty,
            ytick={0.1, 1, 10, 100, 1000, 10000, 100000},
            enlarge x limits=0.15,
            xtick={data},
            xticklabels from table={\tableexpi}{nb},
            legend columns=7,
            legend style={
                at={(0.02,1.0)},anchor=north west,
                legend style={nodes={scale=0.65, transform shape}},
            },
            legend cell align={left},
            unbounded coords=jump,
            label style={font=\normallabelsize},
            every tick label/.append style={font=\normallabelsize},
        ]
            \addplot[barone] table [x=index,y=pnb_avg_save_s] {\tableexpi};
    
            \addplot[bartwo] table [x=index,y=pnv_avg_save_s] {\tableexpi};
            \addplot[barthree] table [x=index,y=prand_avg_save_s] {\tableexpi};
    
            \addplot[barfour] table [x=index,y=pfl_avg_save_s] {\tableexpi};
    
            \addplot[barfive] table [x=index,y=pg0_avg_save_s] {\tableexpi};
            \addplot[barsix] table [x=index,y=pg1_avg_save_s] {\tableexpi};
    
            \addplot[barmain] table [x=index,y=pga_avg_save_s] {\tableexpi};


            \draw[thick, dashed, color=cCmain] 
            (axis cs:\xpnvaicode, 0.01) -- (axis cs:\xpnvaicode, \yfailtopsave);
            \node[cCmain, align=center, rotate=0, anchor=south, font=\footnotesize] at (axis cs:\xpnvaicode, \yfailtopsave) {Failed};

            \draw[thick, dashed, color=cCmain] 
            (axis cs:\xpnvecomsmph, 0.01) -- (axis cs:\xpnvecomsmph, \yfailtopsave);
            \node[cCmain, align=right, rotate=0, anchor=south east, inner xsep=0pt, font=\footnotesize] at (axis cs:\xpnvecomsmph, \yfailtopsave) {Failed};
            \draw[thick, dashed, color=cBmain] 
            (axis cs:\xprandecomsmph, 0.01) -- (axis cs:\xprandecomsmph, \yfailtopsave);
            \node[cBmain, align=left, rotate=0, anchor=south west, inner xsep=0pt, font=\footnotesize] at (axis cs:\xprandecomsmph, \yfailtopsave) {Failed};
        \end{axis}
        \end{tikzpicture}
        \vspace{-4mm}
        \caption{Average saving time}
        \label{fig:exp_iii_save}
    \end{subfigure}
    \vspace{-3mm}
    \caption{\pglga is the most effective podding optimizer in discovering compact podding compared to naive methods (\bundle, \pnv, \prand), manually derived heuristic (\pfl), and \pglga with inaccurate volatility models (\pgz, \pgi). Thick horizontal lines indicate loose theoretical lower bounds of the optimal storage costs.
    The \exhaust baseline is studied in \cref{fig:exp_vi}.
    }
    \label{fig:exp_iii}
    \vspace{-4mm}
    \vspace{\undercaptionspace}
\end{figure*}
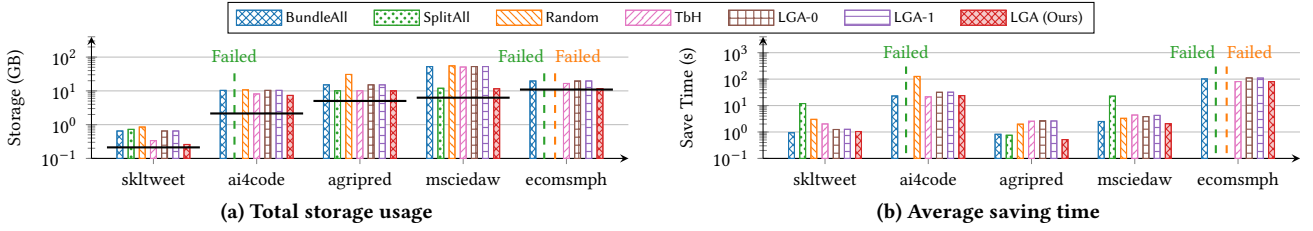

\def\subfigwidth{0.49\linewidth}
\def\subfigheight{30mm}

\begin{figure*}[t]
    \centering
    \begin{subfigure}[b]{0.44\linewidth}
        \begin{tikzpicture}
        \begin{axis}[
            height=\subfigheight,
            width=1.0\linewidth,
            ybar,
            bar width=0.12,
            axis lines=left,
            ylabel=Storage (GB),
            ymajorgrids,
            ymode=log,
            ymin=0.1,
            ymax=1500,
            log origin=infty,
            enlarge x limits=0.15,
            ytick={0.01, 0.1, 1, 10, 100, 1000, 10000},
            xtick={data},
            xticklabels from table={\tableexpii}{nb},
            nodes near coords,
            point meta=explicit symbolic,
            nodes near coords style={font=\scriptsize, align=left, rotate=90, anchor=west},
            legend columns=4,
            legend style={
                at={(0.03,1.0)},anchor=north west,
                legend style={nodes={scale=0.65, transform shape}},
            },
            label style={font=\normallabelsize},
            every tick label/.append style={font=\normallabelsize},
        ]
            \addplot[barzero] table [x=index,y=pgcache0noavf_storage_gb] {\tableexpii};
            \addplot[barone] table [x=index,y=pgnoavf_storage_gb,meta=pgnoavf_rel_storage_gb] {\tableexpii};
            \addplot[barthree] table [x=index,y=pgcache0_storage_gb,meta=pgcache0_rel_storage_gb] {\tableexpii};
            \addplot[barmain] table [x=index,y=pga_storage_gb,meta=pga_rel_storage_gb] {\tableexpii};
        \end{axis}
        \end{tikzpicture}
        \vspace{-4mm}
        \caption{Total storage size}
        \label{fig:exp_ii_avf_storage}
    \end{subfigure}
    \begin{subfigure}[b]{0.55\linewidth}
        \begin{tikzpicture}
        \begin{axis}[
            height=\subfigheight,
            width=0.8\linewidth,
            ybar,
            bar width=0.12,
            axis lines=left,
            ylabel=Save Time (s),
            ymajorgrids,
            ymode=log,
            ymin=0.1,
            ymax=90000,
            log origin=infty,
            enlarge x limits=0.15,
            ytick={0.1, 1, 10, 100, 1000, 10000, 100000},
            xtick={data},
            xticklabels from table={\tableexpii}{nb},
            nodes near coords,
            point meta=explicit symbolic,
            nodes near coords style={font=\scriptsize, align=left, rotate=90, anchor=west},
            area legend,
            legend columns=1,
            legend style={
                at={(1.05,0.5)},anchor=west,
                legend style={nodes={scale=0.65, transform shape}},
            },
            legend cell align={left},
            label style={font=\normallabelsize},
            every tick label/.append style={font=\normallabelsize},
        ]
            \addplot[barzero] table [x=index,y=pgcache0noavf_avg_save_s] {\tableexpii};
            \addlegendentry{\pgcacheznoavf}
            \addplot[barone] table [x=index,y=pgnoavf_avg_save_s,meta=pgnoavf_rel_avg_save_s] {\tableexpii};
            \addlegendentry{\pgnoavf}
            \addplot[barthree] table [x=index,y=pgcache0_avg_save_s,meta=pgcache0_rel_avg_save_s] {\tableexpii};
            \addlegendentry{\pgcachez}
            \addplot[barmain] table [x=index,y=pga_avg_save_s,meta=pga_rel_avg_save_s] {\tableexpii};
            \addlegendentry{\pga}
        \end{axis}
        \end{tikzpicture}
        \vspace{-4mm}
        \caption{Average saving time}
        \label{fig:exp_ii_avf_save}
    \end{subfigure}
    \vspace{-4mm}
    \caption{\fix[R1O2,R1D3]{Change detector (CD) and active variable filter (AVF) both contribute to storage savings and speedups.}}
    \label{fig:exp_ii_avf}
    \vspace{-3mm}
    \vspace{\undercaptionspace}
\end{figure*}
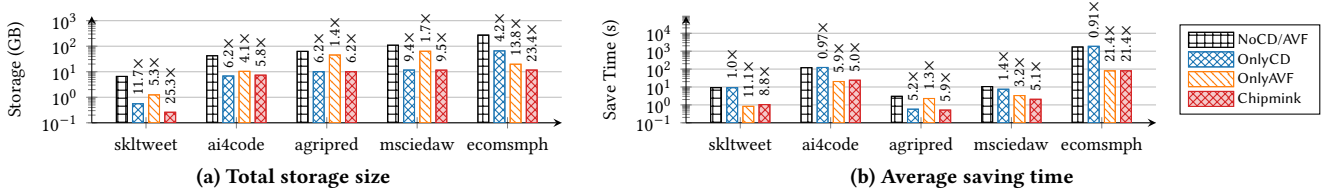

\subsection{\system Scales with Data Sizes}
\label{sec:exp-scalability}

This experiment studies how \system's storage and saving times grow as the number of objects increases.
It creates (1) a \emph{small-scale notebook} with 2 cells that randomly mutate \{1, 2, 3\} lists containing \{1, 2, 3\} byte strings, each of size 100 bytes, and (2) a \emph{normal-scale notebook} with 10 cells that randomly mutate 1\% of 100 lists containing \{1, 10, 100, \dots, 1M\} byte strings, each of size 100 bytes.
\cut{
    Similarly to \cref{sec:exp-storage} and \cref{sec:exp-save}, we measure the total storage usage (in bytes) and the average saving time (in seconds) as the number of objects varies.
}

\paragraph{Scaling with mutation rate} \system scales both performance dimensions linearly with the number of objects (\Cref{fig:exp_vi}).
At a small scale, we measure \system's metadata of around 600 bytes (\Cref{fig:exp_vi_storage}).
As object count increases, \system's storage usage converges to the ideal storage usage as we observe in \cref{sec:exp-mutation}.
In terms of saving time, \system has a 1.3~ms overhead, making it slower than \dill and \zodb at small scale (\Cref{fig:exp_vi_save}).
As object count increases, \system saves quicker than all baselines by leveraging inactive variables.
In this synthetic benchmark (1\% mutation), \system's throughput is around 1.16 million objects per second.

\paragraph{Optimality at small scale} Compared to the exhaustive search (i.e., trying all $2^{\text{decision}}$ combinations of podding decisions), \system finds $>99.19\%$ optimally compact solution made by the exhaustive search (\Cref{fig:exp_vi_storage}). Unfortunately, as podding is an NP problem, we can only afford to test the exhaustive search at a small scale; at 18 decisions, \exhaust takes 1.3 hours to execute (\Cref{fig:exp_vi_save}).

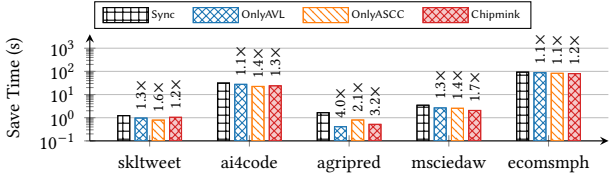
\begin{figure}[t]
    \centering
    \begin{tikzpicture}
    \begin{axis}[
        height=30mm,
        width=1.0\linewidth,
        ybar,
        bar width=0.12,
        axis lines=left,
        ylabel=Save Time (s),
        ymajorgrids,
        minor y tick num=1,
        ymode=log,
        ymin=0.1,
        ymax=4000,
        log origin=infty,
        enlarge x limits=0.15,
        ytick={0.1, 1, 10, 100, 1000},
        xtick={data},
        xticklabels from table={\tableexpiv}{nb},
        nodes near coords,
        point meta=explicit symbolic,
        nodes near coords style={font=\scriptsize, align=left, rotate=90, anchor=west},
        area legend,
        legend columns=4,
        legend style={
            at={(0.43,1.05)},anchor=south,
            /tikz/every even column/.append style={column sep=2mm},
            legend style={nodes={scale=0.5, transform shape}},
        },
        label style={font=\normallabelsize},
        every tick label/.append style={font=\normallabelsize},
    ]
        \addplot[barzero] table [x=index,y=pglnoavlstatic_avg_save_s] {\tableexpiv};
        \addlegendentry{\pglnoavlstatic}
        \addplot[barone] table [x=index,y=pgnostatic_avg_save_s,meta=pgnostatic_rel_avg_save_s] {\tableexpiv};
        \addlegendentry{\pgnostatic}
        \addplot[barthree] table [x=index,y=pgnoavl_avg_save_s,meta=pgnoavl_rel_avg_save_s] {\tableexpiv};
        \addlegendentry{\pgnoavl}
        \addplot[barmain] table [x=index,y=pga_avg_save_s,meta=pga_rel_avg_save_s] {\tableexpiv};
        \addlegendentry{\pga}
    \end{axis}
    \end{tikzpicture}
    \vspace{-6mm}
    \caption{\fix[R1O2,R1D3, R2W2,R2D2]{Asynchronous saving reduces saving time further with active variable locking (AVL) and allowlist-based static code checker (ASCC). \system applies both AVL and ASCC.}}
    \label{fig:exp_iiii_async_save}
    \vspace{\undercaptionspace}
\end{figure}

\subsection{Quality of Internal Podding Optimizer}
\label{sec:exp-podoptimize}

This experiment tests whether \system's podding optimizer \pglga can group objects during podding more compactly and quickly than 6 following alternative podding optimizers. (1) \bundle bundles all objects. (2) \pnv splits objects into their pod. (3) \prand randomly samples actions uniformly. (4) \pfa is a type-based heuristic tuned manually through extensive benchmarks
. (5) \pgz and (6) \pgi refer to the learned greedy algorithm (\pglga) with inaccurate volatility models $\lambda_{\text{\pgz}}(u) = 0$ and $\lambda_{\text{\pgi}}(u) = 1$.
\cut{
    Similarly to \cref{sec:exp-storage} and \cref{sec:exp-save}, we measure the total storage usage (in bytes) and the average saving time (in seconds).
}
Although the podding problem is an NP problem, we gauge the optimal storage requirement with a loose theoretical lower bound: the maximum namespace size of the notebook (in bytes).

\paragraph{Still more compact storage} \pglga consistently produces smaller storage solutions across all notebooks (\Cref{fig:exp_iii_storage}).
On some notebooks, \pnv and \pfa are competitive methods in terms of storage usage.
In addition, the gaps between \system and \pgz/\pgi suggest the importance of the volatility model.
Compared to the loose lower bounds of storage costs, \pglga finds near-optimal podding solutions on \skltweet and \ecomsmph while occupying $3.4\times$, $1.8\times$, and $2.0\times$ larger storage for \aicode, \agripred, and \msciedaw.

\paragraph{Still faster saving} Furthermore, \pglga makes decisions quickly without significant time overhead (\Cref{fig:exp_iii_save}).
\bundle is competitively fast in this regard because it reverts to picking altogether, but it remains slower due to ineffective podding.
Conversely, \pnv and \prand are much slower and sometimes time out since they split objects into too many pods for \system to process.




\input{figures/3_experiment/iiiiiiiiiiii_ascc}

\fix[R1O2,R1D3]{
    \subsection{Impact of CD and AVF}
    \label{exp2:podding}

    This experiment validates design decisions in \cref{sec:podding} where baselines \pgcacheznoavf, \pgnoavf, and \pgcachez accordingly enable and disable change detector (CD) and active variable filter (AVF). 
    By comparing to \pgcacheznoavf, \pgnoavf and \pgcachez saves 4.2--11.7$\times$ and 1.4--13.8$\times$ storage (\Cref{fig:exp_ii_avf_storage}) with no clear winner (e.g., \pgnoavf in \msciedaw, \pgcachez in \ecomsmph), suggesting that both play crucial roles in \system's effectiveness. Change detector helps skipping writing many synonymous pod bytes to storage, while active variable filter removes metadata overheads to manage synonymous pods.
    On the other hand, active variable filter (i.e., \pgcachez and \pga) effectively bypasses object serializations, and thus, is crucial for the reduction in overhead time 2.2--11.1$\times$ faster compared to \pgcacheznoavf. In this aspect, the change detector can only cut down the I/O time, smaller compared the time to serialize the massive number of objects in most notebooks, except \agripred. 
}





\fix[R1O2,R1D3, R2W2,R2D2]{
    \subsection{Impact of Asynchronous Saving}
    \label{exp4:async}

    To study the effectiveness of asynchronous saving techniques, we implement \system without asynchronous saving \pglnoavlstatic, and two ablation baselines \pgnostatic and \pgnoavl as \system with and without active variable locking (AVL) and allowlist-based static code checker (ASCC) described in \cref{sec:async}.
    Despite the quick succession of cell executions in our benchmark, both asynchronous saving via active variable locking (AVL) and allowlist-based static code checker (ASCC) are required to improve \system's distribution of saving times over synchronous saving \pglnoavlstatic (\Cref{fig:exp_iiii_async_save}).
}

\fix[R1O2,R1D3, R2W2,R2D2]{
    \subsection{ASCC Accuracy}
    \label{exp4:ascc-accuracy}

    We evaluate the allowlist-based static code checker (ASCC) across our benchmark notebooks by confirming the equality in serialized bytes (\Cref{tab:exp_cii}), only in cases when pickled bytes are deterministic (noted as numbers of stable serialization). 
    ASCC achieves 100\% precision in all cases, meaning it never produces false positives; this correctness guarantee ensures that any code cells it flags as static are indeed safe to execute with an ongoing saving. 
    However, recall varies across workloads, ranging from 50--100\%. For example, in \skltweet ASCC only identifies 57.1\%, leaving 42.9\% unrecognized for potential optimization.
    Overall, ASCC is conservative by design: it guarantees correctness by avoiding false positives, while false negatives highlight directions for extending the checker to better capture stability patterns in complex notebooks.
}

\fix[R2W3,R2D3]{
    \subsection{Comparison with Graph Storage}
    \label{sec:exp-graph-storage}

    This experiment highlights the importance of podding and our LGA podding algorithm for storage efficiency in graph object stores. Without delta identification, \pnjpickle consumes substantially more space, even though it can occasionally be smaller than \dill, because every save redundantly writes large portions of the object graph. As in our earlier results (\cref{sec:exp-podoptimize}), \system’s delta identification is necessary to significantly reduce storage over the naïve delta identification strategy (\pnjdelta) by grouping co-mutating objects. The choice of backend store makes only a small difference in space: \pganj uses at most 17\% more storage than \system (\pglga with a specialized key–value store). Saving latency is a bigger differentiator: Neo4j is currently slower by up to $104\times$, $59\times$, and $58\times$ for \pnjpickle, \pnjdelta, and \pganj respectively, compared to \dill, due to the cost of updating millions of objects or pods as individual nodes at each save. We expect this gap can be narrowed by improving Neo4j’s batch write interface and by implementing write-optimized data structures.
}

\def\subfigwidth{1.0\linewidth}
\def\subfigheight{30mm}
\def\xpnjpickleecomsmph{3.9}
\def\xpnjdeltaecomsmph{4}
\def\yfailtop{400}
\def\yfailtopsave{3000}

\tikzset{
    idealline/.style={%
        thick,
        color=cZmain,
    },
    >=LaTeX
}

\begin{figure}
    \centering
    \begin{subfigure}[b]{\linewidth} \centering
        \begin{tikzpicture}
        \begin{axis}[
            ybar,
            ticks=none,
            height=20mm,
            width=\linewidth,
            hide axis,
            xmin=10,  
            xmax=50,
            ymin=0,
            ymax=0.4,
            area legend,
            legend columns=-1,
            legend style={at={(0.0,0.0)},anchor=south,align=center,/tikz/every even column/.append style={column sep=2mm},nodes={scale=0.5, transform shape}},
        ]
            \node[align=center, opacity=1] {
                \addlegendimage{barone}
                \addlegendentry{\snp}
                \addlegendimage{bartwo}
                \addlegendentry{\pnjpickle}
                \addlegendimage{barthree}
                \addlegendentry{\pnjdelta}
                \addlegendimage{barfour}
                \addlegendentry{\pganj}
                \addlegendimage{barmain}
                \addlegendentry{\system}
            };
        \end{axis}
        \end{tikzpicture}
    \end{subfigure}
    \begin{subfigure}[b]{\subfigwidth}
        \begin{tikzpicture}
        \begin{axis}[
            height=\subfigheight,
            width=1.0\linewidth,
            ybar,
            bar width=0.057,
            axis lines=left,
            ylabel=Storage (GB),
            ymajorgrids,
            ymode=log,
            ymin=0.1,
            ymax=4000,
            log origin=infty,
            enlarge x limits=0.15,
            ytick={0.01, 0.1, 1, 10, 100, 1000},
            xtick={data},
            xticklabels from table={\tableexpci}{nb},
            nodes near coords,
            point meta=explicit symbolic,
            nodes near coords style={align=left, rotate=90, anchor=west},
            legend columns=7,
            legend style={
                at={(0.02,1.0)},anchor=north west,
                legend style={nodes={scale=0.65, transform shape}},
            },
            legend cell align={left},
            label style={font=\normallabelsize},
            every tick label/.append style={font=\normallabelsize},
        ]
            \addplot[barone] table [x=index,y=snpfsync_storage_gb] {\tableexpci};
    
            \addplot[bartwo] table [x=index,y=pnjpickle_storage_gb] {\tableexpci};
            \addplot[barthree] table [x=index,y=pnjdelta_storage_gb] {\tableexpci};
    
            \addplot[barfour] table [x=index,y=pganj_storage_gb] {\tableexpci};
    
            \addplot[barmain] table [x=index,y=pgafsync_storage_gb] {\tableexpci};


            \draw[thick, dashed, color=cCmain] 
            (axis cs:\xpnjpickleecomsmph, 0.01) -- (axis cs:\xpnjpickleecomsmph, \yfailtop);
            \node[cCmain, align=right, rotate=0, anchor=south east, inner xsep=0pt, font=\footnotesize] at (axis cs:\xpnjpickleecomsmph, \yfailtop) {Failed};
            \draw[thick, dashed, color=cBmain] 
            (axis cs:\xpnjdeltaecomsmph, 0.01) -- (axis cs:\xpnjdeltaecomsmph, \yfailtop);
            \node[cBmain, align=left, rotate=0, anchor=south west, inner xsep=0pt, font=\footnotesize] at (axis cs:\xpnjdeltaecomsmph, \yfailtop) {Failed};
        \end{axis}
        \end{tikzpicture}
        \vspace{-4mm}
    \end{subfigure}
    \vspace{-6mm}
    \caption{\fix[R1O2,R1D3, R2W2,R2D2]{Storage usage of graph object stores under different configurations: without delta identification (\pnjpickle), with naïve delta identification that splits all objects into separate pods (\pnjdelta), and with our \pglga (\pganj).}}
    \label{fig:exp_ci}
    \vspace{\undercaptionspace}
\end{figure}  

\cut{
    \subsection{Access Control Overhead}
    \label{sec:exp-overhead}
    
    Finally, we verify that \system's instrumentation (i.e., active variable tracker and locks) does not incur significant overhead, even if no saving is done.
    Given a notebook, we execute all cells and measure the execution times with and without \system. The execution time differences signify how much execution delay the user would perceive if they install \system but decide not to save any state.
    According to this benchmark, the instrumentation overhead is so small that there was no statistically significant runtime difference in four of five notebooks, i.e., $p \geq 0.05$ on Welch’s t-test. Only for \skltweet, we found a significant 4.6\% overhead with $p = 1.7 \times 10^{-10}$ (\Cref{fig:exp_c_overhead}).
}

\section{Related Works}
\label{sec:related}

This section lists and discusses related works to \system in the context of object storage and persistence for computational notebooks, temporal exploration, and podding optimization.


\paragraph{Python object storages}
Only a few Python object stores correctly persist interdependent objects but none implement a partial object store. \zosp~\cite{zodb,lerner2002forgezope} is a mature object storage that correctly handles object references~\cite{zodbpersistref} with many database features like historical connection and concurrent transactions.
Meanwhile, other object storages only support simplified object types. For example, \shev~\cite{shelve}, redis\_shelve\cite{redisshelve}, and Chest~\cite{chest} (no longer maintained) do not support shared references across entries. pySOS~\cite{pysos} and pickleDB~\cite{pickledb} (despite its name) only support JSON-able objects. Nonetheless, to the best of our knowledge, all of the existing works store snapshots of the objects in their entirety; \system is the first Python object store that partially saves and loads objects, making temporal exploration feasible.

\cut{
    \paragraph{Middleware persistence} Middleware persistence typically follows orthogonal persistence principles, allowing persistence with minimal language modifications and independence from data stores like databases or file systems.
    Examples include CORBA Persistent State Service~\cite{cobraspecification}, EJB's Container Managed Persistence (CMP)~\cite{ejbcmp}, Java Data Objects (JDO)~\cite{apachejdo}, and Pjama~\cite{atkinson2000persistencepjama}. 
    Unlike these systems that focus on improving persistence programming, \system focuses on reducing overheads for explicit persistence to make temporal exploration feasible while achieving high compatibility.
    Spark lets users explicitly declare persistence levels for Resilient Distributed Datasets~\cite{sparkrddpersist} but similarly relies on Pickle for Python objects.
    Many middleware solutions avoid time-consuming serialization by using alternatives like memory paging by bitmap and memory blasting~\cite{soukup1994taming}, file mapping~\cite{mmap}, persistent pointer~\cite{biliris1993makingpersistpointer,aritsugi1995severalpersistpointer}, and quasi-single page~\cite{soukup2014fundamentals}. 
    Most notably, CRIU~\cite{criu,criuincremental} can capture full process states, but lacks the object awareness of \system, leading to higher storage overheads. In persistent memory, middleware solutions support failure-atomicity through methods like universal construction~\cite{correia2020persistentcxpuc} or persistent software transactional memory (PSTM)~\cite{krauter2021persistenthaskell}, with concurrency controls like Atlas's locks~\cite{chakrabarti2014atlas} and Pisces' snapshot isolation~\cite{gu2019pisces}. While these approaches do not focus on temporal exploration, they propose relevant techniques that may improve \system’s asynchronous saving.
}

\cut{
    \def\subfigheight{30mm}

\begin{figure}[t]
    \centering
    \begin{subfigure}[b]{\linewidth} \centering
        \begin{tikzpicture}
        \begin{axis}[
            ybar,
            ticks=none,
            height=20mm,
            width=\linewidth,
            hide axis,
            xmin=10,  
            xmax=50,
            ymin=0,
            ymax=0.4,
            area legend,
            legend columns=-1,
            legend style={at={(0.0,0.0)},anchor=south,align=center,/tikz/every even column/.append style={column sep=3mm},nodes={scale=0.75, transform shape}},
        ]
            \node[align=center, opacity=1] {
                \addlegendimage{barzero}
                \addlegendentry{No instrumentation (Baseline)}
                \addlegendimage{barpga}
                \addlegendentry{w/ \pga}
            };
        \end{axis}
        \end{tikzpicture}
    \end{subfigure}
    \begin{tikzpicture}
    \begin{axis}[
        height=\subfigheight,
        width=1.0\linewidth,
        ybar,
        bar width=0.16,
        axis lines=left,
        ymajorgrids,
        ylabel=Exec. Time (s),
        ymin=0.0,
        ymax=1200,
        log origin=infty,
        enlarge x limits=0.15,
        xtick={data},
        xticklabels from table={\tableexpc}{nb},
        minor y tick num=1,
        nodes near coords,
        point meta=explicit symbolic,
        nodes near coords style={font=\scriptsize, align=left, rotate=0, anchor=south west},
        legend columns=2,
        legend style={at={(0.03,1.0)},anchor=north west},
        label style={font=\normallabelsize},
        every tick label/.append style={font=\normallabelsize},
    ]
        \addplot[barzero, error bars/.cd, y dir=both, y explicit] table [x=index,y=noop,y error=noopstdtwo] {\tableexpc};
        \addplot[barpga, error bars/.cd, y dir=both, y explicit] table [x=index,y=pgnoop,meta=overhead,y error=pgnoopstdtwo] {\tableexpc};
    \end{axis}
    \end{tikzpicture}
    \vspace{-6mm}
    \caption{\pga's access control overhead. Error bars show the two standard deviations of time measurements. Percentages refer to relative overheads over the baseline.}
    \label{fig:exp_c_overhead}
    \vspace{\undercaptionspace}
\end{figure}
}

\paragraph{Object databases}
Object-oriented database management systems (OODBMS) (e.g., \cite{intersysiris,actiannosql,db4o,DBLP:journals/cacm/LambLOW91objectstore,DBLP:conf/sigmod/OrensteinHMS92objectstoreqp,DBLP:journals/cacm/ButterworthOS91Gemstone,DBLP:conf/sigmod/CopelandM84Gemstonesmalltalk}) may seem like natural solutions for an object store.
However, these OODBMS'es mainly focus on relaxing the first normal form (1NF)~\cite{codd1970relational} to manage user-defined complex structures, rather than persisting general execution states with rapid mutations addressed by \system.
As a result, relying on an OODBMS may restrict users to a few object types, require manual work, and incur excessive storage overheads.


\fix[R2W3,R2D3]{
    \paragraph{Graph databases} Graph databases such as Neo4j~\cite{neo4j} and Dgraph~\cite{dgraph} primarily support storage of the current graph state, offering only coarse-grained versioning through transaction logs or application-level timestamping~\cite{angles2016foundations}. Object–graph layers such as Neomodel and Renesca for Neo4j~\cite{neomodel,renesca} and GQLAlchemy for Memgraph~\cite{gqlalchemy} provide higher-level APIs but do not change the underlying storage semantics, and thus inherit the same limitations in fine-grained delta tracking. Similarly, newer systems such as Kùzu Graph Store~\cite{kuzu} are designed for efficient graph queries but still operate at vertex/edge granularity.
    Temporal graph systems extend these designs by recording historical states using either model-based encodings (timestamped nodes/edges) or snapshot-based checkpoints~\cite{campos2016towards,cai2022temporal}, where their coarse-grained storage would incur significant overheads at the scale of fine-grained evolving object graphs. For example, AeonG periodically materializes anchor versions and stores deltas between anchors (a.k.a., snapshots)~\cite{zhao2021aeong}, while ChronoGraph adopts a key–time–value–based storage~\cite{byun2019chronograph}. These techniques require replay of full object changes. By contrast, \system proposes a \emph{delta identification} over evolving in-memory (i.e., out-of-storage) graphs: its podding algorithm pinpoints fine-grained changes across millions of objects, achieving a level of efficiency and precision that coarse-grained existing graph database techniques cannot provide.
}




\section{Conclusion}
\label{sec:conclude}

Through our novel idea of subgraph deltas, \system is an object store with an efficient delta identification for massive and evolving object graphs.
Unlike existing tools, \system accurately, comprehensively, and efficiently identifies deltas,  achieved through several new techniques
    such as podding, virtual memo space, change detector, synonym resolver, active variable filter, learned greedy algorithm for podding optimization, asynchronous saving, active variable locking, and allowlist-based static code checker.
Our empirical experiments show \system is faster and requires less storage space than existing solutions and alternative designs.

\section{Acknowledgments}

This work was supported in part by the National Science Foundation under Awards \#2440498 and \#2312561, and by the National Center for Supercomputing Applications.


\bibliographystyle{ACM-Reference-Format}
\bibliography{_refs}


\appendix
\section{Podding Optimization Analysis}

\subsection{Type-based Heuristic (\pfa)}
\label{sec:podding-optimization-type}

Type-based heuristic (\pfa) decides each podding action solely based on the type of the object in question. It categorizes object types into either one of the three actions to create \emph{type-action catalog}. To make the decision, TbH looks up the object's type in the type-action catalog and acts accordingly. This type-action catalog can be easily configurable by our users to leverage the domain knowledge on object types and mutation patterns.

In our study, we manually analyze computational notebooks and constructs the type-action catalog using the following observations. First, many application types such as NumPy's array, Pandas's data frame, and Matplotlib's figure form a coherent group of objects that would likely mutate together. Secondly, variable-sized immutable types such as string and byte array are stable but their parents are often not. Lastly, other compositional types such as list, dictionary, module, and callable often mutate independently from their parents. For these reasons, we assign application types and variable-sized immutable types into the split-final catagory, and assign other compositional types into the split-continue category. Otherwise, other types fall into bundle category.

\pfa is simple, fast, and effective over the two podding extremes in our experiments. However, it relies on the expertise in specific notebooks and manual efforts to derive the proper type-action catalog. Moreover, it does not work well with unforeseen or newly declared object types. Our next approach not only amends these limitations but also discovers better podding solutions.

\subsection{Proof of Prior \podgraph Connectedness}

\begin{lemma} \label{lemma:priorconnection-2}
    By cell execution locality (\cref{sec:background-datamodel}), if $u, u' \in \calU$ are connected in $\calG$, $u$ is connected to a $v \in \calV_C$ in the prior $\calG^{0}$.
\end{lemma}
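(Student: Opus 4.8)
The plan is to prove the contrapositive via a ``frozen subgraph'' argument rather than by tracing individual paths, which avoids the delicate bookkeeping of deciding which edges along a path are new. Throughout I write $\text{Exec}(\calG^0, C) = \calG$ with $\calG^0 = (\calU^0,\calE^0,\calV^0,\ell^0)$ the prior state, and I take $u' \in \calV_C$ (the instantiation used in \Cref{theorem:activeaccessed}, where $u'=u^{\text{acc}}$ is an accessed variable); this hypothesis is what lets the conclusion mention $\calV_C$ at all, since without it the claim is false for two objects sitting in a component untouched by the execution. First I would define $S_0 \subseteq \calU^0$ to be the set of objects connected in $\calG^0$ to some accessed variable $v \in \calV_C$, and let $F = \calU^0 \setminus S_0$ be its complement. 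Because $S_0$ is by construction a union of whole connected components of $\calG^0$, there are no edges of $\calE^0$ between $F$ and $S_0$.

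Next I would show that $F$ is entirely frozen across the execution. By the object form of code execution locality, every $u \in F$ is unchanged and persists into $\calG$. For edges I would use the contrapositive of the edge form of locality: if a dependency $e=(a,b)$ changes then \emph{both} $a$ and $b$ are connected to accessed variables. Applied to removals, this shows no existing edge incident to $F$ disappears; applied to creations (a new edge is itself a change), it shows no newly created edge can have an endpoint in $F$, since an $F$-endpoint is by definition connected to no accessed variable in $\calG^0$. Combining this with the absence of pre-existing $F$-to-$S_0$ edges, I conclude that in $\calG$ there is no edge between $F$ and the remaining objects (the objects of $S_0$ together with any freshly allocated ones); hence the connected components of $\calG$ lying inside $F$ coincide exactly with those of $\calG^0$ and are isolated from everything else.

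With this in hand the lemma follows quickly. Suppose $u$ is pre-existing, $u \in \calU^0$, and connected to $u' \in \calV_C$ in $\calG$. The accessed variable $u'$ lies in $S_0$, hence outside the isolated frozen region, so $u$'s component in $\calG$ contains an object not in $F$, which forces $u \notin F$. Thus $u \in S_0$, i.e., $u$ is connected to some $v \in \calV_C$ already in $\calG^0$, as claimed. A $u$ that is itself freshly created by the execution is a degenerate case, connected to $\calV_C$ by construction and handled separately; the substantive content concerns pre-existing objects.

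The main obstacle I anticipate is justifying the creation half of the frozen-region claim. The stated locality principle is phrased in terms of existing objects and dependencies persisting, so ruling out a \emph{new} edge that bridges $S_0$ into $F$ requires reading locality in its intended operational sense: to create a reference $(a,b)$ the executing code must already hold handles to both $a$ and $b$, so both must be reachable from---hence connected to---the accessed variables (or be objects the execution just allocated). I would make this reading precise as a small strengthening of the edge-locality statement and check that it is exactly the contrapositive ``a changed edge has both endpoints connected to accessed variables,'' after which the isolation of $F$, and therefore the lemma, is immediate.
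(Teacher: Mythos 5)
Your proof is correct, but it takes a genuinely different route from the paper's. The paper argues locally along a path: given $u$ and $u'$ connected in $\calG$ but not in $\calG^{0}$, it picks the \emph{first added edge} $(u^{1}, u^{2})$ on a connecting path, applies the contrapositive of edge locality to conclude that $u^{1}$ was connected to some $v \in \calV_C$ in $\calG^{0}$, and then chains: the path segment from $u$ to $u^{1}$ consists of pre-existing edges, so $u$ was connected to $u^{1}$, hence to $v$, in $\calG^{0}$ (the already-connected case is dispatched trivially with $u' = v$). Your frozen-subgraph argument replaces this path bookkeeping with a global invariant: the set $F$ of objects not connected to any accessed variable forms a union of components of $\calG^{0}$ that is preserved and isolated across the execution, so connectivity to $\calV_C$ in $\calG$ forces membership in $S_0$. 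What each buys: the paper's argument is shorter and more local; yours is more robust (no choice of path or first edge) and proves something strictly stronger---untouched components are exactly preserved---which would, for instance, also feed directly into \cref{lemma:connectednessinpod-2}. Two things you flag explicitly are to your credit and are in fact shared assumptions of the paper's own proof: (a) the hypothesis $u' \in \calV_C$, without which the lemma as literally stated is false (two objects in an untouched component are connected in $\calG$ but connected to no accessed variable), and which the paper's proof silently adopts by writing $u' = v$; and (b) the strengthened reading of edge locality covering edge \emph{creations}, which the paper also needs the moment it applies locality to an ``added edge.'' Your handling of freshly allocated $u$ as a separate degenerate case matches the paper's implicit restriction to $u \in \calU^{0}$, so no gap there either.
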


\begin{proof}
    This is trivial when $u$ and $u' = v$ were already connected in $\calG^{0}$. Otherwise, if they were not connected in $\calG^{0}$ but are connected in $\calG$, the cell execution (so far) has added some edges $\{(u^{i}, u^{i+1})\}_i$ between the two nodes. By cell execution locality (\cref{sec:background-datamodel}), focusing on the first added edge, it implies that both $u^{1}$ and $u^{2}$ was connected to an accessed variable $v$ in the prior $\calG^{0}$. Since the first added edge $(u^{1}, u^{2})$ is in a path between $u$ and $u'$, $u$ was connected to $u^{1}$; and by transitivity of connectedness, $u$ is also connected to $v$ via $u^{1}$.
\end{proof}

\begin{lemma} \label{lemma:connectednessinpod-2}
    By pod dependency graph properties (\cref{sec:podding-protocol}), if $u, v \in \calU$ are connected in $\calG$, $u_p \ni u$ and $v_p \ni v$ are also connected in $\calG_p$
\end{lemma}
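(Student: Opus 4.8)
The plan is to lift an arbitrary connecting path in the object graph $\calG$ to a walk in the pod dependency graph $\calG_p$, using the fact that podding partitions $\calU$. Since the pods in $\calU_p$ are pairwise disjoint and cover $\calU$ (\cref{sec:background-datamodel}), there is a well-defined assignment map $\pi : \calU \to \calU_p$ sending each object to the unique pod containing it; in particular $\pi(u) = u_p$ and $\pi(v) = v_p$. Interpreting ``connected'' in the undirected (reachability) sense consistent with its use for active variables, I would start from a path $u = w_0, w_1, \dots, w_k = v$ in $\calG$, where each consecutive pair $(w_i, w_{i+1})$ is an edge of $\calE$ in one orientation or the other, and then examine the induced sequence of pods $\pi(w_0), \pi(w_1), \dots, \pi(w_k)$.

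The key step is to show that each edge of the path contributes a valid step of an undirected walk in $\calG_p$. Fix a consecutive pair $(w_i, w_{i+1})$. If $\pi(w_i) = \pi(w_{i+1})$, the two objects lie in the same pod and the corresponding step is trivial (the walk simply stays at that pod). Otherwise $\pi(w_i) \neq \pi(w_{i+1})$, and since $w_i \in \pi(w_i)$, $w_{i+1} \in \pi(w_{i+1})$, and $(w_i, w_{i+1}) \in \calE$ (or its reverse), the defining condition of $\calE_p$ in \cref{sec:background-datamodel}---that an edge $e_p = (u_p, v_p)$ exists whenever some object in $u_p$ has an edge to some object in $v_p$---yields a directed pod edge between $\pi(w_i)$ and $\pi(w_{i+1})$ in one of the two orientations. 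Either orientation establishes an undirected connection between these two pods.

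Concatenating these steps, the image sequence $\pi(w_0), \dots, \pi(w_k)$ is an undirected walk in $\calG_p$ from $\pi(u)=u_p$ to $\pi(v)=v_p$, possibly containing repeated consecutive pods wherever adjacent path objects shared a pod. Collapsing consecutive duplicates turns this walk into a genuine undirected path, which witnesses the connectedness of $u_p$ and $v_p$ in $\calG_p$, completing the argument. The main obstacle I anticipate is the directed-versus-undirected mismatch: the definition of $\calE_p$ records directed pod edges, so I must argue explicitly that an undirected step in $\calG$ (an edge in either orientation) always induces an undirected step in $\calG_p$, and simultaneously handle the degenerate case where an entire path segment collapses into a single pod so that the resulting object is still a well-formed walk rather than a broken sequence. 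Once that bookkeeping is handled, the partition property does the rest of the work.
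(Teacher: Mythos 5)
Your proposal is correct and follows essentially the same route as the paper's proof: both lift a connecting path in $\calG$ to $\calG_p$ by noting that each consecutive pair of objects either lies in the same pod (a trivial step) or in distinct pods joined by an edge of $\calE_p$ by its definition. The paper states this in one line, while you additionally spell out the partition map, the directed-versus-undirected bookkeeping, and the collapsing of repeated pods---useful elaboration, but not a different argument.
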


\begin{proof}
    Each consecutive pair of nodes $(u^{i}, u^{i+1})$ in the path $(u, u^1, u^2, \dots, v)$ connecting $u$ and $v$ in $\calG$ either belong to the same pod $u^{i}, u^{i+1} \in u_p$ or different pods with a connected edge by definition of $\calG_p$.
\end{proof}

\subsection{Proof of Hardness}

\pglga's podding problem is NP-hard, via equivalences to a \emph{tree partitioning} and subsequently a \emph{supermodular minimization}, a dual to the NP-hard \textit{submodular maximization} problem~\cite{doi:10.1137/090779346maxsubmodnphard}.

The formulated problem is a graph partitioning problem over the graph structure of the \datamodel $(\calU, \calE)$ (without naming) where a partition is a pod $u_p$ and the cost for each partition is $c_{\text{pod}} + s(u_p) \lambda(u_p)$. More specifically, it is a tree partitioning problem; a podding decides which pod $u_p$ an object $u$ belongs to when it first sees the object $u$. For each vertex $u \in \calU$, there exists exactly one incoming edge $e_f(u) = (u', u) \in \calE, \exists u' \in \calU$ from which the podding visits $u$ first. Respectively, at $u$ from $e_f(u)$, bundle and split actions correspond to keeping the edge and cutting the edge. Let $\calE_f = \{ e_f(u), \forall u \in \calU \}$ be the set of these first-time edges, \pglga's podding problem is a tree partitioning over the graph $G = (\calU, \calE_f)$

Furthermore, the problem is a supermodular minimization. Let $A \subseteq \calE_f$ be split edges. That is, podding splits the objects $u$ and $v$ if $(u, v) = e \in A$; otherwise, it bundles the objects. Denote $\calU_p(A)$ as the resulting pods from splitting edges in $A$. Therefore, we can define a cost function $f: 2^{\calE_f} \rightarrow \bbR^{+}$ such that $f(A) = \calL(\calU_p(A); \calG)$.

\begin{theorem}
    $f$ is supermodular. That is, $\forall A \subseteq B \subseteq \calE_f, e \in \calE_f$,
    \begin{equation} \label{eq:submodular}
        f(B\cup\{e\}) - f(B) \geq f(A\cup\{e\}) - f(A)
    \end{equation}
\end{theorem}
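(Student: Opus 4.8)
The plan is to reduce the supermodularity of $f$ to a pointwise comparison of the marginal cut-cost of a single edge across two nested partitions. First I would rewrite $f$ in terms of per-pod costs: writing $g(P) = c_{\text{pod}} + s(P)\,\lambda(P)$ for a pod $P$, we have $f(A) = \sum_{P \in \calU_p(A)} g(P)$, where $s(P) = \sum_{u \in P} s(u)$ and $\lambda(P) = \sum_{u \in P} \lambda(u)$ are additive over the objects in $P$. Because $G = (\calU, \calE_f)$ is a tree (each object has a unique first-time in-edge), cutting a single edge $e$ that is currently internal to some pod $P$ splits $P$ into exactly two connected pieces $P_1, P_2$ with $P = P_1 \sqcup P_2$.

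Next I would compute the marginal cost of that cut in closed form. Using additivity of $s$ and $\lambda$ and expanding $g(P_1)+g(P_2)-g(P)$, the pod-overhead and diagonal $s\lambda$ terms cancel, leaving
$$\Delta(e \mid P) \;=\; f(A \cup \{e\}) - f(A) \;=\; c_{\text{pod}} - s(P_1)\,\lambda(P_2) - s(P_2)\,\lambda(P_1),$$
valid whenever $e \notin A$ (the meaningful case for the marginal-gain form of supermodularity; if $e \in A$ both marginals are zero and the inequality is trivial).

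The heart of the argument is the nesting between the two partitions. Since $A \subseteq B$, removing the edges of $B$ removes at least those of $A$, so $\calU_p(B)$ refines $\calU_p(A)$: the pod $Q$ containing the still-internal edge $e$ in the $B$-partition satisfies $Q \subseteq P$. Cutting $e$ in $Q$ yields pieces $Q_1, Q_2$, and because the underlying tree structure is shared, the side of $e$ containing a given endpoint is consistent across the two partitions, giving $Q_1 \subseteq P_1$ and $Q_2 \subseteq P_2$. Non-negativity of object sizes and volatilities then forces $s(Q_1) \le s(P_1)$, $\lambda(Q_2) \le \lambda(P_2)$, and symmetrically, so $s(Q_1)\lambda(Q_2) + s(Q_2)\lambda(Q_1) \le s(P_1)\lambda(P_2) + s(P_2)\lambda(P_1)$. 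Substituting into the closed form yields $\Delta(e \mid Q) \ge \Delta(e \mid P)$, that is $f(B \cup \{e\}) - f(B) \ge f(A \cup \{e\}) - f(A)$, which is exactly \eqref{eq:submodular}.

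The main obstacle I anticipate is making the nesting step fully rigorous: I must argue that $e$ is still internal to a pod of $\calU_p(B)$ (otherwise $e \in B$ and that marginal is zero, handled separately), that the refinement yields $Q \subseteq P$, and above all that the two pieces line up as $Q_1 \subseteq P_1$, $Q_2 \subseteq P_2$ under a consistent labeling of the two sides of $e$. This relies on the acyclicity of $G = (\calU, \calE_f)$, which guarantees that cutting $e$ separates each containing pod into precisely two components indexed by the endpoints of $e$; I would state this tree property explicitly and use it to fix the labeling before invoking the monotonicity of $s$ and $\lambda$.
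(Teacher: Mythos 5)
Your proof is correct, but it takes a genuinely different route from the paper's. The paper proves the single-increment exchange inequality on a concrete three-pod configuration: it sets $B = A \cup \{f\}$, lets $f$ merge pods $u_{p_1}, u_{p_2}$ and $e$ merge $u_{p_2}, u_{p_3}$, expands all four costs, and reduces the inequality to $\alpha \geq \beta$ with $\alpha = s(u_{p_1})\lambda(u_{p_{23}}) + s(u_{p_{23}})\lambda(u_{p_1})$ and $\beta = s(u_{p_1})\lambda(u_{p_2}) + s(u_{p_2})\lambda(u_{p_1})$; extending to arbitrary $A \subseteq B$ is left implicit (chaining single-edge increments, with equality when the extra edge is not adjacent to $e$'s pods). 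You instead derive the closed-form marginal $\Delta(e \mid P) = c_{\text{pod}} - s(P_1)\lambda(P_2) - s(P_2)\lambda(P_1)$ and handle arbitrary $A \subseteq B$ in one step via the refinement property of tree partitions ($Q \subseteq P$, and, by acyclicity, $Q_1 \subseteq P_1$, $Q_2 \subseteq P_2$ under the endpoint labeling), then invoke monotonicity of $s$ and $\lambda$. This buys you a uniform argument with no induction over increments and no case analysis over where $f$ attaches. Notably, your cut-edge convention (adding an element of $A$ splits a pod) matches the paper's formal definition of $f$, whereas the paper's own computation silently works in the complementary convention where adding an edge merges pods; since supermodularity is invariant under complementing the ground-set variable, both are valid, but yours is the more faithful reading.

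One shared caveat: as literally stated the theorem quantifies over all $e \in \calE_f$, and for $e \in B \setminus A$ the left marginal is zero while the right marginal $c_{\text{pod}} - s(P_1)\lambda(P_2) - s(P_2)\lambda(P_1)$ can be strictly positive, so the inequality can fail there. You flag this case as ``handled separately,'' but the correct resolution is the standard convention $e \notin B$ (equivalently, the lattice form $f(X) + f(Y) \leq f(X \cup Y) + f(X \cap Y)$), which is also what the paper's proof implicitly assumes; your $e \in A$ observation (both marginals zero) is fine. With that restriction made explicit, your argument is complete.
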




We prove \cref{eq:submodular} as follows. Let an initial configuration of three arbitrary pods be $\calU_p = \{u_{p_1}, u_{p_2}, u_{p_3}\}$ defined by the subset of edges $A$. Then, let $B$ = $A \cup\{f\}$, where $f$ is an edge joining $u_{p_1}$ and $u_{p_2}$, and let $e$ be an edge joining $u_{p_2}$ and $u_{p_3}$. therefore:
\begin{itemize}
    \item $f(A) = \calL(\{u_{p_1}, u_{p_2}, u_{p_3}\}; \calG)$
    \item $f(A \cup\{e\}) = \calL(\{u_{p_1}, \underbrace{u_{p_2}\cup u_{p_3}}_{u_{p_{23}}}\}; \calG)$
    \item $f(B) = f(A \cup\{f\}) = \calL(\{\underbrace{u_{p_1}\cup u_{p_2}}_{u_{p_{12}}}, u_{p_3}\}; \calG)$
    \item $f(B\cup\{e\}) = f(A \cup\{e, f\}) = \calL(\{\underbrace{u_{p_1}\cup u_{p_2}\cup u_{p_3}}_{u_{p_{123}}}\}; \calG)$
\end{itemize}

We can now prove \cref{eq:submodular} by expanding each term on the left-hand side. First $f(B\cup\{e\})$:
\begin{equation}
\begin{split}
    f(B\cup\{e\}) &=\calL(\{u_{p_{123}}\}; \calG) \\
    &= \left(c_{\text{pod}} + s(u_{p_{123}})\lambda(u_{p_{123}})\right) \\
    &= \underbrace{c_{\text{pod}} + s(u_{p_{23}})\lambda(u_{p_{23}}) + s(u_{p_{1}})\lambda(u_{p_{1}})}_{\calL(\{u_{p_1}, u_{p_{23}}\}; \calG) - c_{\text{pod}}} \\
    &\qquad + \underbrace{s(u_{p_{1}})\lambda(u_{p_{23}}) + s(u_{p_{23}})\lambda(u_{p_{1}})}_{\alpha} \\
    f(B\cup\{e\}) &= f(A\cup\{e\}) + \alpha - c_{\text{pod}} \\
\end{split}
\end{equation}

Next is $f(B)$:
\begin{equation}
\begin{split}
    f(B) &=\calL(\{u_{p_{12}}, u_{p_3}\}; \calG) \\
    &= \left(2c_{\text{pod}} + s(u_{p_{12}})\lambda(u_{p_{12}}) + s(u_{p_{3}})\lambda(u_{p_{3}})\right) \\
    &= (\underbrace{c_{\text{pod}} + s(u_{p_{1}})\lambda(u_{p_{1}}) + s(u_{p_{2}})\lambda(u_{p_{2}}) + s(u_{p_{3}})\lambda(u_{p_{3}})}_{\calL(\calU_{p}; \calG) - c_{\text{pod}}} \\ 
    &\qquad + \underbrace{s(u_{p_{1}})\lambda(u_{p_{2}}) + s(u_{p_{2}})\lambda(u_{p_{1}})}_{\beta} \\
    f(B) &= f(A) + \beta - c_{\text{pod}}
\end{split}
\end{equation}

Putting everything together, the left-hand side is equal to:
\begin{equation}
\begin{split}
    f(B\cup\{e\}) - f(B) &= \left(f(A\cup\{e\}) + \alpha - c_{\text{pod}}\right) - \left(f(A) + \beta - c_{\text{pod}}\right) \\
    &= f(A\cup\{e\}) - f(A) + \alpha - \beta \\
    &\geq  f(A\cup\{e\}) - f(A)
\end{split}
\end{equation}
Where assuming non-negative object sizes and change probabilities, the inequality holds due to $\alpha \geq \beta$.

\subsection{Proof of Approximate Optimality}

The proof comprises three parts. We first start from simple-but-coarse bounds (\cref{sec:optimality-crude-bound}) and improve it iteratively (\cref{sec:optimality-iteration}) to arrive at a tighter bound (\cref{sec:optimality-combine}).

\subsubsection{Upper-lower-bound Optimality}
\label{sec:optimality-crude-bound}

Since all partitioning always has to pay at least $\sum_u s(u) \lambda(u)$ and each action increments the additional cost (not counting $s(u) \lambda(u)$) at most $c_{\text{pod}}$ (greedy decision), for $n$ objects, we can say that $\sum_u s(u) \lambda(u) \leq c_{\text{pod}} + \sum_u s(u) \lambda(u) \leq \calL^{*} \leq \calL^{\pglga} \leq \sum_u s(u) \lambda(u) + n c_{\text{pod}}$. So at worst,
\begin{equation}
    \calL^{\pglga} \leq \left( 1 + \frac{(n - 1) c_{\text{pod}}}{c_{\text{pod}} + \sum_u s(u) \lambda(u)} \right) \calL^{*}
\end{equation}

 Considering bundling all, We can also say that $\sum_u s(u) \lambda(u) \leq \calL^{*} \leq \calL^{\pglga} \leq c_{\text{pod}} + \sum_u s(u) \sum_{v} \lambda(v)$. So at worst,
\begin{equation}
    \calL^{\pglga} \leq \left( 1 + \frac{\sum_u s(u) \sum_{v \neq u} \lambda(v)}{c_{\text{pod}} + \sum_u s(u) \lambda(u)} \right) \calL^{*}
\end{equation}

Also, considering fraction podding with arbitrary objects in a pod, optimizing over equal split into $m$ pods. This is better (tighter bound) if object sizes and volatilities are more uniform. If they vary a lot, bound above is tighter.
\begin{equation} \label{eq:fracpodbound}
    2 \sqrt{\frac{\sum_u s(u) \sum_u \lambda(u)}{c_{\text{pod}}}} \leq \calL^{*}
\end{equation}

\subsubsection{Iterative Improvement}
\label{sec:optimality-iteration}

Let $f(X)$ be the podding cost of splitting objects in $X$, so that $f(X_n) = f(OPT_n) = \calL^{\pglga}$ and $f(OPT) = f(OPT_0) = \calL^{*}$. We want to minimize this supermodular $f$.

\begin{lemma} \label{lemma:lgaapprox-step}
    For all $i \in [n]$, $f(OPT_{i-1}) - f(OPT_i) \geq f(X_i) - f(X_{i-1})$.
\end{lemma}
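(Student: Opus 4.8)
The plan is to prove this per-step lemma by introducing the standard hybrid solutions, reducing the claim to a local comparison of marginal podding costs, and then discharging it by case analysis. Concretely, I would define the greedy prefix set $X_i$ as the split-edge set that uses \pglga's decisions on the already-processed objects $u_1,\dots,u_i$ and the all-split choice on the remainder $u_{i+1},\dots,u_n$, and the hybrid $OPT_i$ as the set that agrees with \pglga on $u_1,\dots,u_i$ but with the optimal partition $OPT$ on the remainder. These definitions reproduce the stated boundary conditions ($f(X_0)=\calL^{\text{split}}$, $f(OPT_0)=\calL^*$, and $f(X_n)=f(OPT_n)=\calL^{\pglga}$) and, crucially, give the nesting $OPT_i \subseteq X_i$ for every $i$, since the all-split suffix contains the optimal suffix as a subset. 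With these in hand, both sides reduce to a single toggled decision on $u_i$: $f(X_i)-f(X_{i-1})$ and $f(OPT_{i-1})-f(OPT_i)$ each compare splitting versus bundling $u_i$'s first-time edge, only in two different surrounding contexts.

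Next I would carry out the four-way case analysis on whether \pglga and $OPT$ split or bundle $u_i$. In each case the right-hand side $f(X_i)-f(X_{i-1})$ is either $0$ (when \pglga's choice coincides with the all-split suffix, i.e. a split) or $-g_i(\cdot)$ (when \pglga bundles), where $g_i(A) := f(A\cup\{e_f(u_i)\}) - f(A)$ is the marginal of cutting $u_i$'s first-time edge; by \cref{eq:delta-bundle,eq:delta-split} this marginal has the explicit form $c_{\text{pod}} - \bigl(s(u_p)\lambda(u_i) + s(u_i)\lambda(u_p)\bigr)$, and \pglga splits exactly when it is negative (\cref{algo:lga}). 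The left-hand side is likewise $0$ or $\pm g_i(\cdot)$ evaluated in the $OPT$-context. The two agreement cases are immediate, so the content of the lemma lies in the two disagreement cases (\pglga splits while $OPT$ bundles, and vice versa), each of which reduces to showing that the greedy sign of $g_i$ in the $X_{i-1}$-context controls its sign in the nested $OPT$-context.

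The main obstacle, and where I would spend the most care, is exactly those disagreement cases: transferring the greedy inequality on $g_i$ from the context in which \pglga actually committed its decision to the hybrid context. Here I plan to combine two facts. First, supermodularity of $f$ (\cref{eq:submodular}) makes the marginal $g_i$ non-decreasing in the surrounding split-set, so the nesting $OPT_i \subseteq X_i$ lets me sandwich the two marginals. Second---and this is the delicate structural point---because podding traverses the \datamodel in DFS order over the first-time-edge tree (\cref{sec:podding-protocol}), the pod $u_p$ whose size and volatility determine $g_i$ is fixed by ancestor/prefix decisions alone and is not affected by how the suffix (the descendants of $u_i$) is partitioned. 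I would argue this ancestor-locality renders the relevant marginal effectively invariant across the $X$- and $OPT$-contexts, collapsing the disagreement cases to the single greedy comparison \pglga already resolved. Establishing this invariance rigorously---reconciling \pglga's instantaneous marginal, which treats $u_i$ as atomic before its descendants are attached, with the true set-function marginal $g_i$---is the technical crux of the argument.

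Finally I would record the payoff to confirm the plan closes the loop: summing the per-step inequality over $i=1,\dots,n$ telescopes both sides to $\calL^* - \calL^{\pglga} \ge \calL^{\pglga} - \calL^{\text{split}}$, which rearranges into \cref{lemma:lgatwoapprox}, $\calL^{\pglga} \le \tfrac12(\calL^* + \calL^{\text{split}})$, exactly the bound the approximation result requires.
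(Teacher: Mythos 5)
Your reconstruction of the scaffolding matches the paper's (implicit) construction exactly: $X_i$ as \pglga's prefix decisions with the all-split default on the suffix, $OPT_i$ as the same prefix with $OPT$ on the suffix, the boundary values $f(X_0)=\calL^{\text{split}}$, $f(OPT_0)=\calL^{*}$, $f(X_n)=f(OPT_n)=\calL^{\pglga}$, the reduction of both sides to a single toggle of $e_f(u_i)$, and the telescoping into \cref{lemma:lgatwoapprox}. Since the paper never spells the hybrids out, reconstructing them correctly is real work, and your case skeleton (two trivial agreement cases, two substantive disagreement cases) is the same as the paper's.

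The gap is in how you propose to discharge the disagreement cases, and it is not how the paper argues. Your ``ancestor-locality'' claim---that the marginal of toggling $e_f(u_i)$ is invariant across the $X$- and $OPT$-contexts because the parent pod is fixed by prefix decisions---is false: in $OPT_i$ the suffix follows $OPT$, which can bundle descendants of $u_i$ into its component $D \supseteq \{u_i\}$ and bundle later-processed objects (e.g., later siblings) into the parent-side pod $P \supseteq u_p$, so the true marginal is $c_{\text{pod}} - \bigl(s(P)\lambda(D) + s(D)\lambda(P)\bigr)$ rather than $c_{\text{pod}} - \bigl(s(u_p)\lambda(u_i)+s(u_i)\lambda(u_p)\bigr)$; the two coincide only in $X_i$, where the all-split suffix freezes $P=u_p$ and $D=\{u_i\}$. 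Your fallback, supermodularity plus the nesting $OPT_i \subseteq X_i$, yields the inequality in the unhelpful direction in exactly the hard case: when \pglga bundles but $OPT$ splits, the lemma demands the \emph{lower} bound $g_i(OPT_i) \geq -g_i(X_i)$, whereas \cref{eq:submodular} with $OPT_i \subseteq X_i$ gives only the \emph{upper} bound $g_i(OPT_i) \leq g_i(X_i)$. The paper's mechanism is different: it uses encompassment monotonicity, $P \supseteq u_p$ and $D \supseteq \{u_i\}$, hence $s(P)\lambda(D)+s(D)\lambda(P) \geq s(u_p)\lambda(u_i)+s(u_i)\lambda(u_p)$, which immediately settles the disagreement case where \pglga splits while $OPT$ bundles (there $f(X_i)-f(X_{i-1})=0$ and $f(OPT_{i-1})-f(OPT_i)=s(P)\lambda(D)+s(D)\lambda(P)-c_{\text{pod}} \geq 0$ because the split condition gives $s(u_p)\lambda(u_i)+s(u_i)\lambda(u_p) \geq c_{\text{pod}}$). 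If you salvage your plan, it should be by replacing the invariance claim with this one-sided monotonicity. Note, however, that in the remaining case (\pglga bundles, $OPT$ splits) encompassment points \emph{against} the desired inequality, and even the paper's own write-up of that case requires the signs of $f(OPT_{i-1})-f(OPT_i)$ to be tracked with more care than it displays---so this is precisely the step your sketch, as written, cannot close, and where any rigorous completion must concentrate.
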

\begin{proof}
    Let $b_i = \calL_{\text{bundle}} - \Delta \calL_{\text{split}} = (s(u_p) \lambda(u) + s(u) \lambda(u_p)) - c_{\text{pod}}$.

    If $b_i < 0$ (bundle), there are two cases to be considered. (A) If $u_i \notin OPT$, then $f(X_i) - f(X_{i-1}) = b_i < 0$ and $OPT_i = OPT_{i-1}$, so $f(OPT_{i-1}) - f(OPT_i) = 0 > f(X_i) - f(X_{i-1})$. (B) If $u_i \in OPT$, then notice that $u^{OPT_i}_p$ (parent pod in $OPT_i$) strictly encompasses $u_p$ (parent pod in $X_i$), implying that the size $s$ and volatility $\lambda$ are larger, and also $u$ may already be in a pod in $OPT_i$, so

    \begin{align}
        f(OPT_{i-1}) - f(OPT_i)
            &\geq (s(u^{OPT_i}_p) \lambda(u) + s(u) \lambda(u^{OPT_i}_p)) - c_{\text{pod}} \\
            &\geq (s(u_p) \lambda(u) + s(u) \lambda(u_p)) - c_{\text{pod}} \\
            &= f(X_i) - f(X_{i-1}) \\
    \end{align}
    
    If $b_i \geq 0$ (split), $X_i = X_{i-1}$ and $OPT_i = OPT_{i-1}$, so $f(OPT_{i-1}) - f(OPT_i) = 0 \geq 0 = f(X_i) - f(X_{i-1})$.
\end{proof}

\subsubsection{Telescoping and Parameterization}
\label{sec:optimality-combine}

Now we can apply the iterative improvement on the initial approximation bound with a tangible parameterization to prove the main theorem.

\begin{theorem}
    Let $\calL^{\pglga}$ be \pglga's cost and $L^{*}$ be the optimal cost, $\calL^{\pglga} \leq \left( 1 + \alpha \right) \calL^{*}$ where $\alpha = \min \left\{ c_{\text{pod}} / (2 \gamma), \sqrt{c_{\text{pod}} / (16 \mu_s \mu_\lambda)} \right\}$ is defined by average sized volatility $\gamma = \frac{1}{|\calU|} \sum_u s(u) \lambda(u)$, average size $\mu_s = \frac{1}{n} \sum_u s(u)$, and average volatility $\mu_\lambda = \frac{1}{|\calU|} \sum_u \lambda(u)$.
\end{theorem}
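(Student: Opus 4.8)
The plan is to prove two separate multiplicative guarantees and then combine them, since $\alpha$ is itself a minimum of two quantities. Both guarantees start from the additive form of the approximation, \cref{lemma:lgatwoapprox}, which states $\calL^{\pglga} \leq \frac{1}{2}(\calL^{*} + \calL^{\text{split}})$. This lemma is obtained by telescoping the per-step inequality of \cref{lemma:lgaapprox-step} over all $i \in [n]$: the left side collapses to $\calL^{*} - \calL^{\pglga}$ (the hybrid sequence $OPT_i$ interpolates from $OPT_0 = OPT$ to $OPT_n = X_{\pglga}$), the right side collapses to $\calL^{\pglga} - \calL^{\text{split}}$ (LGA's running solution $X_i$ runs from the all-split configuration $X_0$ to $X_n = X_{\pglga}$), and rearranging the resulting $\calL^{*} - \calL^{\pglga} \geq \calL^{\pglga} - \calL^{\text{split}}$ yields the $\frac{1}{2}$-form. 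Subtracting $\calL^{*}$ then gives the central estimate $\calL^{\pglga} - \calL^{*} \leq \frac{1}{2}(\calL^{\text{split}} - \calL^{*})$. Since $\calL^{\text{split}} = |\calU|(c_{\text{pod}} + \gamma)$ while the universal lower bound $\calL^{*} \geq \sum_u s(u)\lambda(u) = |\calU|\gamma$ holds (each pod contributes at least its diagonal term $\sum_{u \in u_p} s(u)\lambda(u)$ because the cross terms in $s(u_p)\lambda(u_p)$ are nonnegative), this collapses to the clean additive bound $\calL^{\pglga} - \calL^{*} \leq \frac{1}{2}|\calU|c_{\text{pod}}$.

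For the first branch of $\alpha$, I would divide this additive estimate by the same universal lower bound $\calL^{*} \geq |\calU|\gamma$, obtaining $(\calL^{\pglga} - \calL^{*})/\calL^{*} \leq (|\calU|c_{\text{pod}}/2)/(|\calU|\gamma) = c_{\text{pod}}/(2\gamma)$, hence $\calL^{\pglga} \leq (1 + c_{\text{pod}}/(2\gamma))\calL^{*}$. This branch is unconditional and dominates when the average sized-volatility $\gamma$ is large relative to $c_{\text{pod}}$.

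For the second branch I would instead divide $\frac{1}{2}|\calU|c_{\text{pod}}$ by the sharper continuous-relaxation lower bound $\calL^{*} \geq 2\sqrt{c_{\text{pod}}\,(\sum_u s(u))(\sum_u \lambda(u))} = 2|\calU|\sqrt{c_{\text{pod}}\,\mu_s\mu_\lambda}$ behind the equal-split analysis of \cref{eq:fracpodbound}. A short simplification, $(|\calU|c_{\text{pod}}/2)/(2|\calU|\sqrt{c_{\text{pod}}\mu_s\mu_\lambda}) = \sqrt{c_{\text{pod}}}/(4\sqrt{\mu_s\mu_\lambda}) = \sqrt{c_{\text{pod}}/(16\mu_s\mu_\lambda)}$, yields $\calL^{\pglga} \leq (1 + \sqrt{c_{\text{pod}}/(16\mu_s\mu_\lambda)})\calL^{*}$. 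Because both inequalities hold simultaneously for the same run of the algorithm, taking the smaller of the two multipliers proves the stated bound with $\alpha = \min\{c_{\text{pod}}/(2\gamma),\ \sqrt{c_{\text{pod}}/(16\mu_s\mu_\lambda)}\}$.

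The main obstacle I anticipate is justifying the continuous-relaxation lower bound used in the second branch. Writing any pod partition as sizes $s_j$ and volatilities $\lambda_j$ with $\sum_j s_j = \sum_u s(u)$ and $\sum_j \lambda_j = \sum_u \lambda(u)$, the cost is $m\,c_{\text{pod}} + \sum_j s_j \lambda_j$, and AM--GM gives $m\,c_{\text{pod}} + \sum_j s_j\lambda_j \geq 2\sqrt{m\,c_{\text{pod}}\sum_j s_j\lambda_j}$. To reach $2\sqrt{c_{\text{pod}}(\sum_u s(u))(\sum_u \lambda(u))}$ I still need $m\sum_j s_j\lambda_j \geq (\sum_j s_j)(\sum_j \lambda_j)$, which is exactly Chebyshev's sum inequality and requires the per-pod sizes and volatilities to be similarly ordered; it holds with equality in the uniform case but can fail when size and volatility are strongly anti-correlated across pods. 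I would therefore state the second branch under this positive-correlation/uniformity regime, emphasize that the first branch remains valid unconditionally, and rely on the $\min$ so that the overall guarantee is never vacuous.
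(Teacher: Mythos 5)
Your proposal follows essentially the same architecture as the paper's proof: the per-step exchange inequality (\cref{lemma:lgaapprox-step}), telescoping over the hybrid sequence $OPT_0 = OPT, \dots, OPT_n = X_n$ against LGA's running solution $X_0 = \text{all-split}, \dots, X_n$ to obtain $\calL^{\pglga} \leq \frac{1}{2}(\calL^{*} + \calL^{\text{split}})$ (\cref{lemma:lgatwoapprox}), and then dividing the additive gap by two different lower bounds on $\calL^{*}$ to produce the two branches of $\alpha$. Two of your deviations are actually improvements. First, by using the looser unconditional estimates $\calL^{*} \geq |\calU|\gamma$ and $\calL^{\text{split}} - \calL^{*} \leq |\calU| c_{\text{pod}}$, you obtain the theorem's constants $c_{\text{pod}}/(2\gamma)$ and $\sqrt{c_{\text{pod}}/(16\mu_s\mu_\lambda)}$ directly for every $n$, whereas the paper derives sharper finite-$n$ bounds and then passes to $n \to \infty$; your route is cleaner and suffices since the finite-$n$ bounds are dominated by the asymptotic constants anyway. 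Second, you write the equal-split relaxation correctly as $\calL^{*} \geq 2\sqrt{c_{\text{pod}}\,(\sum_u s(u))(\sum_u \lambda(u))}$; the paper's \cref{eq:fracpodbound} displays $c_{\text{pod}}$ in the denominator, which is inconsistent with its own subsequent use, so you have silently fixed a typo. Your most valuable observation is the Chebyshev-ordering obstacle: the inequality $m\sum_j s_j\lambda_j \geq (\sum_j s_j)(\sum_j \lambda_j)$ genuinely fails for anti-correlated pod sizes and volatilities (e.g., one pod holding all size and another all volatility), so the second branch's lower bound is not unconditional --- a gap present in the paper's own proof, which only gestures at it informally (``tighter if uniform''). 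One correction to your closing remark, though: taking the minimum does \emph{not} rescue validity, since the min of a valid bound and an invalid one may select the invalid one; the honest fix is exactly what you propose earlier in the same paragraph, namely asserting the second branch only under the similarly-ordered hypothesis (with the first branch holding unconditionally), which yields a slightly qualified but rigorous version of the stated theorem.
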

\begin{proof}
    As a result of \Cref{lemma:lgaapprox-step}, we can bound \pglga's cost using the telescoping series of the steps.
    \begin{align}
        \sum_{i=1}^n f(OPT_{i-1}) - f(OPT_i)
            &\geq \sum_{i=1}^n f(X_i) - f(X_{i-1}) \\
        f(OPT_0) - f(OPT_n)
            &\geq f(X_n) - f(X_0) \\
        f(X_n)
            &\leq \frac{1}{2} \left[ f(OPT_0) + f(X_0) \right] \\
        \calL^{\pglga}
            &\leq \frac{1}{2} \left[ \calL^{*} + \calL^{\text{split}} \right]
    \end{align}

    So we can bound the upper-lower-bound better:
    \begin{equation}
        \calL^{\pglga} \leq \left( 1 + \frac{\frac{n - 1}{2} c_{\text{pod}}}{c_{\text{pod}} + \sum_u s(u) \lambda(u)} \right) \calL^{*}
    \end{equation}
    
    Define $\gamma = \frac{1}{n c_{\text{pod}}} \sum_u s(u) \lambda(u)$.
    \begin{equation}
        \calL^{\pglga} \leq \left( 1 + \frac{n - 1}{2 + 2 \gamma n} \right) \calL^{*}
    \end{equation}
    
    When $n$ is large,
    \begin{equation}
        \lim_{n \rightarrow \infty} \calL^{\pglga} \leq \left( 1 + \frac{1}{2 \gamma} \right) \calL^{*}
    \end{equation}
    
    Which is tight when $\gamma$ is large (that is, when the average sized volatility is large). This implies that podding is harder when objects are tiny and rarely change.
    
    Using \Cref{eq:fracpodbound}, we can bound differently
    \begin{equation}
        \calL^{\pglga} \leq \left( 1 + \frac{n-1}{4} \sqrt{\frac{c_{\text{pod}}}{\sum_u s(u) \sum_u \lambda(u)}} \right) \calL^{*}
    \end{equation}
    
    Define $\overline{s} = \frac{1}{n} \sum_u s(u)$ and $\overline{\lambda} = \frac{1}{n} \sum_u \lambda(u)$.
    \begin{equation}
        \calL^{\pglga} \leq \left( 1 + \frac{n-1}{4n} \sqrt{\frac{c_{\text{pod}}}{\overline{s} \overline{\lambda}}} \right) \calL^{*}
    \end{equation}
    
    When $n$ is large,
    \begin{equation}
        \calL^{\pglga} \leq \left( 1 + \sqrt{\frac{c_{\text{pod}}}{16\overline{s} \overline{\lambda}}} \right) \calL^{*}
    \end{equation}
    
    Which is tight when average size and volatility are large, compared to $c_{\text{pod}}$.
\end{proof}
\section{Additional Experiments}

Here we collect additional experiments that help us understand \system even further.

\subsection{Thesaurus Trade-off}
\label{exp2:thesaurus}

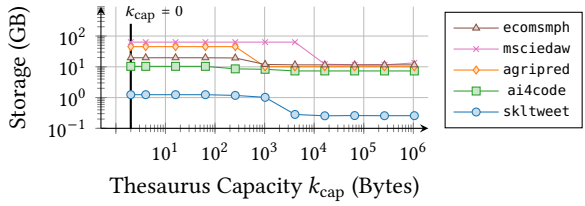
\begin{figure}[t]
    \centering
    \begin{tikzpicture}
    \begin{axis}[
        height=32mm,
        width=0.7\linewidth,
        axis lines=left,
        xlabel=Thesaurus Capacity $k_{\text{cap}}$ (Bytes),
        ylabel=Storage (GB),
        ymajorgrids,
        ymode=log,
        ymin=0.1,
        ymax=900,
        log origin=infty,
        ytick={0.01, 0.1, 1, 10, 100, 1000},
        xmode=log,
        xmin=0.0005,
        xmax=2000,
        xmajorgrids,
        xtick={0.001, 0.01, 0.1, 1, 10, 100, 1000},
        xticklabels={, $10^{1}$, $10^{2}$, $10^{3}$, $10^{4}$, $10^{5}$, $10^{6}$},
        every tick label/.append style={font=\small},
        nodes near coords,
        point meta=explicit symbolic,
        nodes near coords style={align=left, rotate=90, anchor=west},
        legend columns=1,
        legend style={
            at={(1.05,0.0)},anchor=south west,align=left,
            nodes={scale=0.75, transform shape}
        },
        reverse legend,
    ]
        \draw[thick, color=black] 
        (axis cs:0.002, 0.01) -- (axis cs:0.002, 250);
        \node[align=center, anchor=south] (zerolabel) at (0.006, 150) {\scriptsize $k_{\text{cap}} = 0$};


        \addplot[draw=cAmain, mark=*, mark options={draw=cAmain, fill=cAlight, scale=0.75}] table [x=cache_size,y=skltweet_storage_gb] {\tableexpiicache};
        \addlegendentry{\skltweet}
        \addplot[draw=cCmain, mark=square*, mark options={draw=cCmain, fill=cClight, scale=0.75}] table [x=cache_size,y=ai4code_storage_gb] {\tableexpiicache};
        \addlegendentry{\aicode}
        \addplot[draw=cBmain, mark=diamond*, mark options={draw=cBmain, fill=cBlight, scale=0.75}] table [x=cache_size,y=agripred_storage_gb] {\tableexpiicache};
        \addlegendentry{\agripred}
        \addplot[draw=cGmain, mark=x, mark options={draw=cGmain, fill=cGlight, scale=0.75}] table [x=cache_size,y=msciedaw_storage_gb] {\tableexpiicache};
        \addlegendentry{\msciedaw}
        \addplot[draw=cFmain, mark=triangle*, mark options={draw=cFmain, fill=cFlight, scale=0.75}] table [x=cache_size,y=ecomsmph_storage_gb] {\tableexpiicache};
        \addlegendentry{\ecomsmph}
    \end{axis}
    \end{tikzpicture}
    \vspace{-3mm}
    \caption{Higher thesaurus capacity reduces storage usage by detecting more synonymous pods. In these notebooks, storage usage converges with $k_{\text{cap}}$ around 100~KB.}
    \label{fig:exp_ii_cache_storage}
    \vspace{\undercaptionspace}
\end{figure}

\paragraph{Thesaurus capacity vs. recall trade-off} A pod thesaurus with a larger capacity naturally recalls more synonymous pods and so reduces the storage size (\Cref{fig:exp_ii_cache_storage}). Most notebooks in our benchmark only require 4~KB while \msciedaw requires 20~KB for the recall to converge. Note that without any capacity ($k_{\text{cap}} = 0$), \system reverts to \pgcachez. As the capacity increases, the chance of hash collision increases as well; however, even at 1~GB capacity, \system would still be able to handle over 1 billion pods with a collision probability less than $1.8 \times 10^{-22}$.

\subsection{Unblocking Improves Productivity}
\label{exp4:async}

\pgfplotsset{
    ecdfplot/.style={
        height=32mm,
        width=\linewidth,
        axis lines=left,
        xlabel=Saving Time (s),
        ylabel=Percentile,
        ymajorgrids,
        minor y tick num=1,
        ymin=0,
        ymax=110,
        ytick={0, 25, 50, 75, 100},
        xmajorgrids,
    },
}
\tikzset{
    ecdfdill/.style={%
        draw=cAmain,
        mark=x,
        mark options={
            draw=cAmain,
            fill=cAlight,
            fill opacity=0.0,
            scale=1.0
        }
    },
    >=LaTeX
}
\tikzset{
    ecdfpglnoavlstatic/.style={%
        draw=cZmain,
        mark=x,
        mark options={
            draw=cZmain,
            fill=cZlight,
            fill opacity=0.0,
            scale=1.0
        }
    },
    >=LaTeX
}
\tikzset{
    ecdfpgnoavlstatic/.style={%
        draw=cEmain,
        mark=square*,
        mark options={
            draw=cEmain,
            fill=cElight,
            fill opacity=0.0,
            scale=1.0
        }
    },
    >=LaTeX
}
\tikzset{
    ecdfpgnostatic/.style={%
        draw=cAmain,
        mark=diamond*,
        mark options={
            draw=cAmain,
            fill=cAlight,
            fill opacity=0.0,
            scale=1.0
        }
    },
    >=LaTeX
}
\tikzset{
    ecdfpgnoavl/.style={%
        draw=cBmain,
        mark=triangle*,
        mark options={
            draw=cBmain,
            fill=cBlight,
            fill opacity=0.0,
            scale=1.0
        }
    },
    >=LaTeX
}
\tikzset{
    ecdfpga/.style={%
        thick,
        draw=pgacolor,
        mark=*,
        mark options={
            draw=pgacolor,
            fill=pgacolorlight,
            fill opacity=0.0,
            scale=1.0
        }
    },
    >=LaTeX
}

\def\subfigwidthwide{0.49\linewidth}
\def\subfigwidth{0.45\linewidth}

\begin{figure}[t]
    \centering
    \begin{subfigure}[b]{\linewidth} \centering
        \begin{tikzpicture}
        \begin{axis}[
            ticks=none,
            height=20mm,
            width=\linewidth,
            hide axis,
            xmin=10,  
            xmax=50,
            ymin=0,
            ymax=0.4,
            legend columns=-1,
            legend style={at={(0.0,0.0)},anchor=south,align=center,/tikz/every even column/.append style={column sep=3mm},nodes={scale=0.75, transform shape}},
        ]
            \node[align=center, opacity=1] {
                \addlegendimage{ecdfpglnoavlstatic}
                \addlegendentry{\pglnoavlstatic}
                \addlegendimage{ecdfpgnostatic}
                \addlegendentry{\pgnostatic}
                \addlegendimage{ecdfpgnoavl}
                \addlegendentry{\pgnoavl}
                \addlegendimage{ecdfpga}
                \addlegendentry{\pga}
            };
        \end{axis}
        \end{tikzpicture}
    \end{subfigure}
    \begin{subfigure}[b]{\subfigwidth}
        \begin{tikzpicture}
        \begin{axis}[
            ecdfplot, 
            xmin=0,
            xmajorgrids,
            minor x tick num=1,
            ymajorgrids,
            minor y tick num=1,
            ytick={0, 20, 40, 60, 80, 100},
            legend columns=2,
            legend style={
                at={(0.95,0.05)},anchor=south east,align=center,
                legend style={nodes={scale=0.75, transform shape}},
            },
            label style={font=\normallabelsize},
            every tick label/.append style={font=\normallabelsize},
        ]
            \addplot[ecdfpglnoavlstatic] table [x=save_s,y=ecdf] {\tableexpivecdfaicodepglnoavlstatic};
            \addplot[ecdfpgnostatic] table [x=save_s,y=ecdf] {\tableexpivecdfaicodepgnostatic};
            \addplot[ecdfpgnoavl] table [x=save_s,y=ecdf] {\tableexpivecdfaicodepgnoavl};
            \addplot[ecdfpga] table [x=save_s,y=ecdf] {\tableexpivecdfaicodepga};

        \end{axis}
        \end{tikzpicture}
        \vspace{-2mm}
        \caption{\aicode}
        \label{fig:exp_iv_ecdf_compare_aicode}
    \end{subfigure}
    \begin{subfigure}[b]{\subfigwidth}
        \begin{tikzpicture}
        \begin{axis}[
            ecdfplot, 
            xmin=0,
            xmax=10,
            xmajorgrids,
            minor x tick num=1,
            xtick={0, 2, 4, 6, 8, 10, 12},
            ymajorgrids,
            minor y tick num=1,
            ytick={0, 20, 40, 60, 80, 100},
            legend columns=2,
            legend style={
                at={(0.95,0.05)},anchor=south east,align=center,
                legend style={nodes={scale=0.75, transform shape}},
            },
            label style={font=\normallabelsize},
            every tick label/.append style={font=\normallabelsize},
        ]
            \addplot[ecdfpglnoavlstatic] table [x=save_s,y=ecdf] {\tableexpivecdfmsciedawpglnoavlstatic};
            \addplot[ecdfpgnostatic] table [x=save_s,y=ecdf] {\tableexpivecdfmsciedawpgnostatic};
            \addplot[ecdfpgnoavl] table [x=save_s,y=ecdf] {\tableexpivecdfmsciedawpgnoavl};
            \addplot[ecdfpga] table [x=save_s,y=ecdf] {\tableexpivecdfmsciedawpga};


        \end{axis}
        \end{tikzpicture}
        \vspace{-2mm}
        \caption{\msciedaw}
        \label{fig:exp_iv_ecdf_compare_msciedaw}
    \end{subfigure}
    \vspace{-3mm}
    \caption{With parallel saving, active variable locking (AVL) and allowlist-based static code checker (ASCC) unblock user's cell executions, improving over synchronous saving.}
    \label{fig:exp_iv_ecdf_compare}
    \vspace{\undercaptionspace}
\end{figure}
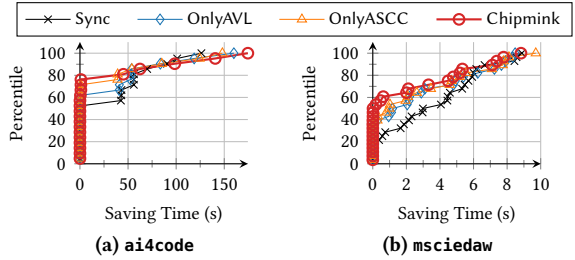

This experiment presents the perceived delay in detail by analyzing the \system's distribution of saving times against baselines'.
To study the effectiveness of asynchronous saving techniques, we implement baselines \pgnostatic and \pgnoavl as \system with and without active variable locking (AVL) and allowlist-based static code checker (ASCC) described in \cref{sec:async}. 
In addition for comparison, we include \pglnoavlstatic, \system without asynchronous saving as well.




\paragraph{Asynchronous saving locks and static executions} Despite the quick succession of cell executions in our benchmark, asynchronous saving via active variable locking (AVL) and allowlist-based static code checker (ASCC) improves \system's distribution of saving times over synchronous saving \pglnoavlstatic (\Cref{fig:exp_iv_ecdf_compare}).

\subsection{Access Control Overhead}
\label{sec:exp-overhead}

\def\subfigheight{30mm}

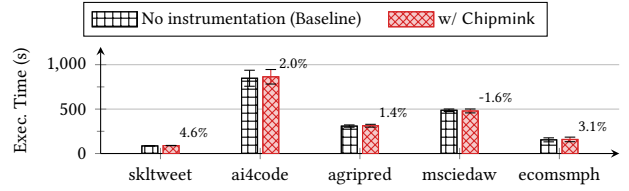
\begin{figure}[t]
    \centering
    \begin{subfigure}[b]{\linewidth} \centering
        \begin{tikzpicture}
        \begin{axis}[
            ybar,
            ticks=none,
            height=20mm,
            width=\linewidth,
            hide axis,
            xmin=10,  
            xmax=50,
            ymin=0,
            ymax=0.4,
            area legend,
            legend columns=-1,
            legend style={at={(0.0,0.0)},anchor=south,align=center,/tikz/every even column/.append style={column sep=3mm},nodes={scale=0.75, transform shape}},
        ]
            \node[align=center, opacity=1] {
                \addlegendimage{barzero}
                \addlegendentry{No instrumentation (Baseline)}
                \addlegendimage{barpga}
                \addlegendentry{w/ \pga}
            };
        \end{axis}
        \end{tikzpicture}
    \end{subfigure}
    \begin{tikzpicture}
    \begin{axis}[
        height=\subfigheight,
        width=1.0\linewidth,
        ybar,
        bar width=0.16,
        axis lines=left,
        ymajorgrids,
        ylabel=Exec. Time (s),
        ymin=0.0,
        ymax=1200,
        log origin=infty,
        enlarge x limits=0.15,
        xtick={data},
        xticklabels from table={\tableexpc}{nb},
        minor y tick num=1,
        nodes near coords,
        point meta=explicit symbolic,
        nodes near coords style={font=\scriptsize, align=left, rotate=0, anchor=south west},
        legend columns=2,
        legend style={at={(0.03,1.0)},anchor=north west},
        label style={font=\normallabelsize},
        every tick label/.append style={font=\normallabelsize},
    ]
        \addplot[barzero, error bars/.cd, y dir=both, y explicit] table [x=index,y=noop,y error=noopstdtwo] {\tableexpc};
        \addplot[barpga, error bars/.cd, y dir=both, y explicit] table [x=index,y=pgnoop,meta=overhead,y error=pgnoopstdtwo] {\tableexpc};
    \end{axis}
    \end{tikzpicture}
    \vspace{-6mm}
    \caption{\pga's access control overhead. Error bars show the two standard deviations of time measurements. Percentages refer to relative overheads over the baseline.}
    \label{fig:exp_c_overhead}
    \vspace{\undercaptionspace}
\end{figure}

Finally, we verify that \system's instrumentation (i.e., active variable tracker and locks) does not incur significant overhead, even if no saving is done.
Given a notebook, we execute all cells and measure the execution times with and without \system. The execution time differences signify how much execution delay the user would perceive if they install \system but decide not to save any state.
According to this benchmark, the instrumentation overhead is so small that there was no statistically significant runtime difference in four of five notebooks, i.e., $p \geq 0.05$ on Welch’s t-test. Only for \skltweet, we found a significant 4.6\% overhead with $p = 1.7 \times 10^{-10}$ (\Cref{fig:exp_c_overhead}).

\subsection{Validity Threats}
\label{sec:exp-validity}

We identify potential threats to the validity of the experiments below along with mitigation strategies and implications.

\paragraph{Size and time limitations} Because our testbed is limited to 800~GB of free disk space and we limit the total experiment runtime to the order of a few weeks, our experiment cannot observe the behaviors beyond these limits.
Therefore, we restrict the storage usage to 768~GB (reported as is, e.g., \ecomsmph on \shev) and the runtime to 1 day (reported as ``Failed'', e.g., \aicode on \pnv).

\paragraph{Workload simulation} While our experiments test all systems on real notebooks, we simulate the workload by executing cells back-to-back (i.e., akin to ``Run All'') and loading states back-to-back as well, which may not represent all notebook workflows. In reality, users may perform other tasks in between cell executions such as inspecting the outputs, editing the notebook, and/or thinking. A more realistic workload may improve system performance across all SUTs.
Additionally, variable loading may occur sparsely, reducing cache efficiency. To minimize caching effects, we remove the correlation between loads by shuffling the target IDs and clearing in-memory data with garbage collection before each load.

\paragraph{Hardware and software dependency} Absolute measurements depend on this specific hardware and software configuration, including but not limited to, CPU, memory, disk, operating system, and Python library versions. To help reliably reproduce the key observations on other machines, we control these environment variables by containerizing the experiment within Docker.

\end{document}